\pdfoutput=1

\documentclass[english,11pt]{article}
\input{macros}

\newcommand{\FF}{\mathbb{F}}

\newcommand{\Sb}{\mathbf{S}}
\newcommand{\Ecr}{\mathscr{E}}
\newcommand{\Bcr}{\mathscr{B}}
\newcommand{\Ccr}{\mathscr{C}}
\newcommand{\Pcr}{\mathscr{P}}
\newcommand{\Ocr}{\mathscr{O}}
\newcommand{\Fcr}{\mathscr{F}}
\usepackage[cjk]{kotex}
\usepackage{etoc}
\usepackage{authblk}
\etocdepthtag.toc{mtchapter}
\etocsettagdepth{mtchapter}{subsection}
\etocsettagdepth{mtappendix}{none}
\setlist[enumerate,1]{label=\arabic*., leftmargin=2em, labelsep=0.5em}

\begin{document}

\title{\Large Estimating Continuous Treatment Effects with Two-Stage Kernel Ridge Regression}

\author[1]{Seok-Jin Kim}
\author[2]{Kaizheng Wang}
\affil[1]{Department of IEOR, Columbia University}
\affil[2]{Department of IEOR and Data Science Institute, Columbia University}

\date{This version: \today}

\maketitle

\begin{abstract}
We study the problem of estimating the effect function for a continuous treatment, which maps each treatment value to a population-averaged outcome. A central challenge in this setting is confounding: treatment assignment often depends on covariates, creating selection bias that makes direct regression of the response on treatment unreliable. To address this issue, we propose a two-stage kernel ridge regression method. In the first stage, we learn a model for the response as a function of both treatment and covariates; in the second stage, we use this model to construct pseudo-outcomes that correct for distribution shift, and then fit a second model to estimate the treatment effect. Although the response varies with both treatment and covariates, the induced effect function obtained by averaging over covariates is typically much simpler, and our estimator adapts to this structure. Our optimal learning bounds are achieved without estimating the conditional treatment density, thereby bypassing a major bottleneck in existing methods. Furthermore, we introduce a fully data-driven model selection procedure that achieves provable adaptivity to both the unknown degree of overlap and the spectral decay of the underlying kernel.
\end{abstract}

\section{Introduction}

Estimating the causal effects of continuous treatments---such as drug dosages, training intensity, or varying policy levels---is a fundamental challenge in causal inference \citep{ishak2015simulation,kennedy2017non,colangelo2026double,zhang2025doubly}. The primary estimand of interest is the Treatment Effect Function (TEF), which maps each treatment value to the population-averaged outcome. A central difficulty in observational settings is confounding (or distribution shift): treatment assignment often depends on complex, high-dimensional covariates. Consequently, direct regression of the outcome on the treatment yields biased estimates of the TEF. Because the TEF depends only on the treatment, a central goal is to derive learning bounds that adapt to its typically lower complexity rather than to that of the nuisance component.

State-of-the-art nonparametric approaches typically address this via two-stage ``debiasing'' procedures, which estimate nuisance components---most notably the conditional outcome mean and the  treatment density, or generalized propensity score---to construct pseudo-outcomes \citep{kennedy2017non,colangelo2026double,bonvini2022fast}. 
A key bottleneck is the conditional treatment density: existing guarantees typically require estimating it, together with the outcome nuisance, in stronger norms such as \(L^4\) or \(L^\infty\), rather than only \(L^2\).
Providing such convergence is especially challenging when covariates are high-dimensional, discrete, or sparse. 
Moreover, estimating the conditional treatment density is often challenging and highly sensitive to weak overlap and model misspecification. Consequently, continued efforts have been made to avoid this difficulty \cite{tubbicke2022entropy,HGC24,fong2018covariate}.

In this work, we propose a kernel-based framework that overcomes these limitations. By leveraging the flexibility of Reproducing Kernel Hilbert Spaces (RKHS) \citep{scholkopf2002learning}, our method accommodates rich nonlinear interactions while naturally capturing the smoothness structures inherent in causal problems.  We establish convergence rates that adapt to the intrinsic complexity of the TEF. Unlike prior debiasing methods, our approach avoids the instability associated with explicit conditional density estimation in high dimensions.
Furthermore, this adaptivity requires only that the nuisance outcome model be correctly specified, with no explicit convergence-rate assumptions on nuisance estimation.
These features are especially valuable when the covariate distribution is poorly behaved, when conditional-density estimation is difficult, and when the problem lies in a weak-overlap regime. Below are our main contributions:

\begin{itemize}
    \item \textbf{Methodology:} We propose a two-stage kernel ridge regression (KRR) estimator. The first stage learns the conditional outcome surface, and the second stage regresses the estimated pseudo-outcomes on the treatment variable. To ensure practical applicability, we develop a fully data-driven model selection procedure that eliminates the need for manual tuning of regularizers.

    \item \textbf{Theory:} We establish minimax-optimal error bounds for our estimator. Our learning bounds adapt to the complexity of the TEF while remaining decoupled from nuisance complexity, so the leading rate is not driven by the nuisance component. We also prove adaptation to unknown structural parameters, specifically the degree of treatment overlap and the spectral decay of the kernel.
\end{itemize}

\paragraph{Related Work.}

Early studies on continuous treatment effects focused primarily on parametric models \citep{robins2000marginal, laan2003unified, diaz2013targeted,imbens2000role}, while recent advances have shifted toward flexible nonparametric estimation \citep{kennedy2017non, semenova2021debiased,bonvini2022fast,colangelo2026double,takatsu2025debiased}. The dominant paradigm in this literature combines orthogonalization (debiasing) with Hölder function classes. 
However, these methods rely on estimating the conditional treatment density and typically require convergence guarantees for the nuisance estimators in norms stronger than \(L^2\), such as \(L^4\) or \(L^\infty\), which become increasingly difficult to obtain as the covariate dimension grows.

RKHS-based methods offer a powerful alternative for structural adaptation in causal inference \citep{nie2021quasi,mou2023kernel,singh2024kernel,hirshberg2021augmented,kim2025transfer}. \citet{mou2023kernel} study kernel methods for policy evaluation in contextual bandits, but their setting is essentially confounding-free rather than observational. Most relevant to our work, \citet{singh2024kernel} analyze continuous treatment effect estimators using tensor-product kernels under source conditions on the response function. Their guarantees are formulated via assumptions on the joint tensor-product space, and it remains unclear whether those rates adapt to the spectrum of the treatment effect space alone. In contrast, we consider general kernels and dispense with the source condition requirement, thereby offering broader applicability.

Our upper bound is in fact faster than the standard double machine learning (DML) rate, which helps explain why conditional-density estimation can be avoided in our framework. There is a line of work that likewise avoids explicit conditional treatment-density estimation by achieving faster rates than standard DML, including \citet{bonvini2022fast} and \citet{HGC24}. These results, however, are restricted to Sobolev- or H\"older-type function classes, require strong assumptions on the nuisance structure, or rely on computationally expensive procedures. We return to a more detailed comparison with these works after presenting our main results.

Finally, the concept of two-stage regression has appeared in semi-supervised learning with auxiliary covariates \citep{liu2023augmented,XWa24}. However, these works address fundamentally different problems and do not analyze TEF estimation or the adaptation phenomena that arise in continuous treatment settings.

\paragraph{Notation.}
The constants $c_1, c_2, C_1, C_2, \dots$ may differ from line to line. We use the symbol $[n]$ as a shorthand for $\{ 1, 2, \dots, n \}$. For nonnegative sequences $\{ a_n \}_{n=1}^{\infty}$ and $\{ b_n \}_{n=1}^{\infty}$, we write $a_n \lesssim b_n$ or $a_n = \cO(b_n)$ if there exists a positive constant $C$ such that $a_n \leq C b_n$ for all $n$. We use $\widetilde{\cO}$ to hide polylogarithmic factors. In addition, we write $a_n \asymp b_n$ if $a_n \lesssim b_n$ and $b_n \lesssim a_n$. For a bounded linear operator $\bA$, we use $\| \bA \|_{\mathrm{op}}$ to refer to its operator norm. For any $\bu$ and $\bv$ in a Hilbert space $\HH$, their inner and outer products are denoted by $\langle \bu, \bv \rangle_{\HH}$ and $\bu \otimes \bv$, respectively. We use $\cN(\bmu, \bSigma)$ to represent the Gaussian distribution with mean $\bmu$ and covariance matrix $\bSigma$, and $\cU (S)$ for the uniform distribution on a set \(S\).

\section{Problem Setup}\label{section:problem-setup}

\subsection{Formulation and Objective}

Let \(\cX\) denote the covariate space and \(\cA\) the domain of continuous treatments (e.g., drug dosage). Let \(X \in \cX\) and \(A \in \cA\) denote generic random covariates and treatment, respectively, and write \(\cP_{X,A}\) for the joint law of \((X,A)\), with marginal \(\cP_X\). We set \(\cZ := \cX \times \cA\).
Under the potential outcomes framework \citep{robins2000marginal}, a generic unit is endowed with the collection \(\{Y(a)\}_{a \in \cA}\), where \(Y(a)\) is the potential outcome under treatment level \(a\). The observed outcome is \(Y := Y(A)\).
We observe an i.i.d.\ dataset \(\cD = \{(x_i, a_i, y_i)\}_{i=1}^n\) drawn from the law of \((X, A, Y)\), and write \(z_i := (x_i, a_i)\).

We define the conditional mean response by
\[
f^\star(x, a) := \EE[Y \mid X=x, A=a].
\]
Our target estimand is the \emph{Treatment Effect Function (TEF)},
\begin{equation*}
    h^\star(a) := \EE [Y(a)]
\end{equation*}
which maps each treatment value to the population-averaged outcome.
Under the identification conditions, which we define formally in Section~\ref{subsection: assumptions}, we have
\begin{align*}
     h^\star(a) = \EE_{X \sim \cP_X}[f^\star(X,a)]
\end{align*}
for every \(a \in \cA\).

A fundamental challenge in this setting is confounding (or selection bias). In observational data, treatment assignment typically depends on covariates; consequently, the conditional distribution of covariates given treatment, denoted by $\cP_{X|A=a}$, varies with $a$ and generally differs from the marginal distribution $\cP_X$.
Consider the function obtained by directly regressing the observed outcome \(Y\) on the treatment \(A\):
\begin{equation*}
g(a) = \EE [Y \mid A=a] = \EE_{x \sim \cP_{X|A=a}} [f^\star(x, a)].
\end{equation*}
Because the integration measure $\cP_{X|A=a}$ changes with $a$, the function $g(a)$ conflates the true treatment effect with the effect of the shifting covariate distribution. As a result, $g(a)$ generally differs from $h^\star(a)$, and recovering $h^\star$ necessitates adjusting for this distribution shift.

Our objective is to construct an estimator $\hat{h}$ that minimizes the Mean Integrated Squared Error (MISE) with respect to a reference distribution $\cP_{\operatorname{ref}}$ on $\cA$:
\begin{equation}\label{eq:mise}
    \mathcal{E}(\hat{h}) := \int_{\mathcal{A}} \left( \hat{h}(a) - h^\star(a) \right)^2 \mathrm{d}\cP_{\operatorname{ref}}(a).
\end{equation}
The reference distribution $\cP_{\operatorname{ref}}$ is user-specified and reflects the region of interest for evaluation.
In nonparametric function estimation, MISE is a standard performance metric; see \citet{tsybakov2009nonparametric}. 
It captures the overall quality of the estimated function.
The same criterion is also widely used in continuous-treatment settings: \citet{schwab2020learning} use an integrated-error criterion for dose-response estimation, and \citet{kennedy2017non} evaluate the estimated curve using integrated error in their experiments.
\(L^2\)-type criteria are also standard in binary-treatment settings; see \citet{nie2021quasi} and \citet{foster2023orthogonal}.

\subsection{RKHS Framework}

We assume that both the conditional response function $f^\star$ and the treatment effect function $h^\star$ reside in RKHSs, denoted by $\mathcal{F}$ and $\mathcal{H}$, defined on domains $\mathcal{Z}$ and $\mathcal{A}$, respectively.
Let $K_{\mathcal{F}}$ and $K_{\mathcal{H}}$ be their associated positive semidefinite kernels.
By standard theory \citep{aronszajn1950theory}, there exist Hilbert spaces \(\FF\) and \(\HH\), along with feature maps $\psi: \cZ \to \FF$ and $\phi: \cA \to \HH$, satisfying the reproducing properties: \(\langle \psi(z), \psi(z')\rangle_{\FF} = K_\cF(z,z')\) and \(\langle \phi(a), \phi(a')\rangle_{\HH} = K_\cH(a,a')\).
Accordingly, the hypothesis spaces are defined as:
\begin{align*}
    \mathcal{F} &= \{ f_\theta(\cdot) = \langle \psi(\cdot), \theta \rangle_{\FF} \mid \theta \in \FF \}, \\
    \mathcal{H} &= \{ h_\eta(\cdot) = \langle \phi(\cdot), \eta \rangle_{\HH} \mid \eta \in \HH \}.
\end{align*}
Here, \(\mathcal{F}\) and \(\mathcal{H}\) denote RKHSs of functions, whereas \(\FF\) and \(\HH\) denote the coefficient Hilbert spaces associated with the feature maps \(\psi\) and \(\phi\).

A central motivation for our framework is the observation that $h^\star$, being a marginal average of $f^\star$ over the covariate distribution $\cP_X$, typically exhibits lower structural complexity than the nuisance component $f^\star$ since it is a function of lower dimension.
Our method leverages this distinction through flexible modeling. To illustrate, consider the case where $\cZ \subset \RR^d$ and $\cA \subset \RR^p$ are bounded regular domains.

\begin{itemize}
    \item \textit{Sobolev smoothness.} The standard Sobolev space $H^\beta(\cZ)$ (with $\beta > d / 2$) and the mixed Sobolev space $H^\beta_{\mathrm{mix}}(\cZ)$ (with $\beta >1/2$) are known to admit RKHS structures \citep{doumeche2025fast,zhang2023optimality,dick2007construction,kuhn2015approximation,suzuki2019adaptivity}.
   \footnote{Compared with standard Sobolev spaces, mixed Sobolev spaces require only \(\beta>1/2\), which is substantially weaker than \(\beta>d/2\).}
If \(f^\star \in H^\beta(\cZ)\) or \(f^\star \in H^\beta_{\mathrm{mix}}(\cZ)\), then the target function \(h^\star\) inherits the same degree of \(\beta\)-smoothness on \(\cA\); that is, \(h^\star \in H^\beta(\cA)\) or \(h^\star \in H^\beta_{\mathrm{mix}}(\cA)\), respectively.
    Since the treatment dimension $p$ is typically much smaller than the dimension $d$ of the joint covariate-treatment space $\cZ$ ($p \ll d$), estimating $h^\star$ directly in $\mathcal{H}$ can yield statistically more efficient rates than estimating $f^\star$ in $\mathcal{F}$.

    \item \textit{High-dimensional covariates.} 
    Our framework allows for modeling $\mathcal{F}$ using kernels designed for high-dimensional data, such as Neural Tangent Kernels (NTK) \citep{ghorbani2021linearized,bietti2021deep} or inner product kernels \citep{mei2022generalization,liang2020just,mei2021learning}, to capture complex nuisance interactions.
    Simultaneously, one can model the simpler dose-response curve in $\mathcal{H}$ using a standard smooth kernel (e.g., Matérn), thereby decoupling the complexity of the nuisance from the target.
\end{itemize}

\subsection{Assumptions}\label{subsection: assumptions}

We posit the following standard assumptions to facilitate our theoretical analysis.

\begin{assumption}[Identification]\label{ass:id}
    (i) Consistency: \(Y = Y(A)\) almost surely. (ii) No unmeasured confounding: \(\{Y(a)\}_{a \in \cA} \perp A \mid X\).
\end{assumption}

\begin{assumption}[Sub-Gaussian Noise]\label{ass:noise}
    The noise \(\varepsilon := Y - f^\star(X, A)\) is \(\sigma\)-sub-Gaussian given \((X, A)\), that is, \(\EE[e^{t \varepsilon} \mid X, A] \le e^{t^2 \sigma^2 / 2}\) for all \(t \in \RR\).
    We assume $\sigma$ is bounded by an absolute constant.
\end{assumption}

\begin{assumption}[Boundedness]\label{ass:bounded}
    The kernels are uniformly bounded by a constant $\xi$: $\sup_{z \in \cZ} K_\cF(z, z) \le \xi$ and $\sup_{a \in \cA} K_\cH(a, a) \le \xi$. Additionally, we assume $\|h^\star\|_\cH$ is bounded by a constant.
\end{assumption}

Assumption~\ref{ass:id} is standard in the causal inference literature \citep{bonvini2022fast,kennedy2017non,kunzel2019metalearners,curth2021nonparametric}.
Under Assumption~\ref{ass:id}, we have \(f^\star(x, a) = \EE[Y(a) \mid X=x]\), and therefore \(h^\star(a) = \EE_{x \sim \cP_X}[f^\star(x, a)]\).
Assumption~\ref{ass:noise} is also standard. For example, \citet{bonvini2022fast,kennedy2017non,nie2021quasi} assume bounded outcomes, which imply sub-Gaussian noise and are therefore stronger than our assumption.
The boundedness of the kernels in Assumption~\ref{ass:bounded} is common in the analysis of kernel methods \citep{wainwright2019high,singh2024kernel}. Notably, unlike many existing results in nonparametric regression, we \emph{do not} assume that the norm of the full conditional response, $\|f^\star\|_\cF$, is bounded by a constant. Instead, we allow it to grow with $n$, and our error bounds explicitly characterize the dependence on this quantity. This relaxes the requirement that the complex nuisance function $f^\star$ must lie strictly within a fixed ball of the RKHS.

\section{Methodology}\label{section: methodology}

This section introduces the two-stage KRR estimator and the associated data-driven model selection procedure. Section~\ref{subsection: estimation algorithm} presents the estimation algorithm, and Section~\ref{subsection: model selection algorithm} describes how we select the second-stage regularizer.

\subsection{Estimation Procedure}\label{subsection: estimation algorithm}

We propose a two-stage KRR framework designed to decouple the estimation of the complex conditional response surface $f^\star$ from the smoothing of the simpler treatment effect function $h^\star$.

The procedure consists of three steps: (1) learning the full conditional mean $f^\star$ via KRR; (2) constructing ``pseudo-outcomes'' by empirically marginalizing the estimated nuisance function over the covariate distribution; and (3) regressing these pseudo-outcomes on treatment values to recover $h^\star$. A summary is provided in Algorithm~\ref{algorithm: main}.

\paragraph{Step 1: Nuisance Estimation.}
We first estimate the conditional mean function $f^\star: \cZ \to \RR$ using KRR on the joint space $\cZ = \cX \times \cA$.
Recall that $\cD = \{(x_i, a_i, y_i)\}_{i=1}^n$ denotes the observed data.
We solve the following optimization problem over the hypothesis space $\cF$:
\begin{equation}\label{equation: nuisance KRR}
	\hat{f} := \argmin_{f \in \cF} \bigg\{ \frac{1}{n}\sum_{i=1}^n (y_i - f(x_i, a_i))^2 + \lambda_0\|f\|_\cF^2 \bigg\}.
\end{equation}
The solution admits the usual kernel-trick representation in terms of the kernel similarities $\{ K_{\cF} (z_i, z_j) \}_{1 \leq i, j \leq n}$. We use a small \textit{nuisance regularizer} $\lambda_0 \asymp n^{-1} \log n$ because the first stage is intended primarily to control bias and recover the conditional surface accurately; the final bias-variance trade-off is handled in the second stage.

\paragraph{Step 2: Empirical Marginalization.}
To recover the target $h^\star(a) = \EE_{X \sim \cP_X}[f^\star(X, a)]$, we construct a synthetic dataset. We sample $n$ treatment values $\{a'_j\}_{j=1}^n$ i.i.d. from a user-specified sampling distribution $\cP_{\operatorname{samp}}$ (e.g., the uniform distribution on $\cA$). For each $a'_j$, we compute the \emph{pseudo-outcome} $m_j$ by averaging the fitted nuisance function $\hat{f}$ over the empirical distribution of covariates:
\begin{equation}\label{equation: pseudo-outcome}
	m_j := \frac{1}{n} \sum_{i=1}^n \hat{f}(x_i, a'_j).
\end{equation}
Here, $m_j$ serves as a noisy proxy for $h^\star(a'_j)$. The sampling distribution $\cP_{\operatorname{samp}}$ allows us to probe the treatment effect function across the entire domain $\cA$, independent of the observational treatment distribution $\cP_A$.

\paragraph{Step 3: Pseudo-Outcome Regression.}
Finally, we treat the constructed set $\cD' = \{(a'_j, m_j)\}_{j=1}^n$ as our dataset for the target function. We perform a second KRR in the simpler space $\cH$ to smooth the pseudo-outcomes:
\begin{equation}\label{equation: second KRR}
	\hat{h}_\lambda := \argmin_{h \in \cH} \bigg\{\frac{1}{n}\sum_{j=1}^n (m_j - h(a'_j))^2 + \lambda \|h\|_\cH^2 \bigg\}.
\end{equation}
The \textit{main regularizer} $\lambda$ controls the final smoothness of the TEF. 
We propose a data-driven procedure to select the optimal $\lambda$ in Section~\ref{subsection: model selection algorithm}.

\begin{algorithm}[h]
	\caption{TEF Estimation via Two-Stage KRR}
	\label{algorithm: main}
	\begin{algorithmic}[1]
		\REQUIRE Dataset $\cD = \{(x_i, a_i, y_i)\}_{i=1}^n$, sampling distribution $\cP_{\operatorname{samp}}$, nuisance regularizer $\lambda_0$, main regularizer \(\lambda\).
		\STATE \textbf{Stage 1:} Compute $\hat{f}$ by solving \cref{equation: nuisance KRR}.
		\STATE \textbf{Stage 2:} Sample query points $a'_1, \dots, a'_n \overset{\text{i.i.d.}}{\sim} \cP_{\operatorname{samp}}$.
		\STATE Compute pseudo-outcomes: $m_j \leftarrow \frac{1}{n} \sum_{i=1}^n \hat{f}(x_i, a'_j)$ for $j=1, \dots, n$.
		\STATE \textbf{Stage 3:} Compute $\hat{h}_\lambda$ by solving \cref{equation: second KRR} on $\{(a'_j, m_j)\}_{j=1}^n$.
        
		\textbf{Output} Estimator $\hat{h}_\lambda$.
	\end{algorithmic}
\end{algorithm}

\subsection{Model Selection Procedure}\label{subsection: model selection algorithm}

While the nuisance regularizer $\lambda_0$ in Algorithm~\ref{algorithm: main} follows a concrete theoretical guideline ($\lambda_0 \asymp n^{-1} \log n$), the main regularizer $\lambda$ requires careful tuning.
Standard cross-validation is not directly applicable in this setting because the target labels \(h^\star(a)\) are never observed directly. Instead, we only observe noisy outcomes \(y_i\) that are confounded by covariates.
To address this challenge, we introduce a fully data-driven model selection procedure based on \textit{proxy validation}.
The core idea is to construct a nearly unbiased (albeit noisy) ``proxy'' for the ground truth via data splitting, allowing us to estimate the generalization error of the candidate models.
Crucially, this procedure provably adapts to unknown structural parameters without requiring manual tuning.

We randomly split the dataset $\cD$ into a training set $\cD_{\text{train}}$ and a validation set $\cD_{\text{val}}$, each of size $n/2$.
\begin{enumerate}
    \item \textbf{Candidate Training ($\cD_{\text{train}}$):} On the training set, we run Algorithm~\ref{algorithm: main} for a grid of main regularizers $\Lambda$, generating a set of candidate estimators $\mathcal{M} = \{ \hat{h}_\lambda \mid \lambda \in \Lambda \}$.
    \item \textbf{Proxy Construction ($\cD_{\text{val}}$):} On the validation set, we run Algorithm~\ref{algorithm: main} using a fixed, small main regularizer $\tilde{\lambda} \asymp n^{-1} \log n$. Let $\tilde{h}$ denote this proxy estimator.
    \item \textbf{Selection:} We select the candidate $\hat{h}_\lambda \in \mathcal{M}$ that is closest to the proxy $\tilde{h}$ in terms of the $L^2(\cP_{\operatorname{ref}})$ distance. In practice, we approximate this distance by Monte Carlo integration, sampling query points from the reference distribution $\cP_{\operatorname{ref}}$.
\end{enumerate}

The formal procedure is detailed in Algorithm~\ref{algorithm: model selection}.

\begin{algorithm}[h]
	\caption{Data-Driven Model Selection for TEF}
	\label{algorithm: model selection}
	\begin{algorithmic}[1]
		\REQUIRE Dataset $\cD$, reference distribution $\cP_{\operatorname{ref}}$, candidate grid $\Lambda$, nuisance regularizer $\lambda_0$, number of Monte Carlo samples $M$.
		\STATE Randomly split $\cD$ into $\cD_{\text{train}}$ and $\cD_{\text{val}}$ (sizes $n/2$).
		\STATE \textbf{Train Candidates:} Run Algorithm~\ref{algorithm: main} on $\cD_{\text{train}}$ with each main regularizer $\lambda \in \Lambda$ to obtain $\{\hat{h}_\lambda\}_{\lambda \in \Lambda}$.
        \STATE \textbf{Train Proxy:} Run Algorithm~\ref{algorithm: main} on $\cD_{\text{val}}$ with main regularizer $\tilde{\lambda} \asymp n^{-1} \log n$ to obtain $\tilde{h}$.
		\STATE \textbf{Evaluate:} Sample test points $\tilde{a}_1, \dots, \tilde{a}_M \overset{\text{i.i.d.}}{\sim} \cP_{\operatorname{ref}}$.
		\STATE Select $\hat{\lambda}$ by minimizing the distance to the proxy:
		\begin{align*}
			\hat{\lambda} := \argmin_{\lambda \in \Lambda} \sum_{k=1}^{M} \left( \hat{h}_\lambda(\tilde{a}_k) - \tilde{h}(\tilde{a}_k) \right)^2.
		\end{align*}
        \textbf{Output} Selected estimator $\hat{h}_{\hat{\lambda}}$.
	\end{algorithmic}
\end{algorithm}

\section{Theoretical Analysis}\label{section: theory}

This section establishes theoretical guarantees for \Cref{algorithm: main,algorithm: model selection}. We first introduce a generalized notion of overlap tailored to our setting. We then derive upper bounds on the MISE that adapt to the spectral decay of the target kernel \(K_\cH\). Finally, we prove minimax lower bounds showing that our estimator is optimal with respect to both the sample size \(n\) and the overlap parameter \(\gamma\).

\subsection{Upper Bound Analysis under Relative Overlap}

Standard causal inference frameworks typically assume ``positivity'' (or ``overlap''), requiring the conditional treatment density to be uniformly bounded away from zero. We introduce a more flexible condition, \emph{Relative Overlap}, which quantifies overlap specifically with respect to the user-specified reference distribution $\cP_{\operatorname{ref}}$.

\begin{definition}[Relative Overlap]\label{definition; relative overlap}
    Let $\cP_{A|X=x}$ denote the conditional distribution of treatment given covariates $x$. We say the data distribution satisfies \emph{relative overlap of degree $\gamma \in (0, 1]$} with respect to a reference measure $\cP_{\operatorname{ref}}$ if the following bound holds for all $x \in \cX$:
    \begin{align*}
        \frac{\mathrm{d} \cP_{\operatorname{ref}}}{\mathrm{d} \cP_{A|X=x}} \le \frac{1}{\gamma} \quad \text{almost everywhere on } \cA.
    \end{align*}
\end{definition}

Intuitively, $\gamma$ represents the minimum density of the observed treatments relative to the target distribution. A smaller $\gamma$ implies weaker overlap. Based on this, we define the \emph{effective sample size} as:
\[
    n_{\operatorname{eff}} := \gamma n.
\]
This quantity reflects the effective number of samples available for learning the target function $h^\star$ after accounting for the distribution shift.

Under this overlap condition, we now characterize the complexity of the target RKHS $\cH$ via the spectral decay of its kernel integral operator in order to state our convergence rates.
Define the second moment operator associated with kernel $K_{\cH}$ and reference distribution \(\cP_{\operatorname{ref}}\) as 
\begin{align*}
    \bSigma_{\operatorname{ref}} := \EE_{a \sim \cP_{\operatorname{ref}}}[\phi(a)\otimes \phi(a)].
\end{align*}
The following definition summarizes the spectral decay regimes considered in our analysis.
\begin{definition}[Spectral Decay]\label{def:spectral}
    Let $\rho_1 \ge \rho_2 \ge \dots$ denote the eigenvalues of $\bSigma_{\operatorname{ref}}$. We consider three commonly studied decay regimes:
    \begin{enumerate}[label=(\alph*), leftmargin=*]
        \item \textbf{Polynomial:} $\rho_j \lesssim j^{-2\ell}$ for some $\ell > 1/2$ (e.g., Sobolev spaces).
        \item \textbf{Exponential:} $\rho_j \lesssim \exp(-c j)$ for some $c > 0$ (e.g., Gaussian kernel).
        \item \textbf{Finite Rank:} $\rho_j = 0$ for all $j > D$ (e.g., Linear kernels).
    \end{enumerate}
    We further define the \emph{effective dimension} at scale $\lambda$ as
    \begin{align*}
        \Gamma(\lambda) := \Tr [ (\bSigma_{\operatorname{ref}} + \lambda \Ib)^{-1} \bSigma_{\operatorname{ref}} ]
    = \sum_{j=1}^\infty \frac{\rho_j}{\rho_j + \lambda}.
    \end{align*}
\end{definition}

Our main result establishes a finite-sample bound for the proposed estimator.

\begin{theorem}[MISE Upper Bound]\label{theorem; MISE}
    Suppose we run Algorithm~\ref{algorithm: main} with nuisance regularizer $\lambda_0 \asymp n^{-1}\log n$, main regularizer \(\lambda \gtrsim n^{-1}\log n\), and sampling distribution $\cP_{\operatorname{samp}} = \cP_{\operatorname{ref}}$.
    Under Assumptions~\labelcref{ass:id,ass:noise,ass:bounded} and Relative Overlap of degree $\gamma$, with probability at least $1 - n^{-10}$, the estimator $\hat{h}_\lambda$ satisfies:
    \begin{align}\label{eq:main_bound}
        \cE(\hat{h}_\lambda) \lesssim \underbrace{\lambda \|h^\star\|_\cH^2}_{\text{Bias}} + \underbrace{\frac{\Gamma(\lambda)}{\gamma n}}_{\text{Variance}} + \underbrace{\frac{\|f^\star\|_\cF^2}{\gamma n}}_{\text{Nuisance Error}}.
    \end{align}
    Consequently, optimizing $\lambda$ yields the following convergence rates in terms of the effective sample size $n_{\operatorname{eff}} = \gamma n$:
    \begin{itemize}
        \item \textbf{Case (a) (Polynomial):} $\lambda \asymp n_{\operatorname{eff}}^{-\frac{2\ell}{1+2\ell}} \implies  \cE(\hat{h}_\lambda) \lesssim n_{\operatorname{eff}}^{-\frac{2\ell}{1+2\ell}} + \frac{\|f^\star\|_\cF^2}{n_{\operatorname{eff}}}$.
        \item \textbf{Case (b) (Exponential):} $\lambda \asymp \frac{1}{n_{\operatorname{eff}}} \implies \cE(\hat{h}_\lambda) \lesssim \frac{1}{n_{\operatorname{eff}}} + \frac{\|f^\star\|_\cF^2}{n_{\operatorname{eff}}}$.
        \item \textbf{Case (c) (Finite Rank):} $\lambda \asymp \frac{D}{n_{\operatorname{eff}}} \implies \cE(\hat{h}_\lambda) \lesssim \frac{D}{n_{\operatorname{eff}}} + \frac{\|f^\star\|_\cF^2}{n_{\operatorname{eff}}}$.
    \end{itemize}
    The notation \(\lesssim\) hides constants and logarithmic factors.
\end{theorem}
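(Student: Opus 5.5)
The plan is to decompose the second-stage estimator around an oracle that regresses the true $h^\star$ values at the query points, and to control the first-stage error through a linear-functional argument. Write $\hat\bSigma' := \frac1n\sum_j \phi(a'_j)\otimes\phi(a'_j)$. The second-stage solution is
\[
\hat\eta_\lambda=(\hat\bSigma'+\lambda\Ib)^{-1}\frac1n\sum_j \phi(a'_j)m_j .
\]
Split $m_j = h^\star(a'_j) + \delta(a'_j)$, where
\[
\delta(a) := \frac1n\sum_i \hat f(x_i,a) - \EE_X f^\star(X,a).
\]
This gives $\hat h_\lambda = \bar h_\lambda + \hat h^{\delta}_\lambda$. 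The first term $\bar h_\lambda$ is noiseless KRR on $(a'_j,h^\star(a'_j))$. The second term $\hat h^\delta_\lambda$ is KRR applied to labels $\delta(a'_j)$. Because $\cP_{\operatorname{samp}}=\cP_{\operatorname{ref}}$, standard concentration gives $\hat\bSigma'+\lambda\Ib \succeq \tfrac12(\bSigma_{\operatorname{ref}}+\lambda\Ib)$ with probability $1-n^{-11}$ whenever $\lambda\gtrsim n^{-1}\log n$. On this event, the noiseless part satisfies $\|\bar h_\lambda-h^\star\|_{L^2(\cP_{\operatorname{ref}})}^2\lesssim \lambda\|h^\star\|_\cH^2$, which is the bias term.

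\emph{Step 1: decompose $\delta$.} Write $\delta = \delta_1+\delta_2$, with
\[
\delta_1(a)=\frac1n\sum_i (\hat f-f^\star)(x_i,a), \qquad \delta_2(a)=\frac1n\sum_i f^\star(x_i,a)-\EE_X f^\star(X,a).
\]
The component $\delta_2$ is an i.i.d.\ average of centered $\HH$-valued quantities. After the second stage it contributes a variance-type term of order $\Gamma(\lambda)/n + \|f^\star\|_\cF^2/n$ through a Bernstein argument in the norm $\|(\bSigma_{\operatorname{ref}}+\lambda\Ib)^{-1/2}\cdot\|$. For $\delta_1$ I use the closed form $\hat\theta-\theta^\star = -\lambda_0(\hat\bS+\lambda_0\Ib)^{-1}\theta^\star + (\hat\bS+\lambda_0\Ib)^{-1}\frac1n\sum_i\varepsilon_i\psi(z_i)$, where $\hat\bS$ is the empirical second-moment operator on $\FF$.

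\emph{Step 2: transfer to the target via relative overlap.} Define the linear map $T:\FF\to\HH$,
\[
T\theta = \EE_{a\sim\cP_{\operatorname{ref}}}\Big[\phi(a)\,\frac1n\sum_i\langle\psi(x_i,a),\theta\rangle\Big].
\]
The second-stage error caused by $\delta_1$ is essentially $(\bSigma_{\operatorname{ref}}+\lambda\Ib)^{-1}T(\hat\theta-\theta^\star)$, measured in $L^2(\cP_{\operatorname{ref}})$. By Cauchy–Schwarz this is bounded by $\|(\hat\theta-\theta^\star)\|$ in the norm $\theta\mapsto \frac1n\sum_i\EE_{\cP_{\operatorname{ref}}}\langle\psi(x_i,a),\theta\rangle^2$. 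Relative overlap gives $\EE_{a\sim\cP_{\operatorname{ref}}}g(x,a)^2\le\gamma^{-1}\EE[g(x,A)^2\mid X=x]$, so this norm is at most $\gamma^{-1}\langle\theta,\bS\theta\rangle$ up to empirical-versus-population covariate sums. The key point is to avoid paying the full first-stage $L^2$ rate, which depends on the complexity of $\cF$. Instead I exploit that only a projected functional of $\hat\theta-\theta^\star$ enters, namely its image under $T$ and then the $\cH$-smoother.

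\emph{Step 3: the noise term, the main obstacle.} For the noise part, I write the induced error as $\frac1n\sum_i\varepsilon_i w_i$ with $w_i = \bB\,(\hat\bS+\lambda_0\Ib)^{-1}\psi(z_i)$, where $\bB$ is the composed second-stage operator. Conditional on the design, this is a sub-Gaussian vector in $\HH$. Its squared norm concentrates around $\sigma^2 n^{-2}\sum_i\|w_i\|^2$, which is $\sigma^2 n^{-1}\Tr(\bB(\hat\bS+\lambda_0\Ib)^{-1}\hat\bS(\hat\bS+\lambda_0\Ib)^{-1}\bB^*)$. I then need to show that, after overlap, $\bB^*\bB \preceq \gamma^{-1}\cdot(\text{operator of trace }\lesssim\Gamma(\lambda))\otimes$ an $\FF$-component dominated by $\hat\bS$. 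This would yield $\Gamma(\lambda)/(\gamma n)$ rather than the effective dimension of $\cF$. Intuitively, the first stage is nearly unregularized, so $(\hat\bS+\lambda_0)^{-1}\hat\bS\approx$ a projection, and the trace collapses onto the rank-$\Gamma(\lambda)$ target directions. Making this rigorous, in particular the comparison of empirical and population $\hat\bS$ with $\lambda_0\asymp n^{-1}\log n$, is where I expect the difficulty, and I would first attempt it with a Hanson–Wright bound conditioned on $\{x_i\}$.

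\emph{Step 4: regularization term and conclusion.} The first-stage regularization bias $\lambda_0(\hat\bS+\lambda_0\Ib)^{-1}\theta^\star$ has $\hat\bS$-seminorm squared at most $\lambda_0\|\theta^\star\|^2$. After the overlap transfer this yields $\|f^\star\|_\cF^2\log n/(\gamma n)$, the nuisance error term. Combining the three terms with a union bound gives \eqref{eq:main_bound}. The three rates then follow by inserting the standard bounds $\Gamma(\lambda)\lesssim\lambda^{-1/(2\ell)}$, $\log(1/\lambda)$, and $D$, respectively, and balancing $\lambda$ against $\Gamma(\lambda)/n_{\operatorname{eff}}$.
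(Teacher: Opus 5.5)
Your treatment of the easier pieces is sound and parallels the paper's bounds on $\Ccr$, $\Bcr$ and $\Pcr_b$: the noiseless second-stage bias, the empirical-centering term $\delta_2$, and the first-stage regularization bias in Step 4. The genuine gap is Step 3, which is the heart of the theorem. You state the target $\Gamma(\lambda)/(\gamma n)$ but do not prove it, and the structure you propose to prove is not the one that works.

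The missing idea is to control the whole $\FF$-side by an \emph{operator norm}, not a trace. After Hanson--Wright, use $(\Zb^\top\Zb+n\lambda_0\Ib)^{-1}\Zb^\top\Zb(\Zb^\top\Zb+n\lambda_0\Ib)^{-1}\preceq(\Zb^\top\Zb+n\lambda_0\Ib)^{-1}$. The $\FF$-dependence then reduces to the $n\times n$ matrix $\Wb(\Zb^\top\Zb+n\lambda_0\Ib)^{-1}\Wb^\top$. Here $\Wb$ is the design operator of $\{\hat{\bar\psi}(a'_j)\}_{j=1}^n$, with $\hat{\bar\psi}(a)=\frac1n\sum_i\psi(x_i,a)$.

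The paper shows this matrix is $\preceq\frac{c}{\gamma}\Ib_n$ by proving the operator inequality $\Wb^\top\Wb\preceq\frac{c}{\gamma}\bigl(\Zb^\top\Zb+\log n\,\Ib\bigr)$, in four steps:
\begin{itemize}
\item Jensen gives $\hat{\bar\psi}(a)\otimes\hat{\bar\psi}(a)\preceq\frac1n\sum_i\psi(x_i,a)\otimes\psi(x_i,a)$.
\item Concentration of these per-query operators (union bound over $j$), and of their average over the $a'_j$, yields domination by $\EE_{\cP_X\otimes\cP_{\operatorname{ref}}}[\psi\otimes\psi]+\frac{\log n}{n}\Ib$.
\item Relative overlap gives $\EE_{\cP_X\otimes\cP_{\operatorname{ref}}}[\psi\otimes\psi]\preceq\gamma^{-1}\EE_{\cP_{X,A}}[\psi\otimes\psi]$.
\item Concentration of $\Zb^\top\Zb$ closes the chain, with $n\lambda_0\gtrsim\log n$ absorbing the shift.
\end{itemize}
What remains is $\frac{c\log n}{\gamma}\Tr\bigl(\bSigma_{\operatorname{ref}}(\Ab^\top\Ab+n\lambda\Ib)^{-1}\bigr)\lesssim\frac{\log n}{n\gamma}\Gamma(\lambda)$, a trace taken only on the $\HH$ side. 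Without this step you have no bound on the leading variance term.

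Your heuristic for why Step 3 should work also points the wrong way. Treating $(\Zb^\top\Zb+n\lambda_0\Ib)^{-1}\Zb^\top\Zb$ as nearly a projection does not remove the remaining factor $(\Zb^\top\Zb+n\lambda_0\Ib)^{-1}$. That factor can be as large as $(n\lambda_0)^{-1}$ in poorly sampled directions, and it is cancelled only because the query-side features $\Wb$ are dominated by the design features $\Zb$ up to $1/\gamma$. Bounding the $\FF$-factor by a trace, e.g.\ through the first-stage prediction error, reintroduces the effective dimension of $\cF$ at scale $\lambda_0$. The tensor-type domination you propose for the composed operator is not even well defined for a general kernel $K_\cF$.

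The same problem affects Step 2. Your Cauchy--Schwarz reduction to $\gamma^{-1}$ times the first-stage $L^2$ error is harmless for the regularization bias. Applied to the noise part, it pays exactly the full first-stage rate you say you want to avoid.

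Finally, the difficulty you anticipate, comparing empirical and population first-stage covariances at $\lambda_0\asymp n^{-1}\log n$, is routine trace-class concentration. The real content is the overlap domination of $\Wb$ by $\Zb$. Working with $\Wb$ and the sampled $a'_j$ directly also makes the population surrogate $T$, and the unjustified ``essentially'' replacement in Step 2, unnecessary.
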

We defer the proof to Appendix~\ref{section: apdx proof general theory}.
The bound in \cref{eq:main_bound} highlights the main strength of our method: the leading bias-variance tradeoff is governed by the complexity of the \emph{simpler} target space $\cH$ through $\Gamma(\lambda)$, rather than by the potentially much more complex nuisance space $\cF$.
The nuisance function $f^\star$ enters only through the parametric-order term \(\|f^\star\|_\cF^2/(\gamma n)\).
In particular, under polynomial eigendecay, the target-space term remains dominant whenever \(\|f^\star\|_\cF^2 = \cO((\gamma n)^{1/(1+2\ell)})\).
Moreover, our upper bound scales explicitly with the weak-overlap parameter through \(n_{\operatorname{eff}}=\gamma n\), and the lower bound in Theorem~\ref{theorem; lower bound} shows that this dependence is unimprovable in the Sobolev benchmark considered there. 
This structural decoupling yields fast rates in the following examples:

\begin{example}[Sobolev Nuisance]
    If $\cF = H^\beta(\cZ)$ and $\cH =  H^\beta(\cA)$ for \(\cZ \subset \RR^d, \cA \subset \RR\) with $\beta > d/2$, then Case (a) implies a rate of $\tilde{\cO}((\gamma n)^{-\frac{2\beta}{1+2\beta}})$. This is significantly faster than the rate for learning $f^\star$ directly, which would be $\cO(n^{-\frac{2\beta}{d+2\beta}})$.
\end{example}

\begin{example}[NTK Nuisance]
    If $\cF$ corresponds to a Neural Tangent Kernel (modeling an overparameterized neural network) but $h^\star \in H^1(\cA)$ for some $\cA \subset \RR$ (a smooth dose-response), our method achieves the fast rate $\tilde{\cO}((\gamma n)^{-2/3})$ for the TEF, bypassing the slow learning rate associated with the high-dimensional NTK.
\end{example}

\paragraph{Comparison with Other Works.}
Compared with double machine learning (DML) methods, our approach entails a clear tradeoff.
In addition to avoiding conditional-density estimation, our analysis imposes no explicit estimation-error requirement on the nuisance outcome model \(f^\star\). As a result, the theory continues to apply even in irregular covariate settings where standard guarantees for outcome regression may be unavailable.

DML methods, by contrast, offer robustness to misspecification, which is an advantage over our approach. However, their guarantees typically require quantitative rates for the nuisance estimators. In particular, the second-order remainder is driven by the product of the estimation errors for \(f^\star\) and the conditional treatment density \(P_{A \mid X=x}(\cdot)\), together with an additional weak-overlap penalty that often scales as \(1/\gamma^2\). These nuisance requirements are also commonly stated in stronger norms such as \(L^4\) or \(L^\infty\), rather than only \(L^2\), making them substantially more demanding. The extra \(1/\gamma^2\) factor can itself become a serious burden for nuisance estimation under weak overlap. By contrast, our method dispenses entirely with conditional-density estimation and requires no learning bound for \(f^\star\) beyond correct specification.

Another closely related RKHS-based approach is \citet{singh2024kernel}. Their method is essentially plug-in: it first estimates a response function in a tensor-product RKHS over \((X,A)\), and then obtains the target curve by marginalizing over covariates. Due to the tensor-product structure, the resulting marginalized curve may indeed be represented as an element of the treatment RKHS \(\cH\). However, their error bounds remain governed by the eigendecay and source conditions of the larger nuisance space \(\cF\), and therefore do not adapt to the potentially simpler structure of \(h^\star\) in \(\cH\). By contrast, our analysis allows broad RKHS choices beyond tensor-product kernels and yields rates whose leading term is controlled by the complexity of \(\cH\) itself.

There is also a line of work showing that one can avoid explicit conditional-density estimation and still achieve fast rates, but only under more specialized conditions.
\citet{robins2008higher} and \citet{bonvini2022fast} use higher-order influence functions (HOIFs) to obtain fast rates for binary and continuous treatments, respectively. In the continuous-treatment case, \citet{bonvini2022fast} avoid conditional-density estimation by replacing it with joint density estimation for \((X,A)\), and their theory requires this joint density to be sufficiently regular. Their guarantees are also limited to H\"older classes \(\cF=C^\beta(\cZ)\) with \(\beta>d/2\), and the HOIF construction is computationally demanding, involving projections onto finite-dimensional subspaces and the evaluation of U-statistics.
The independence-weighting approach of \citet{HGC24} is closer in spirit, but its guarantees are tied to a product-type smoothness class, roughly \(\cF \approx H^{d/2}(\cX)\otimes H^1(\cA)\) together with \(\cH \approx C^2(\cA)\), which is a special case of our framework. Their method also requires solving a large-scale non-convex quadratic program, which is NP-hard in general. Their bound is not optimal in its dependence on the degree of overlap \(\gamma\).
By contrast, our method is computationally practical and provides optimal dependence on both \(n\) and \(\gamma\) for broad RKHS choices of \(\cF\) and \(\cH\), beyond classical Sobolev and H\"older classes. For example, our framework includes \(\cF = H^{\frac{1}{2}+\varepsilon}_{\mathrm{mix}}(\cZ)\) for any \(\varepsilon>0\), which is broader than the class treated in \citet{HGC24}.

\paragraph{Proof Idea and Technical Novelty.}
Our analysis differs from standard DML-type and plug-in arguments in that we do not collapse the first-stage estimator \(\hat f\) into a deterministic nuisance-error term. Instead, we carry the randomness of \(\hat f\), induced by the outcome noise, through the pseudo-outcomes and track how it propagates into the second stage. This yields a decomposition into propagated bias and variance: first-stage undersmoothing keeps the bias mild, while the second-stage regularizer shrinks the variance enough for it to be absorbed into the usual KRR variance term in \(\cH\). This is what allows the leading rate to be governed by the simpler target space \(\cH\), with the complexity of \(\cF\) entering only through a parametric-order term. This proof strategy is a key distinction from prior kernel and two-stage analyses such as \citet{singh2024kernel}, \citet{liu2023augmented}, and \citet{XWa24}.

\paragraph{Growing Nuisance Norms.} 

Another useful feature of \cref{eq:main_bound} is that it remains informative even when the nuisance norm \(\|f^\star\|_\cF\) is allowed to grow with \(n\). In many RKHS analyses, the theorem is stated only over a fixed-radius ball, with the radius absorbed into the constant. Our bound makes this dependence explicit. More generally, under polynomial eigendecay, the optimal rate continues to hold whenever \(\|f^\star\|_{\mathcal F}^2 = \cO\!\left((\gamma n)^{1/(1+2\ell)}\right)\).
For example, when \(\cH = H^1(\cA)\), we still achieve the optimal rate for estimating \(h^\star\) as long as \(\|f^\star\|_{\mathcal F}^2 \lesssim n_{\operatorname{eff}}^{1/3}\).
This clarifies the precise sense in which our result relaxes the usual fixed-radius RKHS-ball assumption.

\paragraph{\(L^\infty\) Results Under a Source Condition.}
If one is willing to impose a source condition directly on the \emph{target} RKHS \(\cH\), our estimator also admits an \(L^\infty\) guarantee. 
The key point is that both the source condition and the eigendecay assumption are imposed only on \(\cH\), whereas in \citet{singh2024kernel} the corresponding assumptions are formulated on a substantially more complex joint tensor-product RKHS over \(\cZ\).
This distinction can be important in practice, since the treatment-only target space \(\cH\) is often expected to exhibit substantially faster eigendecay than the full nuisance-side space.

\begin{remark}[\(L^\infty\) Bound under Source Condition]\label{remark:linfty_source_condition}
Assume \(h^\star \in \cH^s\) for some \(s\in(1,2]\), and suppose \(\bSigma_{\operatorname{ref}}\) has polynomial eigendecay \(\rho_j\lesssim j^{-2\ell}\). Then, under the assumptions of Theorem~\ref{theorem; MISE}, for \(\lambda \asymp n_{\operatorname{eff}}^{-1/(s+1/(2\ell))}\), we have
\[
\|\hat h_\lambda-h^\star\|_{L^\infty(\cA)}
=
\widetilde{\cO}\!\left(
n_{\operatorname{eff}}^{-\frac{s-1}{2(s+\frac{1}{2\ell})}}
\right)
\]    
with probability at least \(1-n^{-10}\), where the hidden constant may depend on \(\|h^\star\|_{\cH^s}\) and \(\|f^\star\|_\cF\).
A proof is given in Appendix~\ref{section: apdx proof Linfty source}.
\end{remark}

\paragraph{Challenge of Regularizer Selection.}
Achieving the theoretical rates in Theorem~\ref{theorem; MISE} requires an optimal choice of the regularizer $\lambda$, which balances bias and variance.
In practice, however, the optimal $\lambda$ depends on unknown structural properties: the smoothness of the true TEF $h^\star$, the spectral decay of the kernel, and the degree of overlap $\gamma$.
To address this, we introduce a data-driven model selection procedure (Algorithm~\ref{algorithm: model selection}) in Section~\ref{subsection: model selection algorithm}.
In the subsequent section, we establish the adaptivity guarantees of this procedure, demonstrating that it automatically achieves minimax-optimal rates without requiring prior knowledge of these structural parameters.

\subsection{Adaptivity and Matching Minimax Lower Bounds}\label{section: lower bound}

We now present the adaptivity guarantee for Algorithm~\ref{algorithm: model selection}.

\begin{theorem}[Adaptivity of Model Selection]\label{theorem; model selection}
    Suppose we run Algorithm~\ref{algorithm: model selection} with $M =n$, $\lambda_0 \asymp \log n/n$, $\cP_{\operatorname{samp}} = \cP_{\operatorname{ref}}$ and a geometric grid \(
\Lambda \coloneqq\{ \frac{2^{k-1} \log n}{n} : k = 1,\dotsc,L \},
\) where $L = \lceil \log_2 n \rceil + 1$. Under the assumptions of Theorem~\ref{theorem; MISE}, with probability at least $1 - 2n^{-10}$, the selected estimator $\hat{h}_{\hat{\lambda}}$ satisfies the following:
\begin{itemize}
        \item \textbf{Case (a):} \quad \(\cE(\hat h_{\hat{\lambda}}) \lesssim \ (n_{\operatorname{eff}})^{-\frac{2\ell}{1+2\ell}} + \frac{1}{n_{\operatorname{eff}}} \|f^\star\|_\cF^2\);
        \item \textbf{Case (b):}\quad \(\cE(\hat h_{\hat{\lambda}}) \lesssim  \frac{1}{n_{\operatorname{eff}}} (1+ \|f^\star\|_\cF^2)\);
        \item \textbf{Case (c):} \quad  \(\cE(\hat h_{\hat{\lambda}}) \lesssim   \frac{D}{n_{\operatorname{eff}}} + \frac{\|f^\star\|_\cF^2}{n_{\operatorname{eff}}}.\)
\end{itemize}
The notation \(\lesssim\) hides constants and logarithmic factors.
\end{theorem}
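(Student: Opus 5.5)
The plan is to reduce \Cref{theorem; model selection} to an oracle inequality for the selection step, and then apply \Cref{theorem; MISE} to the best grid point. Write $\|\cdot\|$ for the $L^2(\cP_{\operatorname{ref}})$ norm and $\|\cdot\|_M$ for the empirical norm over the Monte Carlo points $\tilde a_1,\dots,\tilde a_M$. For each $\lambda\in\Lambda$ expand
\[
\|\hat h_\lambda-\tilde h\|^2=\|\hat h_\lambda-h^\star\|^2+\|\tilde h-h^\star\|^2-2\langle \hat h_\lambda-h^\star,\tilde h-h^\star\rangle_{L^2(\cP_{\operatorname{ref}})} .
\]
The middle term does not depend on $\lambda$, so it cancels when comparing candidates. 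It therefore suffices to show that, simultaneously over the $L\asymp\log n$ grid points, the cross term satisfies
\[
|\langle g_\lambda,\tilde h-h^\star\rangle|\le \tfrac14\|g_\lambda\|^2+C\,\frac{\log n\,(1+\|f^\star\|_\cF^2)}{\gamma n}, \qquad g_\lambda:=\hat h_\lambda-h^\star .
\]
Given this, the usual argument yields $\cE(\hat h_{\hat\lambda})\lesssim\min_{\lambda\in\Lambda}\cE(\hat h_\lambda)+\log n\,(1+\|f^\star\|_\cF^2)/(\gamma n)$. The proxy itself may have large variance. This does not matter, because only its correlation with the fixed (by independence) direction $g_\lambda$ enters.

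To bound the cross term, condition on $\cD_{\text{train}}$ and on the covariates, treatments and query points of $\cD_{\text{val}}$. Then $g_\lambda$ is deterministic, and since both KRR stages are linear in the responses, $\tilde h-h^\star=b+\sum_i\varepsilon_i w_i$. Here $b$ is the propagated bias of the proxy and the $w_i$ are fixed functions. I will take from the proof of \Cref{theorem; MISE} (run with $\tilde\lambda\asymp\log n/n$) that $\|b\|^2\lesssim \tilde\lambda\|h^\star\|_\cH^2+\|f^\star\|_\cF^2/(\gamma n)$, so that $|\langle g_\lambda,b\rangle|\le\frac18\|g_\lambda\|^2+2\|b\|^2$. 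The noise part $\sum_i\varepsilon_i\langle g_\lambda,w_i\rangle$ is conditionally sub-Gaussian with variance proxy $\sigma^2\sum_i\langle g_\lambda,w_i\rangle^2$.

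The key claim, and the main obstacle, is a bound on the operator norm of the noise covariance:
\[
\sup_{\|g\|\le 1}\sum_i\langle g,w_i\rangle^2\lesssim\frac{\operatorname{polylog} n}{\gamma n}.
\]
This says that a single projected direction of the proxy's noise has parametric variance, even though its total variance $\Gamma(\tilde\lambda)/(\gamma n)$ is large. I would try to prove it by writing $\langle g,w_i\rangle=\frac1n\,\psi(z_i)^\top(\hat\Sigma_\cF+\tilde\lambda_0)^{-1}u_g$. Here $u_g$ is the $\FF$-element obtained by marginalizing the second-stage smoother applied to $g$ over $\hat{\cP}_X\otimes\cP_{\operatorname{ref}}$. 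The quantity to control is then $\frac1n u_g^\top(\hat\Sigma_\cF+\lambda_0)^{-1}\hat\Sigma_\cF(\hat\Sigma_\cF+\lambda_0)^{-1}u_g\le\frac1n u_g^\top(\hat\Sigma_\cF+\lambda_0)^{-1}u_g$. Relative overlap gives $\cP_X\otimes\cP_{\operatorname{ref}}\le\gamma^{-1}\cP_{X,A}$, hence $\Sigma_{X\otimes\mathrm{ref}}\preceq\gamma^{-1}\Sigma_\cF$. Combined with an operator concentration of $\hat\Sigma_\cF$ around $\Sigma_\cF$ (valid because $\lambda_0\gtrsim\log n/n$), this should bound the quadratic form by $\|g\|^2/(\gamma n)$ up to logs. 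The same computation is presumably the one that yields the variance term in \Cref{theorem; MISE}, so I will reuse it. A sub-Gaussian tail bound plus a union bound over $\Lambda$ then gives the cross-term inequality with probability $1-n^{-10}$.

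Two routine steps remain. First, the Monte Carlo approximation with $M=n$: apply Bernstein's inequality to $(\hat h_\lambda-\tilde h)^2(\tilde a_k)$, a bounded function since all estimators have $\cH$-norm polynomial in $n$ and the kernel is bounded, to get $\big|\|\hat h_\lambda-\tilde h\|_M^2-\|\hat h_\lambda-\tilde h\|^2\big|\le\frac14\|\hat h_\lambda-\tilde h\|^2+\widetilde\cO(1/n)$. The union bound is over $\Lambda$ only, so no uniform law over $\cH$ is needed, and $1/n\le1/(\gamma n)$. Second, the grid: in each Case (a)--(c) the optimizing $\lambda^\star$ from \Cref{theorem; MISE} satisfies $\log n/n\lesssim\lambda^\star\lesssim1$ because $\gamma\le1$. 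The geometric grid therefore contains some $\lambda$ with $\lambda^\star\le\lambda\le2\lambda^\star$. The bound \cref{eq:main_bound} changes by at most a constant factor between $\lambda^\star$ and $\lambda$, since $\Gamma$ is decreasing. Applying \Cref{theorem; MISE} on $\cD_{\text{train}}$ (sample size $n/2$) at this $\lambda$, and adding the failure probabilities, gives the three stated rates with probability at least $1-2n^{-10}$.
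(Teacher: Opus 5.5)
Your population-norm cross-term argument is sound. Your key claim is the population analogue of the paper's bound $\|\tilde{\mathbf m}-\EE_{\bm\varepsilon_2}[\tilde{\mathbf m}]\|_{\psi_2}^2\lesssim\sigma^2/\gamma$, and it follows from Lemma~\ref{lemma; key inequality 1} plus covariance comparison. One small correction: the marginalization also runs over the empirical query points $a'_{2j}$, so you need their concentration event and not only $\cP_X\otimes\cP_{\operatorname{ref}}$.

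The genuine gap is the Monte Carlo step you call routine. Your argument works only because $\|\tilde h-h^\star\|^2$ cancels exactly. The proxy is deliberately undersmoothed, so its MISE is in general only of order $\frac{\log n}{\gamma n}\Gamma(\log n/n)$. In Case (a) this is polynomially larger than $n_{\operatorname{eff}}^{-2\ell/(1+2\ell)}$. Your relative bound $\bigl|\|\hat h_\lambda-\tilde h\|_M^2-\|\hat h_\lambda-\tilde h\|^2\bigr|\le\frac14\|\hat h_\lambda-\tilde h\|^2+\epsilon$ destroys that cancellation. Comparing $\hat\lambda$ with the good grid point $\lambda^\star$ only gives $\|\hat h_{\hat\lambda}-\tilde h\|^2\le\frac53\|\hat h_{\lambda^\star}-\tilde h\|^2+\frac83\epsilon$. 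After expanding, this reads $\cE(\hat h_{\hat\lambda})\le\frac53\cE(\hat h_{\lambda^\star})+\frac23\cE(\tilde h)+\dots$, which loses the rate.

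Separately, the additive Bernstein term is of order $B\log n/M$ with $B=\sup_a(\hat h_\lambda-\tilde h)^2(a)$. Bounding $B$ through $\cH$-norms that grow polynomially in $n$ gives a polynomial numerator. So this term is not $\widetilde\cO(1/n)$; it is of the same size as the proxy error.

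The fix, which is the paper's route, is to do the expansion in the empirical norm over $\tilde a_1,\dots,\tilde a_M$. There the common term $\|\tilde h-h^\star\|_M^2$ cancels exactly; this is Lemma~\ref{lemma; loss model selection} inside Proposition~\ref{proposition; in-sample oracle}. Conditionally on the test points, the empirical cross term is controlled by the in-sample version of your key claim. Concretely, the linear map $\bm\varepsilon_2\mapsto(\tilde h(\tilde a_k))_{k\le M}$ has operator norm $\lesssim\gamma^{-1/2}$ once the empirical second moment of $\{\phi(\tilde a_k)\}$ is dominated by $c(\bSigma_{\operatorname{ref}}+\frac{\log n}{n}\Ib)$. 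Adding the in-sample proxy bias, this gives $\|g_{\hat\lambda}\|_M^2\lesssim\min_{\lambda\in\Lambda}\|g_\lambda\|_M^2+\Ocr$.

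Only then do you pass between $\|g_\lambda\|_M^2$ and $\|g_\lambda\|^2$, and only for the proxy-free functions $g_\lambda=\hat h_\lambda-h^\star$. These are independent of the test points. At each fixed $a$ they satisfy $|g_\lambda(a)|\lesssim\sqrt{\log n/\gamma}\,(1+\|f^\star\|_\cF)$ with high probability. This uses sub-Gaussian noise of parameter $\sigma/\sqrt{\gamma}$ in every direction, plus the bias bound $\|\EE_{\bm\varepsilon_1}[\hat\eta_\lambda]-\eta^\star\|^2\lesssim\|h^\star\|_\cH^2+\gamma^{-1}\|f^\star\|_\cF^2$. A truncated concentration lemma (Lemma~\ref{lemma; in sample to population second moment RKHS}) then gives the two-sided comparison with additive error $\Ocr$, uniformly over the grid. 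The resulting price of selection is of order $(1+\|f^\star\|_\cF^2)/(\gamma n)$ up to logs, not $1/n$. Your grid-coverage and union-bound steps are otherwise fine.
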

The proof is provided in Appendix~\ref{section: apdx proof model selection}.
Our analysis demonstrates that the proposed procedure automatically selects the optimal main regularizer for the second-stage KRR, achieving the minimax-optimal rates established in \Cref{theorem; MISE}.
Crucially, the estimator $\hat{h}_{\hat{\lambda}}$ adapts to the unknown structural parameters—specifically the degree of overlap $\gamma$ and the spectral decay of the kernel integral operator—without requiring prior knowledge.

Our model-selection procedure shares the same high-level held-out validation idea as \citet{wang2026pseudo}, but the technical setting is substantially different. Their analysis studies regularizer tuning for classical single-stage KRR under covariate shift, whereas our first contribution is an error analysis for a two-stage KRR estimator of the treatment effect function, followed by model selection built specifically for this two-stage structure.

To complete the adaptivity picture, we now establish a matching minimax lower bound, showing both that the lower bound is governed by the complexity of the target RKHS \(\cH\) and that the dependence on the effective sample size \(n_{\operatorname{eff}} = \gamma n\) is fundamental to the problem and cannot be improved.
We derive the minimax lower bound for the Sobolev class $H^\beta([0,1])$.

\begin{theorem}[Minimax Lower Bound]\label{theorem; lower bound}
    Let $\cA = [0,1]$ and $\cZ = [0,1]^d$. Let $\cP(\gamma)$ denote the set of distributions satisfying Relative Overlap with degree $\gamma$ with respect to the uniform measure. For Sobolev classes $\cH = H^\beta(\cA)$ and $\cF = H^\beta(\cZ)$, the minimax MISE with respect to the uniform measure satisfies, for all sufficiently large \(n\),
\begin{align}
        \inf_{\hat{h}} \sup_{\substack{\|f^\star\|_{\cF} \le 1 \\ \cP \in \cP(\gamma)}} \EE[\cE(\hat{h})] \gtrsim (\gamma n)^{-\frac{2\beta}{1+2\beta}} = n_{\operatorname{eff}}^{-\frac{2\beta}{1+2\beta}}.
    \end{align}
\end{theorem}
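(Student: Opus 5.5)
We prove the lower bound with a Fano-type argument. The hard instances are built so that only about a $\gamma$ fraction of the samples carry information about the perturbation of $h^\star$.

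\textbf{Construction.} Fix $\gamma\in(0,1]$. Take a binary covariate coordinate and let $X$ be uniform on $[0,1]^{d-1}$, encoded by the first coordinate $x_1$ of $\cZ=[0,1]^d$. The two groups are $G_1=\{x_1\le 1/2\}$ and $G_0=\{x_1>1/2\}$. Choose $\cP_{A|X}$ as follows:
\begin{itemize}
\item on $G_1$, $\cP_{A|X}$ is uniform on $[0,1]$;
\item on $G_0$, $\cP_{A|X}$ has density $\gamma$ on $[0,1]$ plus mass $1-\gamma$ concentrated on a small reserved subinterval $I_0$, say $[0,\delta]$.
\end{itemize}
Then $\mathrm d\cP_{\operatorname{ref}}/\mathrm d\cP_{A|X=x}\le 1/\gamma$ for every $x$, so relative overlap of degree $\gamma$ holds. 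To keep everything smooth in $x$, replace the indicator of $G_0$ by a smooth bump $\chi(x_1)$ supported inside $G_0$, with $\EE[\chi(X_1)]\asymp 1$. Next take a standard Varshamov--Gilbert packing $\omega\in\{0,1\}^m$ with $m\asymp (\gamma n)^{1/(1+2\beta)}$, and bumps $\varphi_k(a)=m^{-\beta}\varphi(m a-k)$ supported in $[0,1]\setminus I_0$. Define
\[
f_\omega(x,a)=c\,\chi(x_1)\sum_{k}\omega_k\varphi_k(a),\qquad h_\omega(a)=c\,\EE[\chi]\sum_k\omega_k\varphi_k(a).
\]
With a small enough $c$, $\|f_\omega\|_{H^\beta(\cZ)}\le 1$, since the tensor product of a fixed smooth bump in $x$ with standard Sobolev bumps in $a$ has norm of order one. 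Take Gaussian noise with variance $\sigma^2$.

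\textbf{Separation and KL.} Separation is standard: $\|h_\omega-h_{\omega'}\|^2_{L^2}\gtrsim m^{-2\beta}$ whenever $\omega,\omega'$ are at Hamming distance $\gtrsim m$. For the KL divergence, the law of $(X,A)$ does not depend on $\omega$. The $n$-sample KL is therefore
\[
\frac{n}{2\sigma^2}\,\EE\bigl[(f_\omega-f_{\omega'})^2(X,A)\bigr].
\]
The difference $f_\omega-f_{\omega'}$ vanishes on $G_1$ (where $\chi=0$) and on $A\in I_0$. On $G_0$ the density of $A$ off $I_0$ is $\gamma$. Hence
\[
\EE\bigl[(f_\omega-f_{\omega'})^2\bigr]\lesssim \gamma\, m\cdot m^{-2\beta-1}=\gamma m^{-2\beta},
\]
and the total KL is $\lesssim n\gamma m^{-2\beta}$. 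Choosing $m\asymp(\gamma n)^{1/(1+2\beta)}$ makes this at most $\alpha\log(2^{m/8})\asymp m$. Fano's inequality then gives a risk of order $m^{-2\beta}=(\gamma n)^{-2\beta/(1+2\beta)}$. For large $n$ we have $m\ge 8$, so the packing exists.

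\textbf{Main obstacle.} The issue is that the perturbation must be invisible in the group that receives full treatment density, yet still visible in the average $h$. This is exactly why $f$ is supported only on the low-overlap group. Otherwise the observed KL would scale with $n$ rather than $\gamma n$, and the $\gamma$-dependence would be lost. A second point to check is that the reserved interval $I_0$ carries mass $1-\gamma$ without disturbing the uniform reference measure $\cP_{\operatorname{ref}}$. The error is measured on all of $[0,1]$, and only a fixed fraction of it (of length $1-\delta$) is used by the bumps, which costs only constants. Finally, the lower bound must hold uniformly over the fixed $\gamma$. For this the constants in the norm bound and in the Fano step must not depend on $\gamma$, which holds because $\gamma$ enters only through the density of $A$ and not through $f_\omega$.
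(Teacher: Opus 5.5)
Your proof is correct, but it takes a different route from the paper.

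\textbf{The paper's route.} The paper builds a confounded instance $f^\star_h(x,a)=c_0h(a)+\tau u(x)$. Every unit gets an atom at $a=0$ of mass $1-p_\gamma$ plus a tilted density $p_\gamma(1+c_\gamma u(x)v(a))$ on $[0,1]$. The confounder is then removed by the bijection $Z=Y-\tau u(X)$. Sufficiency of $(A_i,Z_i)$ and Rao--Blackwell reduce the problem to one-dimensional random-design regression. Only the $N\sim\mathrm{Binomial}(n,p_\gamma)$ points with $A_i\neq 0$ are informative, since $h(0)=0$. The classical Sobolev lower bound is applied conditionally on $N$ and combined with a binomial tail bound.

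\textbf{Your route.} You run Fano directly with a Varshamov--Gilbert packing. $\gamma$ enters only through the KL divergence, because your bumps are supported where the design density of $A$ is exactly $\gamma$.

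\textbf{What each buys.} The paper's argument is modular. It uses the classical bound as a black box and makes precise the statement that the problem is as hard as ordinary regression with about $\gamma n$ samples. The price is the sufficiency reduction and a conditioning-on-$N$ step. Passing a conditional minimax bound through $\sup_r$ is not automatic; it is cleanest via a Bayes or Assouad bound on a fixed hypercube at scale $2p_\gamma n$. Your argument is self-contained. The KL is computed unconditionally under the fixed law of $(X,A)$, so no random sample size or averaging over $N$ appears. In your instance the perturbation of $f^\star$ itself depends on $x$, so the TEF is a genuine covariate average.

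\textbf{Two minor remarks.} First, the full-overlap group $G_1$ and the cutoff $\chi$ are not needed for the rate. Giving every unit your $G_0$ design works equally well, and this is essentially what the paper's atom at $0$ does. So the ``main obstacle'' you describe is an artifact of including $G_1$, not a feature of the problem. Second, your construction uses the coordinate $x_1$, so it presumes $d\ge 2$. You should note that $d=1$ (no covariates) follows immediately by applying the $G_0$ design to $A$ alone with the same packing.
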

The proof can be found in Appendix~\ref{section: apdx proof lower bound}. This lower bound matches our upper bound in Theorem~\ref{theorem; MISE} (Case (a)), confirming that our estimator is minimax-optimal with respect to both the sample size $n$ and the overlap parameter $\gamma$.

\section{Experiments}\label{section:experiments}
We evaluate the proposed estimator on both synthetic and semi-real benchmarks.
Code and data-processing scripts are available at \url{https://github.com/seokjinkim0428/CTE}.
Additional implementation details and hyperparameter grids are deferred to Appendix~\ref{section:app-more-experiments}.

\subsection{Synthetic Data}
We first study a confounded synthetic benchmark with a known ground-truth TEF, which allows us to isolate the contribution of the second-stage smoothing step relative to standard plug-in estimation.

\paragraph{Data-Generating Process.}
Let \(x=(x^1,\ldots,x^q)\in\mathbb{R}^q\) with \(q=10\), and draw \(x_i \stackrel{\mathrm{i.i.d.}}{\sim}\mathcal{U}([-1,1]^q)\).
Define \(s(x) := \frac{1}{q} \sum_{i=1}^q \sin(x^i)\) and set the true conditional mean to
\[
f^\star(x,a)=\sin(a) + 4 s(x)(\sin(a)+1).
\]
This yields the target TEF \(h^\star(a)=\sin(a)\).
To induce confounding, let \(r(x)=\sigma(2\sum_{i=1}^q x^i)\), where \(\sigma(\cdot)\) denotes the sigmoid function.
Conditional on \(x\), we sample \(u\sim\mathrm{Beta}(20r(x),20(1-r(x)))\) and map it to the treatment domain \(\cA=[-\pi,\pi]\) via \(a=\pi(2u-1)\).
Finally, outcomes are generated as \(y_i=f^\star(x_i,a_i)+\varepsilon_i\) with \(\varepsilon_i\sim\mathcal{N}(0,1)\).
We report MISE for \(n\in\{500,1000\}\).

\paragraph{Baselines.}
We compare with two natural baselines: \textit{Plug-in KRR}, which estimates \(f^\star\) and directly marginalizes it without a second smoothing stage, and \textit{Direct Regression}, which regresses \(y\) on \(a\) alone and therefore ignores confounding. All methods use comparable data-driven tuning; the exact grids are reported in Appendix~\ref{section:app-more-experiments}.

\paragraph{Results.}
Table~\ref{table:MISE_synthetic} presents the MISE with respect to the uniform distribution on \(\cA=[-\pi,\pi]\), averaged over $100$ independent runs. In implementation, the integral is approximated by averaging squared errors over a dense uniform grid on the treatment domain.
Our proposed method consistently outperforms both the Plug-in baseline and Direct Regression across both sample sizes, confirming the efficacy of the second-stage smoothing procedure.

\begin{table}[h]
    \centering
    \caption{Comparison of MISE over 100 runs (standard error in parentheses) on synthetic data. Reported values are scaled by $100$.}
    \label{table:MISE_synthetic}
    \vspace{0.2cm}
    \begin{tabular}{@{}lcc@{}}
        \toprule
        Method & $n=1000$ & $n=500$ \\
        \midrule
        \textbf{Ours}            & 6.22 (0.32)  & 8.40 (0.35)  \\
        Plug-in                  & 9.87 (0.31)  & 11.81 (0.28) \\
        Direct Regression        & 14.52 (0.31) & 14.84 (0.36) \\
        \bottomrule
    \end{tabular}
\end{table}

\subsection{Semi-real Data}
We further evaluate our methodology on a semi-real benchmark derived from the Job Corps dataset, following the same empirical benchmark specification as \citet{colangelo2026double}.
The Job Corps dataset is a well-established benchmark in the continuous treatment effect literature \citep{singh2024kernel,colangelo2026double,huber2020direct,hsu2026testing,lee2018partial,flores2012estimating}.
Funded by the U.S. Department of Labor, the Job Corps program is the largest publicly funded job training program for disadvantaged youth in the United States.
The data come from the National Job Corps Study, conducted during the program's randomized evaluation period (November 1994--February 1996).
While access to the program was randomized among eligible applicants, the intensity of participation was self-selected.
This makes it a canonical benchmark for continuous treatment effect estimation under a selection-on-observables assumption.
The relevant variables are defined as follows:
\begin{itemize}
    \item \textbf{Treatment ($a$):} Total hours spent in academic or vocational training classes during the first year after randomization.
    \item \textbf{Outcome ($y$):} The proportion of weeks employed during the second year after randomization.
    \item \textbf{Covariates ($x$):} A 40-dimensional vector ($x \in \mathbb{R}^{40}$) of baseline characteristics, including age, gender, ethnicity, language ability, education, marital status, household size, household income, prior receipt of public assistance, family background, and health-related behaviors.
\end{itemize}

\paragraph{Benchmark Construction.}
To obtain a benchmark with known ground truth while preserving the empirical covariate-treatment structure, we adopt a semi-synthetic construction. Let the original Job Corps sample be \(\{x_{\mathrm{job},i}, a_{\mathrm{job},i}, y_{\mathrm{job},i}\}_{i=1}^n\) with \(n=4024\). We first fit a Generalized Random Forest (GRF) regressor \citep{athey2019generalized}, denoted by \(\hat f_{\mathrm{semi}}\), to the original covariates, treatments, and outcomes, and then compute residuals
\begin{align*}
    g_i := y_{\mathrm{job},i} - \hat f_{\mathrm{semi}}(x_{\mathrm{job},i}, a_{\mathrm{job},i}).
\end{align*}
The semi-real dataset \(\tilde{\cD} = \{x_{\mathrm{job},i}, a_{\mathrm{job},i}, \tilde y_i\}_{i=1}^n\) is obtained by injecting randomized noise into the fitted response,
\begin{align*}
    \tilde y_i = \hat f_{\mathrm{semi}}(x_{\mathrm{job},i}, a_{\mathrm{job},i}) + \tilde e_i g_i,
\end{align*}
where \(\{\tilde e_i\}_{i=1}^n\) are i.i.d.\ Rademacher random variables. We follow \citet{colangelo2026double} and evaluate MISE on the treatment grid \(\{160, 200, \dots, 2000\}\). The ground-truth marginal dose-response curve is defined by averaging the fitted response surface over the empirical covariate distribution:
\begin{align*}
    h^\star(a) := \frac{1}{n} \sum_{i=1}^n \hat{f}_{\mathrm{semi}}(x_{\mathrm{job},i}, a),
\end{align*}
for \(a \in \cA\).
The reported MISE therefore measures the discrepancy between the estimated treatment effects and this ground-truth curve on the evaluation grid.

\paragraph{Baselines and Implementation.}
We compare against Plug-in KRR and Direct Regression, both tuned by LOOCV, as well as the DML estimators of \citet{colangelo2026double} with neural-network (NN), kernel-neural-network (KNN), GRF, and LASSO nuisance learners. For DML, we use the tuning parameters and bandwidths reported in \citet{colangelo2026double}. For the kernel-based methods, we use Laplace/Mat\'ern kernels with data-driven selection of length scales and regularization. Full preprocessing and hyperparameter details are reported in Appendix~\ref{section:app-more-experiments}.

\paragraph{Results.}
Table~\ref{table:MISE_semi_real} summarizes the results over \(100\) independent runs. Our method achieves the lowest MISE among all baselines. Notably, although the ground-truth curve is constructed from the GRF-fitted response surface \(\hat f_{\mathrm{semi}}\), our kernel-based approach still outperforms DML (GRF) (1.24 vs. 2.42). This highlights the ability of our two-stage smoothing procedure to recover the structure of the treatment effect function even under model misspecification.

\begin{table}[h]
    \centering
    \caption{MISE (standard errors in parentheses) for the semi-real benchmark, averaged over $100$ independent runs. Our proposed method yields the lowest error among all compared baselines.}
    \label{table:MISE_semi_real}
    \vspace{0.2cm}
    \begin{tabular}{lc}
        \toprule
        Method & Mean MISE (SE) \\
        \midrule
        \textbf{Ours}       & \textbf{1.2466 (0.1209)} \\
        Plug-in LOOCV       & 1.6197 (0.1146) \\
        Direct Regression   & 1.6970 (0.1264) \\
        DML (GRF)           & 2.4230 (0.1837) \\
        DML (NN)            & 2.1065 (0.1454) \\
        DML (LASSO)         & 2.8732 (0.2391) \\
        DML (KNN)           & 2.9742 (0.2165) \\
        \bottomrule
    \end{tabular}
\end{table}

\section{Conclusion}\label{section: conclusion}

In this work, we introduce a two-stage kernel ridge regression framework for estimating continuous treatment effects under confounding. Our primary theoretical contribution is to show that the statistical complexity of estimating the TEF is governed by the complexity of the target function space $\cH$, rather than the potentially high-dimensional nuisance space $\cF$.
By constructing pseudo-outcomes and then applying a second-stage smoother, our estimator achieves minimax-optimal convergence rates that adapt to both the intrinsic regularity of the TEF and the degree of overlap. Furthermore, we develop a fully data-driven model selection procedure that attains these optimal rates without requiring prior knowledge of the structural parameters.

Several promising directions remain for future research. First, our current approach is predicated on the assumption of a well-specified outcome model. To improve robustness against model misspecification, future work could investigate incorporating inverse propensity weighting or developing doubly-robust kernel estimators that guarantee consistency if either the outcome or treatment density model is correct.
Second, while our analysis leverages the tractability of RKHS theory, extending this ``decoupling'' strategy to general function classes, such as neural networks beyond the NTK regime, represents an important step toward broader practical applicability.

\section*{Acknowledgement}
We thank the reviewers for their thoughtful and constructive feedback, which helped improve the presentation of this work. 
The research is supported by NSF grant DMS-2515679.

\clearpage
\newpage
{
\bibliographystyle{ims}
\bibliography{bib}
}

\clearpage
\newpage
\appendix
\begin{center}
\textbf{{\Huge Supplementary Materials}}
\end{center}
\vspace{0.5cm}
\etocdepthtag.toc{mtappendix}
\etocsettagdepth{mtchapter}{none}
\etocsettagdepth{mtappendix}{subsection}
{\tableofcontents}

\section{Proof of Theorem~\ref{theorem; MISE}}\label{section: apdx proof general theory}
\subsection{Groundwork for the Proof}\label{section: apdx groundwork}

\paragraph{Proof Roadmap.}
We first define high-probability events controlling empirical covariance operators and Monte Carlo fluctuations.
On this good event, we derive an error decomposition for \(\hat h_\lambda-h^\star\), and bound each term separately: pseudo-outcome bias (\(\Bcr\)), second-stage ridge bias (\(\Ccr\)), and propagated first-stage error (\(\Pcr\)).
This yields \cref{equation: general theory MISE bound}, and the theorem follows by optimizing \(\lambda\) under each eigendecay regime.

\paragraph{Linear Model in RKHS.}
We recall the key definitions.
For the function classes \(\cF\) and \(\cH\), we denote the associated Hilbert spaces by \(\mathbb{F}\) and \(\HH\), respectively.
For \(\mathbb{F}\), the feature map is \(\psi(\cdot): \cZ \to \mathbb{F}\).
For \(\HH\), the feature map is \(\phi(\cdot): \cA \to \HH\).
Recall \(\cZ = \cX \times \cA\).
For any \(f_\theta \in \cF\), we refer to \(\theta\) as its Hilbertian element; similarly, for any \(h_\eta \in \cH\), \(\eta\) denotes its Hilbertian element.

We recast our response model in Hilbertian form.
Let \(\cD = \{(x_i, a_i, y_i)\}_{i=1}^n\) with
\begin{align*}
	y_i = f^\star(x_i, a_i) + \varepsilon_i
\end{align*}
where
\begin{align*}
	f^\star(x,a):=  \psi(x, a)^\top \theta^\star
\end{align*}
and \(\theta^\star\) is the Hilbertian element corresponding to \(f^\star\).
	For the TEF, for any \(a \in \cA\), we write
	\begin{align*}
		h^\star(a)= \EE [Y(a)] = \phi(a)^\top \eta^\star
	\end{align*}
	where \(\eta^\star\) is the Hilbertian element of \(h^\star\) in \(\HH\).

\paragraph{Design Operators.}
Consider \(N\) generic elements \(\{v_1, \dots, v_N\}\) in a Hilbert space \({\XX}\), with \(N>1\).
We define the \emph{design operator} of \(\{v_1, \dots, v_N \}\) as $\Vb: \XX \to \RR^N$, which, for all \(\theta \in \XX\), satisfies:
\begin{align*}
	\Vb \theta = (\langle v_1, \theta \rangle , \langle v_2, \theta \rangle, \dots, \langle v_N, \theta \rangle)^\top.
\end{align*}
Similarly, we define the adjoint of \(\Vb\), denoted by \(\Vb^\top: \RR^N \to \XX \), as the operator such that for all \(\ab =(a_1, \dots, a_N) \in \RR^N\),
\begin{align*}
	\Vb^\top \ab= \sum_{i=1}^N a_i v_i \in \XX.
\end{align*}

Recall that we defined the second moment of \(\cP_{\operatorname{ref}}\) as
\begin{align*}
	\bSigma_{\operatorname{ref}} := \EE_{a \sim \cP_{\operatorname{ref}}}[\phi(a)\otimes \phi(a)].
\end{align*}
We also define 
\begin{align*}
	\bSigma_{\operatorname{samp}} := \EE_{a \sim \cP_{\operatorname{samp}}}[\phi(a)\otimes \phi(a)].
\end{align*}
When we take \(\cP_{\operatorname{samp}} = \cP_{\operatorname{ref}}\) (in Theorem~\ref{theorem; MISE}), we have 
\begin{align*}
	\bSigma_{\operatorname{samp}}  = \bSigma_{\operatorname{ref}}.
\end{align*}

\paragraph{Summary of Notation.}
We define the design operators of \(\{\phi(a_j^\prime)\}_{j=1}^n\) and \(\{\psi(x_i,a_i)\}_{i=1}^n\) as \(\Ab\) and \(\Zb\), respectively.
Also, we define
\begin{align*}
	\bar \psi(a):= \EE_{x \sim \cP_{X}}[\psi(x,a)]
\end{align*}
and the empirical feature mean as
\begin{align*}
	\hat{\bar \psi}(a):=\frac{1}{n}\sum_{i=1}^n \psi(x_i,a).
\end{align*}
We define the design operator of \(\{\hat{\bar \psi}({a}^\prime_j)\}_{j=1}^n\) as \(\Wb\).
We summarize the key notation below in table~\ref{table: notation 1}.

\begin{table}[H]
    \centering
    \caption{Summary of notation}
    \label{table: notation 1}
    \renewcommand{\arraystretch}{1.4}
    \begin{tabular}{c l}
        \toprule
        \textbf{Notation} & \textbf{Description} \\
        \midrule
        \(\Zb\) & Design operator of \(\{\psi(x_i,a_i)\}_{i=1}^n\) \\
        \(\Ab\) & Design operator of \(\{\phi({a}^\prime_j)\}_{j=1}^n\) \\
        \(\bar \psi(a)\) & Feature mean, defined as \(\EE_{x \sim \cP_{X}}[\psi(x,a)]\) \\
        \(\hat{\bar \psi}(a)\) & Empirical feature mean, defined as \(\frac{1}{n}\sum_{i=1}^n\psi(x_i,a)\) \\
        \(\Wb\) & Design operator of \(\{\hat{\bar \psi}({a}^\prime_j)\}_{j=1}^n\) \\
        \(\Yb\) & Vector of responses \(y_1, \dots, y_n\) (\(\Yb \in \RR^n\)) \\
        \(\theta^\star\) & Hilbertian element of \(f^\star\) in \(\mathbb{F}\) \\
        \(\eta^\star\) & Hilbertian element of \(h^\star\) in \(\HH\) \\
        \(\bSigma_{\operatorname{ref}}\) & \(\EE_{a \sim \cP_{\operatorname{ref}}}[\phi(a)\otimes \phi(a)]\) \\
        \(\bSigma_{\operatorname{samp}}\) & \(\EE_{a \sim \cP_{\operatorname{samp}}}[\phi(a)\otimes \phi(a)]\) \\
        \(\Sb_{\lambda}\) & Regularized second-moment operator \(\bSigma_{\operatorname{ref}}^{1/2}(\bSigma_{\operatorname{ref}}+\lambda \Ib)^{-1}\bSigma_{\operatorname{ref}}^{1/2}\) \\
        \(\hat{\theta}\) & Corresponding Hilbertian element of \(\hat f \) in Algorithm~\ref{algorithm: main} \\
        \(\hat{\eta}_{\lambda}\) & Corresponding Hilbertian element of \(\hat h_{\lambda} \) in Algorithm~\ref{algorithm: main} \\
        \bottomrule
    \end{tabular}
\end{table}

\paragraph{First-Stage Regression: Hilbertian Formulation.}
Recall that we define the design operator of \(\{\psi(x_i,a_i)\}_{i=1}^n\) as \(\Zb\).
We run kernel ridge regression to obtain the estimator \(\hat{\theta}\):
\begin{align*}
	\hat\theta = (\Zb^\top \Zb + n\lambda_0 \Ib)^{-1 } \Zb^\top \Yb.
\end{align*}
Then \(\hat \theta\) is the corresponding Hilbertian element of \(\hat f \in \cF\) in Algorithm~\ref{algorithm: main}.

\paragraph{Second-Stage Regression: Hilbertian Formulation.}
In the second stage of Algorithm~\ref{algorithm: main}, we regress on the pseudo-outcomes \(m_j :=m(a_j')\).
We express this process in a closed form using the language of elements and operators in a Hilbert space.
We define \(\hat{\eta}_\lambda\) as the corresponding Hilbertian element of \(\hat h_\lambda\).

\begin{enumerate}
	\item Sample \(\{a_j^\prime\}_{j=1}^n\) from \(\cP_{\operatorname{samp}}\) and create pseudo-outcomes \(\{m_j := m(a_j^\prime)\}_{j=1}^n\).
	
	Explicitly, the pseudo-outcome for \(a_j'\) is:
	\begin{align*}
		m(a_j') := \frac{1}{n}\sum_{i=1}^n \psi(x_i, a_j')^\top \hat\theta = \hat{\bar \psi}(a_j')^\top \hat \theta.
	\end{align*}
	
	\item
	Define the design operator of \(\{\phi({a}^\prime_j)\}_{j=1}^n\) as \(\Ab\).
	Obtain the final estimator
	\begin{align*}
		\hat{\eta}_\lambda &= (\Ab^\top \Ab + n\lambda \Ib)^{-1} \sum_{j=1}^n \phi(a_j^\prime) m(a_j^\prime) \\
		&= (\Ab^\top \Ab + n\lambda \Ib)^{-1} \Ab^\top \Wb \hat{\theta}.
	\end{align*}
	for a regularizer \(\lambda >0\).
\end{enumerate}
For a generic \(a\), we define 
\begin{align*}
	m(a) := \frac{1}{n} \sum_{i=1}^n \hat f(x_i, a).
\end{align*}

\subsection{Good Events for Proof}\label{subsection: apdx good event general theory}
We now define the high-probability events used in the proof.
Specifically, we define \(\Ecr_1, \Ecr_2, \Ecr_3\), and finally \(\Ecr_{\operatorname{good}} := \Ecr_1 \cap \Ecr_2 \cap \Ecr_3\).

\paragraph{Good Event 1: \(\Ecr_1\)}
We define the good event \(\Ecr_1\) as follows.
First, sample \(a_1', \dots, a_n'\) from \(\cP_{\operatorname{samp}}\), independently of \(\{x_1, \dots, x_n\}\).
Then, with probability at least \(1-\frac{1}{n^{11}}\), by Hoeffding's inequality and a union bound over \(j\in[n]\), the following holds simultaneously for all \(j \in [n]\):
\begin{align}\label{equation: good event 1}
	\big|\frac{1}{n}\sum_{i=1}^n f^\star(x_i,a_j') - \EE_{x \sim \cP_{X}}[f^\star(x, a_j')]\big| \lesssim \frac{\xi\|\theta^\star\|_{\mathbb{F}} \sqrt{\log n}}{\sqrt{n}}.
\end{align}
We can apply Hoeffding's inequality since for all \(x \in \cX, a \in \cA\),
\begin{align*}
	|f^\star(x,a)| = | \psi(x, a)^\top \theta^\star| \le \| \psi(x, a) \|_\mathbb{F} \|\theta^\star\|_{\mathbb{F}} \leq \xi\|\theta^\star\|_{\mathbb{F}}.
\end{align*}

\begin{definition}[Good event $\Ecr_1$]
	We define the good event \(\Ecr_1\) as the event where \cref{equation: good event 1} holds for all \(j \in [n]\).
	From the above observation, we have \(\PP[\Ecr_1] \geq 1-n^{-11}\).
\end{definition}

\paragraph{Good Event 2: \(\Ecr_2\)}
Next, we define the second good event \(\Ecr_2\).
For a fixed \(a\), we define
\begin{align*}
	\widehat \Qb(a):=\frac{1}{n}\sum_{i=1}^n \psi(x_i, a)\psi(x_i, a)^\top.
\end{align*}
Also, we define
\begin{align*}
	\Qb( a) = \EE_{x \sim \cP_{X}}[\psi(x, a)\psi(x, a)^\top].
\end{align*}
Using Lemma~\ref{lemma; trace class concentration bounded}, with probability at least \(1-\frac{1}{n^{11}}\), we have
\begin{align}\label{equation: good event 2-1}
	\frac{1}{c}(\widehat\Qb({a}^\prime_j) + \frac{\log n}{n} \Ib) \preceq    \Qb({a}^\prime_j) + \frac{\log n}{n} \Ib \preceq c(\widehat\Qb({a}^\prime_j) + \frac{\log n}{n} \Ib)
\end{align}
for all \(j \in [n]\) for some absolute constant \(c>1\).
We set
\begin{align*}
	\Qb := \EE_{a_j' \sim \cP_{\operatorname{samp}}}[\Qb(a_j')] =\EE_{\cP_{X} \otimes \cP_{\operatorname{samp}}}[\psi(x,a)\psi(x,a)^\top]
\end{align*}
and
\begin{align*}
	\bar \Qb := \frac{1}{n} \sum_{j=1}^n \Qb(a_j').
\end{align*}
\begin{lemma}
	There exists an absolute constant \(c>1\) such that, with probability at least \(1-n^{-11}\),
	\begin{align}\label{equation: good event 2-2}
		\bar \Qb \preceq c(\Qb + \frac{\log n}{n} \Ib)
	\end{align}
	holds.
\end{lemma}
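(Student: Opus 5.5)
Here $\bar\Qb=\frac1n\sum_{j=1}^n \Qb(a_j')$ is an empirical average of $n$ i.i.d.\ positive semidefinite trace-class operators. The randomness comes only from the draws $a_j'\sim\cP_{\operatorname{samp}}$, and the mean is $\EE[\Qb(a_j')]=\Qb$. The claim is therefore a one-sided relative concentration bound for a sample second-moment-type operator. I would invoke the same tool used for \eqref{equation: good event 2-1}, namely Lemma~\ref{lemma; trace class concentration bounded}, with summands $\Qb(a_j')$ in place of rank-one terms $\psi\psi^\top$.

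\textbf{Step 1: check the hypotheses.} Each summand satisfies $\Qb(a)\succeq 0$. Its trace is bounded uniformly:
\[
\Tr\,\Qb(a)=\EE_{x\sim\cP_X}\|\psi(x,a)\|_{\FF}^2\le \xi ,
\]
by Assumption~\ref{ass:bounded}. Hence $\|\Qb(a)\|_{\mathrm{op}}\le\xi$ as well. If Lemma~\ref{lemma; trace class concentration bounded} is stated for general bounded PSD trace-class summands, it applies directly with level $\log n/n$ and confidence $n^{-11}$. This yields two-sided equivalence of $\bar\Qb+\frac{\log n}{n}\Ib$ and $\Qb+\frac{\log n}{n}\Ib$, from which the stated upper direction follows after absorbing $\frac{\log n}{n}\Ib$ into the constant.

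\textbf{Step 2: fallback if the lemma only covers rank-one summands.} I would prove the bound directly with an operator Bernstein inequality. Set $\mu=\frac{\log n}{n}$ and whiten by $(\Qb+\mu\Ib)^{-1/2}$:
\[
\Xb_j:=(\Qb+\mu\Ib)^{-1/2}\Qb(a_j')(\Qb+\mu\Ib)^{-1/2}.
\]
The inequality $\bar\Qb\preceq c(\Qb+\mu\Ib)$ is equivalent to $\big\|\frac1n\sum_j\Xb_j\big\|_{\mathrm{op}}\le c$. The ingredients are as follows.
\begin{itemize}
\item Uniform bound: $\|\Xb_j\|_{\mathrm{op}}\le \xi/\mu= \xi n/\log n$.
\item Mean: $\EE\Xb_j\preceq\Ib$.
\item Variance: since $\Xb_j\succeq0$, we have $\EE\Xb_j^2\preceq \|\Xb_j\|_{\mathrm{op}}\,\EE\Xb_j\preceq (\xi/\mu)\Ib$.
\end{itemize}
The intrinsic dimension is $\Tr\,\EE\Xb_j=\Tr[\Qb(\Qb+\mu\Ib)^{-1}]\le\xi/\mu$, which is at most polynomial in $n$. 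A dimension-free (intrinsic-dimension) Bernstein or Tropp-type bound for the sum then gives, with probability $1-n^{-11}$,
\[
\Big\|\frac1n\sum_j(\Xb_j-\EE\Xb_j)\Big\|_{\mathrm{op}}\lesssim \sqrt{\frac{\xi\log n}{n\mu}}+\frac{\xi\log n}{n\mu}\lesssim 1 .
\]
This holds because $n\mu=\log n$ and the log factor from the intrinsic dimension is of order $\log n$. Adding back $\EE\Xb_j\preceq\Ib$ gives the result with an absolute $c$.

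\textbf{Main obstacle.} The delicate step is the logarithmic bookkeeping. The deviation terms scale like $\log(\text{intrinsic dim}/\delta)/(n\mu)$, so the choice $\mu=\log n/n$ makes them exactly of constant order, provided the constants in $\log(n^{11}\cdot\xi n/\log n)$ are absorbed into $c$ (with $\xi$ a constant). An alternative is the one-sided Chernoff upper tail for PSD matrices: it needs only $\|\Xb_j\|_{\mathrm{op}}\le R$ and $\lambda_{\max}(\EE\Xb_j)\le1$, and gives $\lambda_{\max}\lesssim 1+R\log(d/\delta)/n$. That route is cleaner, and I would try it first.
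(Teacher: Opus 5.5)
Your Step 2 is the paper's proof: whiten by $(\Qb+\lambda'\Ib)^{-1/2}$, bound each summand by $\xi/\lambda'$, bound the variance using $X_j^2\preceq\|X_j\|_{\mathrm{op}}X_j$, control the trace at order $\xi/\lambda'$, and apply a dimension-free operator Bernstein inequality (Lemma D.3 of \citet{wang2026pseudo}) with $\lambda'\asymp\log n/n$. The paper takes $\lambda'\ge c_1\xi\log n/n$ to push the deviation below $1/2$, whereas your choice $\mu=\log n/n$ with an $O(1)$ deviation works equally well for this one-sided claim with an absolute $c$; Step 1 does not apply as stated, because Lemma~\ref{lemma; trace class concentration bounded} covers only rank-one summands, so Step 2 is the operative argument.
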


\begin{proof}
	For \(\lambda'>0\), define
	\[
	\Tb_j :=(\Qb +\lambda' \Ib)^{-\frac{1}{2}} (\Qb(a_j^\prime)-\Qb) (\Qb +\lambda' \Ib)^{-\frac{1}{2}}, \qquad j\in[n].
	\]
	Then \(\EE[\Tb_j]=0\). By bounded features, \(\|\Qb(a)\|_{\operatorname{op}}\le \xi\), so in Lemma D.3 of \citet{wang2026pseudo} we can take \(U \asymp \frac{\xi}{\lambda'}\) since
		\begin{align*}
			\|\Tb_j \|_{\operatorname{op}} \lesssim \frac{\xi}{\lambda'}.
		\end{align*}
		For the variance proxy, let
		\[
		X_j:=(\Qb +\lambda' \Ib)^{-\frac{1}{2}} \Qb(a_j^\prime) (\Qb +\lambda' \Ib)^{-\frac{1}{2}},
		\qquad
		M:=\EE[X_j]=(\Qb +\lambda' \Ib)^{-\frac{1}{2}} \Qb (\Qb +\lambda' \Ib)^{-\frac{1}{2}}.
		\]
		Then \(\Tb_j=X_j-M\), and
		\[
		\EE[\Tb_j^2]=\EE[(X_j-M)^2]=\EE[X_j^2]-M^2\preceq \EE[X_j^2]
		\]
		since \(M^2\succeq 0\). By monotonicity of \(\|\cdot\|_{\operatorname{op}}\) on positive semidefinite operators, we get
		\begin{align*}
			\|\EE[\Tb_j^2] \|_{\operatorname{op}}&\le  \| \EE[(\Qb +\lambda' \Ib)^{-\frac{1}{2}} \Qb(a_j^\prime) (\Qb +\lambda' \Ib)^{-\frac{1}{2}}(\Qb +\lambda' \Ib)^{-\frac{1}{2}} \Qb(a_j^\prime) (\Qb +\lambda' \Ib)^{-\frac{1}{2}}] \|_{\operatorname{op}}\\
			&= \| \EE[(\Qb +\lambda' \Ib)^{-\frac{1}{2}} \Qb(a_j^\prime) (\Qb +\lambda' \Ib)^{-1} \Qb(a_j^\prime) (\Qb +\lambda' \Ib)^{-\frac{1}{2}} ] \|_{\operatorname{op}} \\
			&\stackrel{(i)}{\lesssim }\frac{\xi}{\lambda'} \|\EE[(\Qb +\lambda' \Ib)^{-\frac{1}{2}} \Qb(a_j^\prime)^{\frac{1}{2}} \Qb(a_j^\prime)^{\frac{1}{2}} (\Qb +\lambda' \Ib)^{-\frac{1}{2}} ] \|_{\operatorname{op}}  \\
		&\lesssim \frac{\xi}{\lambda'}  \| \EE[(\Qb +\lambda' \Ib)^{-\frac{1}{2}} \Qb(a_j^\prime)  (\Qb +\lambda' \Ib)^{-\frac{1}{2}} ] \|_{\operatorname{op}} \\
		&\lesssim \frac{\xi}{\lambda'} \|(\Qb +\lambda' \Ib)^{-\frac{1}{2}} \Qb  (\Qb +\lambda' \Ib)^{-\frac{1}{2}} \|_{\operatorname{op}} \\
		&\lesssim \frac{\xi}{\lambda'}.
	\end{align*}
	Hence \(\|\sum_{j=1}^n \EE[\Tb_j^2]\|_{\operatorname{op}}\lesssim n\xi/\lambda'\).
	In step (i), we use the fact that
	\begin{align*}
		\| \Qb(a_j^\prime)^{\frac{1}{2}} (\Qb +\lambda' \Ib)^{-1} \Qb(a_j^\prime)^{\frac{1}{2}}\|_{\operatorname{op}} \leq \frac{1}{\lambda'}    \| \Qb(a_j^\prime)\|_{\operatorname{op}} \leq \frac{\xi}{\lambda'} .
	\end{align*}
	
	Let
	\[
	\Jb := (\Qb+\lambda'\Ib)^{-1}.
	\]
	By cyclicity of the trace, for each $j$,
	\begin{align}
		\Tr(\Tb_j^2)
		&= \Tr\Big( (\Qb+\lambda'\Ib)^{-1/2}\Delta_j(\Qb+\lambda'\Ib)^{-1}\Delta_j(\Qb+\lambda'\Ib)^{-1/2}\Big) \nonumber\\
		&= \Tr\big(\Delta_j\,\Jb\,\Delta_j\,\Jb\big), \label{eq:trace_Aj2}
	\end{align}
	where $\Delta_j := \Qb(a'_j)-\Qb$.
	Since $\{a'_j\}_{j=1}^n$ are i.i.d., we have
	\begin{align}
		\Tr\Big(\sum_{j=1}^n \EE[\Tb_j^2]\Big)
		= n\,\EE\Tr(\Tb_1^2)
		= n\,\EE\Tr\big(\Delta\,\Jb\,\Delta\,\Jb\big), \label{eq:trace_sum_A}
	\end{align}
	with generic $\Delta := \Qb(a')-\Qb$ for $a'\sim \cP_{\mathrm{samp}}$.
	
	Next, using Hilbert--Schmidt norms,
	\[
	\Tr(\Delta \Jb \Delta \Jb)=\|\Jb^{1/2}\Delta \Jb^{1/2}\|_{\mathrm{HS}}^2,
	\]
	and the inequality $\|U-V\|_{\mathrm{HS}}^2 \le 2\|U\|_{\mathrm{HS}}^2+2\|V\|_{\mathrm{HS}}^2$ with
	$U=\Jb^{1/2}\Qb(a')\Jb^{1/2}$ and $V=\Jb^{1/2}\Qb\Jb^{1/2}$, we obtain
	\begin{align}
		\Tr(\Delta \Jb \Delta \Jb)
		\le 2\,\Tr\big(\Qb(a')\,\Jb\,\Qb(a')\,\Jb\big)
		+2\,\Tr\big(\Qb\,\Jb\,\Qb\,\Jb\big). \label{eq:split_A}
	\end{align}
	We now bound $\Tr(A\Jb A\Jb)$ for a generic PSD operator $A\succeq 0$:
	\begin{align}
		\Tr(A\Jb A\Jb)
		&= \Tr\Big( (\Jb^{1/2} A \Jb^{1/2})^2 \Big) \nonumber\\
		&\le \|\Jb^{1/2} A \Jb^{1/2}\|_{\operatorname{op}}\; \Tr(\Jb^{1/2} A \Jb^{1/2}) \nonumber\\
		&= \|\Jb^{1/2} A \Jb^{1/2}\|_{\operatorname{op}}\; \Tr(A \Jb) \nonumber\\
		&\le \|A\|_{\operatorname{op}}\,\|\Jb\|_{\operatorname{op}}\,\Tr(A \Jb) \nonumber\\
		&\le \frac{\|A\|_{\operatorname{op}}}{\lambda'}\,\Tr\big(A(\Qb+\lambda'\Ib)^{-1}\big), \label{eq:AMA_bound_A}
	\end{align}
	since $\|\Jb\|_{\operatorname{op}}=\|(\Qb+\lambda'\Ib)^{-1}\|_{\operatorname{op}}\le 1/\lambda'$.
	
	Under the bounded-feature assumption $\|\psi(x,a)\|_{\mathbb{F}}^2\le \xi$, we have
	$\|\Qb(a)\|_{\operatorname{op}}\le \xi$ and $\|\Qb\|_{\operatorname{op}}\le \xi$.
	Applying \eqref{eq:AMA_bound_A} with $A=\Qb(a')$ and $A=\Qb$, and then taking expectation over $a'$, yields
	\begin{align}
		\EE\Tr\big(\Qb(a')\Jb\Qb(a')\Jb\big)
		&\le \frac{\xi}{\lambda'}\,\EE\Tr\big(\Qb(a')(\Qb+\lambda'\Ib)^{-1}\big)
		= \frac{\xi}{\lambda'}\,\Tr\big(\Qb(\Qb+\lambda'\Ib)^{-1}\big), \label{eq:term1_A} \\
		\Tr\big(\Qb\Jb\Qb\Jb\big)
		&\le \frac{\xi}{\lambda'}\,\Tr\big(\Qb(\Qb+\lambda'\Ib)^{-1}\big). \label{eq:term2_A}
	\end{align}
	
		Combining \eqref{eq:trace_sum_A}, \eqref{eq:split_A}, \eqref{eq:term1_A}, and \eqref{eq:term2_A}, we conclude that
		\begin{align}
			\Tr\Big(\sum_{j=1}^n \EE[\Tb_j^2]\Big)
			\le \frac{4n\xi}{\lambda'}\,\Tr\big(\Qb(\Qb+\lambda'\Ib)^{-1}\big)
			\label{eq:trace_final_deff_A}
		\end{align}
	Moreover, since $\Tr(\Qb)=\EE\|\psi\|_{\mathbb{F}}^2 \le \xi$, we have
	$\Tr(\Qb)\|(\Qb+\lambda'\Ib)^{-1}\|_{\operatorname{op}}\le \xi/\lambda'$,
		and thus the cruder bound
		\begin{align}
			\Tr\Big(\sum_{j=1}^n \EE[\Tb_j^2]\Big)
			\le \frac{4n\xi^2}{\lambda'^2}. \label{eq:trace_final_crude_A}
		\end{align}

	By applying Lemma D.3 of \citet{wang2026pseudo}, for any \(\lambda' \geq c_1 \xi \frac{\log n}{n}\) for some constant \(c_1>1\), we have the following with probability at least \(1-n^{-11}\):
	\begin{align*}
		\| (\Qb +\lambda' \Ib)^{-\frac{1}{2}}(\bar{\Qb} -\Qb) (\Qb +\lambda' \Ib)^{-\frac{1}{2}}\|_{\operatorname{op}} \lesssim \frac{1}{2}.
	\end{align*}
	Hence, \(\bar \Qb \preceq 2(\Qb + \lambda' \Ib)\), and thus
	\begin{align}
		\bar \Qb \preceq c(\Qb + \frac{\log n}{n} \Ib)
	\end{align}
	holds for some absolute constant \(c>1\) with probability at least \(1-n^{-11}\).
\end{proof}

\begin{definition}[Good event $\Ecr_2$]
	We define the good event \(\Ecr_2\) as the event where \cref{equation: good event 2-1} holds for all \(j \in [n]\) and \cref{equation: good event 2-2} holds.
	By the above argument, we have \(\PP[\Ecr_2] \geq 1-2n^{-11}\).
\end{definition}

\paragraph{Good Event 3: \(\Ecr_3\).}

Using Lemma~\ref{lemma; trace class concentration bounded}, with probability at least \(1-\frac{1}{n^{11}}\), for some absolute constant \(c>1\), we have
\begin{align}\label{equation: good event 3-1}
	\frac{1}{c}(\Ab^\top\Ab + \log n \cdot \Ib) \preceq n \bSigma_{\operatorname{samp}}+ \log n \cdot \Ib \preceq  c(\Ab^\top\Ab  + \log n \cdot \Ib).
\end{align}
Similarly, using Lemma~\ref{lemma; trace class concentration bounded}, with probability at least \(1-\frac{1}{n^{11}}\), we have
\begin{align}\label{equation: good event 3-2}
	\frac{1}{c}(\Zb^\top \Zb + \log n \cdot \Ib )\preceq n \EE_{(x, a) \sim \cP_{X,A}}[\psi(x, a)\psi(x, a)^\top]+ \log n \cdot \Ib \preceq c ( \Zb^\top \Zb+ \log n \cdot \Ib)
\end{align}
for some absolute constant \(c>1\).

\begin{definition}[Good event $\Ecr_3$]
	We define the good event \(\Ecr_3\) as the event where \cref{equation: good event 3-1} and \cref{equation: good event 3-2} both hold.
	By the previous argument, we have \(\PP[\Ecr_3] \ge 1-2n^{-11}\).
\end{definition}

\paragraph{Final Good Event: \(\Ecr_{\operatorname{good}}\).}
Finally, we define the final good event as the intersection of \(\Ecr_1, \Ecr_2\), and \(\Ecr_3\).

\begin{definition}[Final good event]
	We define \(\Ecr_{\operatorname{good}} := \Ecr_1 \cap \Ecr_2 \cap \Ecr_3\).
	We emphasize that this event depends only on the randomness of \((x_i, a_i)\) for \(i \in [n]\) and \(a_j'\) for \(j \in [n]\), not on the randomness of the noise variables.
\end{definition}

\begin{lemma}[Good event]\label{lemma; good event}
	We have
	\begin{align*}
		\PP[\Ecr_{\operatorname{good}}] \geq 1- \frac{5}{n^{11}}.
	\end{align*}  
\end{lemma}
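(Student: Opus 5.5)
The plan is a union bound over the three constituent events, using the probability bounds established just above for each of them. By definition, \(\Ecr_{\operatorname{good}} = \Ecr_1 \cap \Ecr_2 \cap \Ecr_3\), so
\[
\PP[\Ecr_{\operatorname{good}}^c] \le \PP[\Ecr_1^c] + \PP[\Ecr_2^c] + \PP[\Ecr_3^c].
\]

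The three terms are handled as follows.

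\begin{enumerate}
\item For \(\Ecr_1\), Hoeffding's inequality (applicable because \(|f^\star|\le \xi\|\theta^\star\|_{\mathbb{F}}\)) combined with a union bound over \(j\in[n]\) gives \(\PP[\Ecr_1^c]\le n^{-11}\). The union bound only changes the logarithmic factor inside \cref{equation: good event 1}.
\item For \(\Ecr_2\), there are two sub-events.
  \begin{itemize}
  \item The simultaneous two-sided comparison \cref{equation: good event 2-1} over all \(j\) follows from Lemma~\ref{lemma; trace class concentration bounded}, applied conditionally on each \(a_j'\). This is legitimate because the \(a_j'\) are drawn independently of the \(x_i\). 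The confidence level is adjusted to \(n^{-12}\) per \(j\) so that the union over \(j\) still costs at most \(n^{-11}\).
  \item The inequality \cref{equation: good event 2-2} holds with failure probability at most \(n^{-11}\) by the preceding lemma.
  \end{itemize}
  Together, \(\PP[\Ecr_2^c]\le 2n^{-11}\).
\item For \(\Ecr_3\), each of \cref{equation: good event 3-1} and \cref{equation: good event 3-2} fails with probability at most \(n^{-11}\) by Lemma~\ref{lemma; trace class concentration bounded}, applied to the i.i.d. features \(\phi(a_j')\) and \(\psi(x_i,a_i)\) respectively. Hence \(\PP[\Ecr_3^c]\le 2n^{-11}\).
\end{enumerate}

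Summing gives \(1+2+2=5\) failure terms, so \(\PP[\Ecr_{\operatorname{good}}]\ge 1-5n^{-11}\).

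The only step requiring care is the uniform-in-\(j\) statement \cref{equation: good event 2-1}. Here one must check that the constant in Lemma~\ref{lemma; trace class concentration bounded} can be chosen so that the per-\(j\) failure probability is \(n^{-12}\) and the regularization level \(\log n/n\) still dominates the required threshold. Beyond this, the argument is bookkeeping, and none of these events involves the noise variables.
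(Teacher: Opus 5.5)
Your proposal is correct and matches the paper's proof, which is the same one-line union bound over the previously established bounds $\PP[\Ecr_1^c]\le n^{-11}$, $\PP[\Ecr_2^c]\le 2n^{-11}$ and $\PP[\Ecr_3^c]\le 2n^{-11}$, giving failure probability $5n^{-11}$. Your remarks on applying Lemma~\ref{lemma; trace class concentration bounded} conditionally on each $a_j'$ at level $n^{-12}$ spell out a step the paper states without detail.
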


\begin{proof}
	This follows directly from the previous arguments via a union bound. 
\end{proof}

\paragraph{Key Inequality Under Event \(\Ecr_{\operatorname{good}}\).}
Next, we present a key property of the design operator \(\Wb\) under the final good event \(\Ecr_{\operatorname{good}}\).

\begin{lemma}[Upper bound of \(\Wb^\top \Wb\)]\label{lemma; key inequality 1}
	Under the event \(\Ecr_{\operatorname{good}}\), we have
	\begin{align*}
		\frac{1}{n} \Wb^\top{\Wb} \preceq c \frac{1}{n \gamma}(\Zb^\top \Zb + \log n \Ib)
	\end{align*}
	for some absolute constant \(c>1\).
\end{lemma}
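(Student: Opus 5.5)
We chain the operator inequalities of the good event, converting averages over $a_j'$ into averages over the observed $a_i$ via relative overlap. First, $\frac1n\Wb^\top\Wb = \frac1n\sum_{j=1}^n \hat{\bar\psi}(a_j')\otimes\hat{\bar\psi}(a_j')$, and by Jensen (convexity of $v\mapsto v\otimes v$ in the Loewner order) $\hat{\bar\psi}(a)\otimes\hat{\bar\psi}(a) \preceq \frac1n\sum_{i=1}^n \psi(x_i,a)\otimes\psi(x_i,a) = \widehat\Qb(a)$. Hence $\frac1n\Wb^\top\Wb \preceq \frac1n\sum_j \widehat\Qb(a_j')$. Applying the left inequality of \cref{equation: good event 2-1} for each $j$ gives $\widehat\Qb(a_j') \preceq c(\Qb(a_j') + \frac{\log n}{n}\Ib)$. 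Averaging over $j$ then yields $\frac1n\Wb^\top\Wb \preceq c(\bar\Qb + \frac{\log n}{n}\Ib)$, and \cref{equation: good event 2-2} upgrades this to $\preceq c'(\Qb + \frac{\log n}{n}\Ib)$.

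The key step uses relative overlap. Since $\cP_{\operatorname{samp}}=\cP_{\operatorname{ref}}$ and $\mathrm d\cP_{\operatorname{ref}} \le \gamma^{-1}\,\mathrm d\cP_{A|X=x}$ for every $x$, for any $v\in\mathbb F$ we have
\[
\langle v,\Qb v\rangle = \EE_{x\sim\cP_X}\int \langle v,\psi(x,a)\rangle^2\,\mathrm d\cP_{\operatorname{ref}}(a) \le \frac1\gamma \EE_{x\sim \cP_X}\EE_{a\sim\cP_{A|X=x}}\langle v,\psi(x,a)\rangle^2 .
\]
The right-hand side equals $\frac1\gamma\langle v, \Qb_{X,A} v\rangle$ with $\Qb_{X,A}:=\EE_{\cP_{X,A}}[\psi\otimes\psi]$, so $\Qb \preceq \gamma^{-1}\Qb_{X,A}$. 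Because $\gamma\le1$, we also have $\frac{\log n}{n}\Ib \preceq \frac{1}{\gamma}\frac{\log n}{n}\Ib$. Together these give $\frac1n\Wb^\top\Wb \preceq \frac{c'}{\gamma n}(n\Qb_{X,A} + \log n\,\Ib)$.

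Finally, the right inequality of \cref{equation: good event 3-2} gives $n\Qb_{X,A} + \log n\,\Ib \preceq c(\Zb^\top\Zb + \log n\,\Ib)$, which concludes the claim with an absolute constant.

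The main point requiring care is the Jensen step: we must bound the rank-one outer product of the averaged feature by the averaged outer product $\widehat\Qb(a_j')$, rather than by $\Qb(a_j')$ directly. This is what lets the empirical covariates be controlled by event $\Ecr_2$. A secondary check is that the density-ratio bound applies pointwise in $x$ before integrating against $\cP_X$, which holds because the overlap condition is assumed for all $x$. Everything else is monotonicity of the Loewner order under averaging and positive scaling.
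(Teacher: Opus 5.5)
Your proof is correct and follows the paper's argument step for step. Both go through the Jensen/Cauchy--Schwarz reduction to $\frac1n\sum_j \widehat{\Qb}(a_j')$, the two $\Ecr_2$ comparisons to reach $\Qb + \frac{\log n}{n}\Ib$, relative overlap with $\cP_{\operatorname{samp}}=\cP_{\operatorname{ref}}$ to get $\Qb \preceq \gamma^{-1}\EE_{\cP_{X,A}}[\psi\otimes\psi]$, and finally \cref{equation: good event 3-2}; you additionally spell out the small step $\frac{\log n}{n}\Ib \preceq \frac{1}{\gamma}\frac{\log n}{n}\Ib$, which the paper leaves implicit.
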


\begin{proof}
	We first observe that by Jensen/Cauchy-Schwarz,
	\begin{align*}
		\frac{1}{n}\Wb^\top \Wb = \frac{1}{n}\sum_{j=1}^n \hat{\bar \psi}(a_j') \hat{\bar \psi}(a_j')^\top \preceq \frac{1}{n^2} \sum_{i \in [n], j \in [n]} \psi(x_i, {a}^\prime_j)\psi(x_i, {a}^\prime_j)^\top.
	\end{align*}
	This holds because for any \(u \in  \mathbb{F}\), we have
	\begin{align*}
		u^\top \Wb^\top \Wb u
		&= \sum_{j=1}^n\left(\frac{1}{n}\sum_{i=1}^n \psi(x_i, a_j^\prime)^\top u\right)^2 \\
		&\leq \sum_{j=1}^n \frac{1}{n}\sum_{i=1}^n (\psi(x_i,  a_j^\prime)^\top u)^2.
	\end{align*}
	Using the definition of \(\Ecr_{\operatorname{good}}\), there exist two absolute constants \(c_1, c_2>1\) such that 
	\begin{align*}
		\frac{1}{n^2}\sum_{i \in [n],j \in [n]} \psi(x_i, a_j^\prime)\psi(x_i, a_j^\prime)^\top
		&=  \frac{1}{n} \sum_{j=1}^n \frac{1}{n}  \sum_{i \in [n]} \psi(x_i, a_j^\prime)\psi(x_i, a_j^\prime)^\top  \\
		&:=\frac{1}{n}\sum_{j=1}^n \widehat{\Qb}(a_j^\prime)  \\
		&\preceq c_1\frac{1}{n}\sum_{j=1}^n \big({\Qb}(a_j^\prime)  +\frac{\log n}{n}I \big) \quad (\text{by definition of } \Ecr_{\operatorname{good}}) \\
		&\preceq c_1\frac{\log n}{n}\Ib + c_1 \frac{1}{n}\sum_{j=1}^n {\Qb}(a_j^\prime) \\
		&=c_1\frac{\log n}{n}\Ib  + c_1\bar{\Qb} \\
		&\preceq c_2 \frac{\log n}{n}\Ib  + c_2\Qb  \quad( \text{by definition of } \Ecr_{\operatorname{good}})\\
		&= c_2\frac{\log n}{n}\Ib +c_2\EE_{ \cP_{X} \otimes \cP_{\operatorname{samp}} }[\psi(x,a)\psi(x,a)^\top] \\
		&\preceq c_2 \frac{\log n}{n}\Ib +c_2\frac{1}{\gamma}\EE_{(x,a)\sim \cP_{X,A}}[\psi(x,a)\psi(x,a)^\top] \quad \text{(by Definition~\ref{definition; relative overlap})}.
	\end{align*}
	On the other hand, under the good event \(\Ecr_{\operatorname{good}}\), we can lower bound the right-hand side as
	\begin{align*}
		c(\frac{1}{n}\Zb^\top \Zb +\frac{\log n}{n}\Ib ) &\succeq \EE_{(x, a) \sim \cP_{X,A}}[\psi(x, a)\psi(x,a)^\top ]+\frac{\log n}{n}\Ib .
	\end{align*}
	By combining these, we obtain the desired result.
\end{proof}

\begin{remark}[A useful observation]\label{remark; useful observation}
	For any two design operators \(\Tb_1, \Tb_2\), suppose that
	\begin{align*}
		\Tb_1^\top \Tb_1   \preceq c_1(\Tb_2^\top \Tb_2 + c_2\log n \Ib)
	\end{align*}
	for some absolute constant \(c_1>1, c_2 >0\). 
	Then, for any \(c_3 >0\), for \(c = \max(1, c_2 / c_3)\), we have
	\begin{align*}
		\Tb_1^\top \Tb_1 \preceq c(\Tb_2^\top \Tb_2 + c_3 \log n \Ib).
	\end{align*}
\end{remark}

\paragraph{How We Use Remark~\ref{remark; useful observation}.}
Second-moment concentration gives shifts of the form \(+\log n \Ib\).
When we need \(+r\Ib\) instead (e.g., \(r=n\lambda\), \(n\lambda_0\), \(n_1\lambda\), \(n_2\tilde\lambda\)), we apply Remark~\ref{remark; useful observation} with \(c_3=r/\log n\).
If \(r\gtrsim \log n\), then \(c_3\) is bounded below by an absolute constant, so the resulting multiplicative constant remains absolute.

\subsection{Error Decomposition of Pseudo-Outcomes}\label{section: apdx decomposition}
For any \(a \in \cA\), we decompose the estimation error of the pseudo-outcome \(m(a)\):
\begin{align*}
	m(a)- \phi(a)^\top \eta^\star =m(a)- h^\star(a) =m(a) - \EE_{x \sim \cP_{X}}[f^\star(x,a)] = m(a)-\EE_{x \sim \cP_{X}}[{\psi}(x, a)]^\top \theta^\star.
\end{align*}
Moreover, we have
\begin{align*}
	m(a) = \frac{1}{n} \sum_{i=1}^n \hat{f}(x_i, a) = \hat{\bar \psi}(a)^\top \hat \theta.
\end{align*}
Thus,
\begin{equation}\label{equation: error decomposition}
	\begin{aligned}
		m(a) - \phi(a)^\top \eta^\star &= \frac{1}{n}\sum_{i=1}^n \psi(x_i, a)^\top \hat\theta-  \frac{1}{n}\sum_{i=1}^n \psi(x_i, a)^\top \theta^\star + \frac{1}{n} \sum_{i=1}^n \psi(x_i, a)^\top \theta^\star - \EE_{x \sim \cP_{X}}[{\psi}(x, a)]^\top \theta^\star \\
		&= {\hat{\bar \psi }(a)^\top \hat\theta-  \hat{\bar \psi }(a)^\top \theta^\star } + \underbrace{\hat{\bar \psi }(a)^\top \theta^\star - \bar \psi(a)^\top \theta^\star }_{:=b(a)}\\
		&= {\hat{\bar \psi }(a)^\top (\hat\theta- \theta^\star) }  +b(a)
	\end{aligned}
\end{equation}
We define \(\bb\) as the vector \((b(a_1'), \dots, b(a_n'))\) where
\begin{align*}
	b(a_j') &= \hat{\bar \psi }(a_j')^\top \theta^\star - \bar \psi(a_j')^\top \theta^\star \\
	&=\frac{1}{n}\sum_{i=1}^n f^\star(x_i,a_j') -\EE_{x \sim \cP_{X}}[f^\star(x,a_j')].
\end{align*}
Note that under the good event \(\Ecr_{\operatorname{good}}\), for all \(a_j' \in \{{a}^\prime_j\}_{j=1}^n\) we have
\begin{align*}
	|b(a_j')| \lesssim \frac{\sqrt{\log n}}{\sqrt{n}} \|\theta^\star\|_\FF.
\end{align*}
\begin{lemma}[Error decomposition]
	Let \(\hat{\eta}_\lambda\) denote the corresponding Hilbertian element of \(\hat{h}_\lambda\) in \(\HH\) (explicitly defined in Appendix~\ref{section: apdx groundwork}).
	Then we have the following decomposition:
	\begin{align*}
		\hat\eta_\lambda - \eta^\star  &= (\Ab^\top \Ab + n\lambda \Ib)^{-1} \Ab^\top \bb +(\Ab^\top \Ab + n\lambda \Ib)^{-1} \Ab^\top \Wb (\hat\theta -\theta^\star) -n\lambda(\Ab^\top \Ab + n\lambda\Ib)^{-1} \eta^\star  \\
		&:= \Bcr + \Pcr + \Ccr.
	\end{align*}
\end{lemma}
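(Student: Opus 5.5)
The identity is purely algebraic, so the plan is to start from the closed form
\[
\hat\eta_\lambda=(\Ab^\top\Ab+n\lambda\Ib)^{-1}\Ab^\top\Wb\hat\theta
\]
from Appendix~\ref{section: apdx groundwork} and expand $\Wb\hat\theta$ using the pointwise decomposition \cref{equation: error decomposition} at each query point $a_j'$.

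\emph{Step 1 (stack the pointwise decomposition).} By definition of $\Wb$ as the design operator of $\{\hat{\bar\psi}(a_j')\}_{j=1}^n$, the $j$-th coordinate of $\Wb\hat\theta$ is $\hat{\bar\psi}(a_j')^\top\hat\theta = m(a_j')$. Evaluating \cref{equation: error decomposition} at $a=a_j'$ and stacking over $j$ gives the vector identity
\[
\Wb\hat\theta = \Ab\eta^\star + \Wb(\hat\theta-\theta^\star) + \bb .
\]
Here $(\Ab\eta^\star)_j=\phi(a_j')^\top\eta^\star=h^\star(a_j')$, and $\bb$ is the vector of the $b(a_j')$ defined above.

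\emph{Step 2 (isolate the ridge bias).} Substituting into the closed form produces three pieces:
\[
(\Ab^\top\Ab+n\lambda\Ib)^{-1}\Ab^\top\bb, \qquad (\Ab^\top\Ab+n\lambda\Ib)^{-1}\Ab^\top\Wb(\hat\theta-\theta^\star), \qquad (\Ab^\top\Ab+n\lambda\Ib)^{-1}\Ab^\top\Ab\,\eta^\star .
\]
The first two are $\Bcr$ and $\Pcr$ exactly. For the third, write $\Ab^\top\Ab=(\Ab^\top\Ab+n\lambda\Ib)-n\lambda\Ib$, which gives
\[
(\Ab^\top\Ab+n\lambda\Ib)^{-1}\Ab^\top\Ab\,\eta^\star=\eta^\star-n\lambda(\Ab^\top\Ab+n\lambda\Ib)^{-1}\eta^\star .
\]
Subtracting $\eta^\star$ from both sides then leaves $\Ccr$.

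\emph{Well-definedness.} There is no real obstacle. The one point to check is that $\Ab^\top\Ab+n\lambda\Ib$ is boundedly invertible on $\HH$, which holds because $\Ab^\top\Ab\succeq0$ and $\lambda>0$. All operators involved are bounded because the kernels are bounded (Assumption~\ref{ass:bounded}), so every manipulation above is legitimate in the Hilbert space.
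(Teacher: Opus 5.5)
Your proposal is correct and follows essentially the same route as the paper. The paper also starts from the closed form of $\hat\eta_\lambda$, applies the pointwise decomposition \eqref{equation: error decomposition} at each $a_j'$, and writes $\eta^\star=(\Ab^\top\Ab+n\lambda\Ib)^{-1}(\Ab^\top\Ab+n\lambda\Ib)\eta^\star$ to isolate $\Ccr$; your stacked identity $\Wb\hat\theta=\Ab\eta^\star+\Wb(\hat\theta-\theta^\star)+\bb$ is a compact vector form of that same step.
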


\begin{proof}
	By the established decomposition \eqref{equation: error decomposition},
	\begin{align*}
		m(a) - \phi(a)^\top \eta^\star =\hat{\bar \psi }(a)^\top (\hat\theta-    \theta^\star )  +b(a).
	\end{align*}
	Hence we get
	\begin{equation}
	\begin{aligned}\label{equation; error P,B,C}
		\hat\eta_\lambda - \eta^\star &= (\Ab^\top \Ab + n\lambda \Ib)^{-1} \left(\sum_{j=1}^n \phi({a}^\prime_j) m({a}^\prime_j)- \Ab^\top \Ab \eta^\star \;- n\lambda \eta^\star \right) \nonumber \\
		&=(\Ab^\top \Ab +n\lambda \Ib)^{-1} \left(\sum_{j=1}^n \phi({a}^\prime_j) (m(a_j') - \phi({a}^\prime_j)^\top \eta^\star)\; \,-n\lambda \eta^\star \right) \nonumber \\
		&=(\Ab^\top \Ab +n\lambda \Ib)^{-1} \left(\sum_{j=1}^n \phi({a}^\prime_j) (m(a_j') - h^\star(a_j'))\; \,-n\lambda \eta^\star \right) \nonumber \\
		&\stackrel{(i)}{=}(\Ab^\top \Ab + n\lambda\Ib)^{-1} \sum_{j=1}^n \phi({a}^\prime_j) \big(b({a}^\prime_j) + \hat{\bar \psi }({a}^\prime_j)^\top (\hat{\theta} -\theta^\star)\big)-n\lambda(\Ab^\top \Ab +n\lambda \Ib)^{-1} \eta^\star \nonumber \\
		&=\underbrace{(\Ab^\top \Ab + n\lambda \Ib)^{-1} \Ab^\top \bb }_{:=\Bcr}+\underbrace{(\Ab^\top \Ab + n\lambda\Ib)^{-1} \Ab^\top \Wb (\hat\theta -\theta^\star) }_{\Pcr} + \underbrace{(-n\lambda(\Ab^\top \Ab + n\lambda \Ib)^{-1} \eta^\star)}_{\Ccr} \nonumber \\
		&:= \Bcr + \Pcr + \Ccr.
	\end{aligned}		
	\end{equation}
	Step (i) holds by the established decomposition \eqref{equation: error decomposition}.
\end{proof}
Recall that we defined 
\begin{align*}
	\Bcr &: =(\Ab^\top \Ab + n\lambda \Ib)^{-1} \Ab^\top \bb \\
	\Pcr &:= (\Ab^\top \Ab + n\lambda \Ib)^{-1} \Ab^\top \Wb (\hat\theta -\theta^\star) \\
	\Ccr &:= -n\lambda(\Ab^\top \Ab + n\lambda\Ib)^{-1} \eta^\star  
\end{align*}
Here, \(\Pcr\), the propagated error term, is our main challenge.

\subsection{Bounding MISE with Respect to \texorpdfstring{$\bSigma_{\operatorname{ref}}$}{Sigma ref}}
Recall that \(\hat\eta_{\lambda}\) is the Hilbertian element of \(\hat h_\lambda\), and we define $\cE_{\mathbb{H}}(\cdot)$ as follows:
\begin{align*}
	\cE(\hat{h}_\lambda) = \|\hat \eta_\lambda - \eta^\star\|_{\bSigma_{\operatorname{ref}}}^2 := \cE_{\HH}(\hat \eta_\lambda) .
\end{align*}
Using the established decomposition, we get
\begin{align*}
	\cE_{\HH}(\hat \eta_\lambda)
	\lesssim \|\bSigma_{\operatorname{ref}}^{\frac{1}{2}} (\Bcr+ \Ccr +\Pcr) \|_{\HH}^2 \lesssim \|\bSigma_{\operatorname{ref}}^{\frac{1}{2}} \Bcr\|_{\HH}^2+\|\bSigma_{\operatorname{ref}}^{\frac{1}{2}} \Ccr \|_{\HH}^2 +\|\bSigma_{\operatorname{ref}}^{\frac{1}{2}} \Pcr\|_{\HH}^2.
\end{align*}

\begin{definition}
	For any \(\lambda>0\), we define 
	\begin{align*}
		\Sb_{\lambda} := \bSigma_{\operatorname{ref}}^{\frac{1}{2}}  (\bSigma_{\operatorname{ref}} + \lambda \Ib)^{-1}\bSigma_{\operatorname{ref}}^{\frac{1}{2}}.
	\end{align*}
\end{definition}

\paragraph{Some Useful Inequalities.}
Recall the definition of the good event \(\Ecr_{\operatorname{good}}\).
In this subsection, every invocation of Remark~\ref{remark; useful observation} uses \(r=n\lambda\), so we work in the regime \(n\lambda\gtrsim \log n\).
Under this event, Remark~\ref{remark; useful observation} implies that for some \(c>1\),
\begin{align*}
	\frac{1}{c}(n\bSigma_{\operatorname{samp}} + n \lambda \Ib) \preceq \Ab^\top \Ab + n \lambda \Ib \preceq c(n\bSigma_{\operatorname{samp}} + n \lambda \Ib) .
\end{align*}
Moreover, we observe that
\begin{align*}
	\| (\frac{1}{n}\Ab^\top \Ab + \lambda \Ib)^{-\frac{1}{2}} \bSigma_{\operatorname{ref}}  (\frac{1}{n}\Ab^\top \Ab + \lambda \Ib)^{-\frac{1}{2}}\|_{\operatorname{op}} &=\| \bSigma_{\operatorname{ref}}^{\frac{1}{2}}  (\frac{1}{n}\Ab^\top \Ab + \lambda \Ib)^{-1}\bSigma_{\operatorname{ref}}^{\frac{1}{2}}\|_{\operatorname{op}} \\
	&\lesssim \| \bSigma_{\operatorname{ref}}^{\frac{1}{2}}  (\bSigma_{\operatorname{samp}} + \lambda \Ib)^{-1}\bSigma_{\operatorname{ref}}^{\frac{1}{2}}\|_{\operatorname{op}} \\
	&= \| \bSigma_{\operatorname{ref}}^{\frac{1}{2}}  (\bSigma_{\operatorname{ref}} + \lambda \Ib)^{-1}\bSigma_{\operatorname{ref}}^{\frac{1}{2}}\|_{\operatorname{op}} \\
	&= \|\Sb_{\lambda} \|_{\operatorname{op}}.
\end{align*}
Hence,
\begin{align}\label{equation: key 2}
	(\frac{1}{n}\Ab^\top \Ab + \lambda \Ib)^{-\frac{1}{2}} \bSigma_{\operatorname{ref}}  (\frac{1}{n}\Ab^\top \Ab + \lambda \Ib)^{-\frac{1}{2}} \preceq c\|\Sb_{\lambda} \|_{\operatorname{op}} \Ib.
\end{align}
for some absolute constant \(c>1\).

Using the established decomposition, we bound each term in turn.

\paragraph{Bounding Term \(\Bcr\). }
We first bound the term \(\Bcr\).
Observe that
	\begin{align*}
		\| \bSigma_{\operatorname{ref}}^{\frac{1}{2}}\Bcr \|_{\mathbb{H}}^2 &\lesssim     \bb^\top \Ab (\Ab^\top \Ab + n\lambda \Ib)^{-1} \bSigma_{\operatorname{ref}} (\Ab^\top \Ab + n\lambda \Ib)^{-1} \Ab^\top \bb  \\
		&\lesssim \frac{1}{n} \|\Sb_{\lambda}\|_{\operatorname{op}} \cdot \bb^\top \Ab(\Ab^\top \Ab + n\lambda \Ib)^{-1}  \Ab^\top \bb \quad \text{(by \cref{equation: key 2})} \\
		&\lesssim \frac{1}{n} \|\Sb_{\lambda}\|_{\operatorname{op}} \|\bb\|_2^2 \\
		&\lesssim \frac{\log n}{n} \|\Sb_{\lambda}\|_{\operatorname{op}}\|\theta^\star\|_{\mathbb{F}}^2
	\end{align*}
	where the last inequality follows from the fact that under the event \(\Ecr_{\operatorname{good}}\), we have \(| b(a_j')|^2 \lesssim \|\theta^\star\|_{\mathbb{F}}^2 \frac{\log n}{n}\).
Since \(\gamma\le 1\), this contribution is absorbed into the nuisance-dependent term in \cref{equation: general theory MISE bound}.

\paragraph{Bounding Term \(\Ccr\).}
Observe that
\begin{align*}
	\|\bSigma_{\operatorname{ref}}^{\frac{1}{2} } \Ccr \|_\HH^2
	&=(n\lambda)^2\|(\Ab^\top \Ab + n\lambda \Ib)^{-1} \eta^\star \|_{\bSigma_{\operatorname{ref}}}^2 \\
	&=\lambda^2 (\eta^\star)^\top (\frac{1}{n}\Ab^\top \Ab + \lambda \Ib)^{-1}\bSigma_{\operatorname{ref}}(\frac{1}{n}\Ab^\top \Ab + \lambda \Ib)^{-1} \eta^\star\\
	&\lesssim \lambda^2 \|\Sb_\lambda\|_{\operatorname{op}} (\eta^\star)^\top (\tfrac{1}{n}\Ab^\top \Ab + \lambda \Ib)^{-1} \eta^\star \quad \text{(by \cref{equation: key 2})}\\
	&\lesssim \lambda \|\eta^\star\|_{\HH}^2 \|\Sb_\lambda\|_{\operatorname{op}}.
\end{align*}

We next aim to control the propagated error terms.
Recall that we defined 
\begin{align*}
	\Pcr := (\Ab^\top \Ab + n\lambda\Ib)^{-1} \Ab^\top \Wb (\hat\theta -\theta^\star).
\end{align*}
We decompose it into 
\begin{equation}\label{equation: Pv, Pb}
	{\begin{aligned}
	\Pcr &= (\Ab^\top \Ab + n\lambda\Ib)^{-1} \Ab^\top \Wb (\Zb^\top \Zb+ n\lambda_0 \Ib)^{-1}\Zb^\top \bm \varepsilon - n\lambda_0(\Ab^\top \Ab + n\lambda\Ib)^{-1} \Ab^\top \Wb (\Zb^\top \Zb+ n\lambda_0 \Ib)^{-1} \theta^\star \\
	&:= \Pcr_v + \Pcr_b
\end{aligned}}
\end{equation}
for 
\begin{align*}
	\Pcr_v &:=  (\Ab^\top \Ab + n\lambda\Ib)^{-1} \Ab^\top \Wb (\Zb^\top \Zb+ n\lambda_0 \Ib)^{-1}\Zb^\top \bm \varepsilon \\
	\Pcr_b &:=  -n\lambda_0(\Ab^\top \Ab + n\lambda\Ib)^{-1} \Ab^\top \Wb (\Zb^\top \Zb+ n\lambda_0 \Ib)^{-1} \theta^\star.
\end{align*} 
and we aim to bound each term.

\paragraph{Bounding Term \(\Pcr_v\) (Propagated Variance).}
Under the event \(\Ecr_{\operatorname{good}}\), with probability at least \(1-n^{-11}\) (over the randomness of \(\bm{\varepsilon}\)), by the Hanson-Wright inequality \citep{wang2026pseudo}, we get
{\begin{align*}
		&\| \bSigma_{\operatorname{ref}}^{\frac{1}{2}}  \Pcr_v \|_{\HH}^2  \\
		&= \|\bSigma_{\operatorname{ref}}^{\frac{1}{2}} (\Ab^\top \Ab + n\lambda \Ib)^{-1} \Ab^\top  \Wb(\Zb^\top \Zb+ n\lambda_0 \Ib)^{-1}\Zb^\top \bm \varepsilon \|^2_{\mathbb{H}} \\
		&\lesssim \Tr\left(\bSigma_{\operatorname{ref}}^{\frac{1}{2}}(\Ab^\top \Ab + n\lambda \Ib)^{-1} \Ab^\top  \Wb (\Zb^\top \Zb+ n\lambda_0 \Ib)^{-1}\Zb^\top \Zb (\Zb^\top \Zb+ n\lambda_0 \Ib)^{-1} \Wb^\top \Ab (\Ab^\top \Ab + n\lambda \Ib)^{-1} \bSigma_{\operatorname{ref}}^{\frac{1}{2}}\right) \log n\\
		&\lesssim \Tr\left(\bSigma_{\operatorname{ref}}^{\frac{1}{2}} (\Ab^\top \Ab + n\lambda \Ib)^{-1} \Ab^\top  \Wb (\Zb^\top \Zb+ n\lambda_0 \Ib)^{-1} \Wb^\top \Ab(\Ab^\top \Ab + n\lambda \Ib)^{-1} \bSigma_{\operatorname{ref}}^{\frac{1}{2}}\right) \log n \\
		&\stackrel{\text{(i)}}{\lesssim } \frac{1}{\gamma} \Tr\left(\bSigma_{\operatorname{ref}}^{\frac{1}{2}} (\Ab^\top \Ab + n\lambda \Ib)^{-1} \Ab^\top \Ab (\Ab^\top \Ab + n\lambda \Ib)^{-1}\bSigma_{\operatorname{ref}}^{\frac{1}{2}} \right) \log n \\
		&\lesssim    \frac{1}{\gamma}\Tr\left(\bSigma_{\operatorname{ref}} (\Ab^\top \Ab + n\lambda \Ib)^{-1} \right) \log n\\
		&\stackrel{(ii)}{\lesssim} \frac{\log n}{n \gamma} \Tr(\Sb_{\lambda}).
\end{align*}}
where in step (i), we use the result of Lemma~\ref{lemma; key inequality 1}.
For step (ii), we use \cref{equation: good event 3-1}.
We also repeatedly apply Remark~\ref{remark; useful observation} at each step.

\paragraph{Bounding Term \(\Pcr_b\) (Propagated Bias).}
Note that
\begin{align*}
	&\| \bSigma_{\operatorname{ref}}^{\frac{1}{2}}  \Pcr_b \|_{\HH} \\
	&\leq
	n\lambda_0\|\bSigma_{\operatorname{ref}}^{\frac{1}{2}} (\Ab^\top \Ab + n\lambda \Ib)^{-1} \Ab^\top  \Wb(\Zb^\top \Zb+ n\lambda_0 \Ib)^{-1} \theta^\star \|_{\mathbb{H}} \\
	&\lesssim
	n\lambda_0  \|\bSigma_{\operatorname{ref}}^{\frac{1}{2}} (\Ab^\top \Ab + n\lambda \Ib)^{-\frac{1}{2}} \|_{\operatorname{op}}\cdot \|(\Ab^\top \Ab + n\lambda \Ib)^{-\frac{1}{2}} \Ab^\top  \|_{\operatorname{op}} \| \Wb(\Zb^\top \Zb+ n\lambda_0 \Ib)^{-1} \theta^\star \|_{2} \\
	&\lesssim
	n\lambda_0 \cdot  \frac{1}{\sqrt{n}}\|\Sb_{\lambda}\|_{\operatorname{op}}^{\frac{1}{2}}\cdot 1 \cdot \| \Wb(\Zb^\top \Zb+ n\lambda_0 \Ib)^{-1} \theta^\star \|_{2} \quad \text{(by \cref{equation: key 2})} \\
	&\lesssim
	n\lambda_0 \cdot  \frac{1}{\sqrt{n}}\|\Sb_{\lambda}\|_{\operatorname{op}}^{\frac{1}{2}}\cdot  \| \Wb(\Zb^\top \Zb+ n\lambda_0 \Ib)^{-1/2} \|_{\operatorname{op}} \| (\Zb^\top \Zb+ n\lambda_0 \Ib)^{-1/2} \theta^\star \|_{\FF} \\
	&\lesssim
	\sqrt{n\lambda_0 }\cdot  \frac{1}{\sqrt{n}}\|\Sb_{\lambda}\|_{\operatorname{op}}^{\frac{1}{2}}\cdot  \| \Wb(\Zb^\top \Zb+ n\lambda_0 \Ib)^{-1/2} \|_{\operatorname{op}} \|\theta^\star \|_{\mathbb{F}}
\end{align*}
By Lemma~\ref{lemma; key inequality 1} and Remark~\ref{remark; useful observation}, since \(\lambda_0 \asymp \frac{\log n}{n}\), we have
\begin{align*}
	\| (\Zb^\top \Zb+ n\lambda_0 \Ib)^{-1/2}\Wb^\top \Wb(\Zb^\top \Zb+ n\lambda_0 \Ib)^{-1/2} \|_{\operatorname{op}}
	&\lesssim  \frac{1}{\gamma}.
\end{align*}
Thus, we obtain
\begin{align*}
	\| \bSigma_{\operatorname{ref}}^{\frac{1}{2}}  \Pcr_b \|^2_{\HH}
	&\lesssim
	n\lambda_0 \cdot  \frac{1}{n}\|\Sb_{\lambda}\|_{\operatorname{op}}\cdot \frac{1}{\gamma} \|\theta^\star \|_{\mathbb{F}}^2 \\
	&\lesssim \frac{\log n}{n \gamma}\|\Sb_{\lambda}\|_{\operatorname{op}}\|\theta^\star \|_{\mathbb{F}}^2 \quad( \text{since $\lambda_0 \asymp \frac{\log n}{n}$}).
\end{align*}

Collecting the bounds above, we obtain
	\begin{equation}\label{equation: general theory MISE bound}
		\begin{aligned}
			\cE_{\HH}(\hat \eta_\lambda) &\lesssim \lambda \|\Sb_{\lambda} \|_{\operatorname{op}} \| \eta^\star\|^2_{\HH}  +\frac{\log n}{n \gamma} \Tr(\Sb_{\lambda}) +\frac{\log n}{n \gamma} \|\Sb_{\lambda}\|_{\operatorname{op}} \|\theta^\star \|_{\mathbb{F}}^2 \\
			& \lesssim  \lambda \|\Sb_{\lambda} \|_{\operatorname{op}} \| h^\star\|^2_{\cH}  +\frac{\log n}{n \gamma} \Tr(\Sb_{\lambda}) +\frac{\log n}{n \gamma} \|\Sb_{\lambda}\|_{\operatorname{op}} \|f^\star \|_{\cF}^2.
		\end{aligned}
	\end{equation}
Note that
\begin{align*}
	\|\Sb_\lambda \|_{\operatorname{op}} \le 1
\end{align*}
and 
\begin{align*}
	\Tr(\Sb_{\lambda})  = \Gamma(\lambda).
\end{align*}
Next, under polynomial eigendecay, we obtain the following \citep{wang2026pseudo}:
\begin{align*}
	\Gamma(\lambda) \lesssim \lambda^{-\frac{1}{2\ell}}.
\end{align*}
For exponential decay, we obtain 
\begin{align*}
	\Gamma(\lambda) \lesssim \log(1/\lambda).
\end{align*}
Finally, under a finite spectrum, we have 
\begin{align*}
	\Gamma(\lambda) \lesssim D.
\end{align*}
Combining this with the result of \cref{equation: general theory MISE bound} leads to the desired result.

\section{Proof of Theorem~\ref{theorem; model selection} (Model Selection)}\label{section: apdx proof model selection}

\subsection{Two Key Propositions for Model Selection}
Recall that when we run Algorithm~\ref{algorithm: model selection}, we set \(\tilde\lambda \asymp \frac{\log n}{n}\).
Under weak overlap, we define the term $\Ocr$ as follows:
\begin{align*}
	\Ocr := \frac{\xi \log n}{n\gamma}\left(1+\|\theta^\star\|_{\FF}^2\right).
\end{align*}
This term, $\Ocr$, will serve as the price of model selection in our analysis.

For any estimator \(\hat h\) and its corresponding Hilbertian element \(\hat{\eta}\), we define the \textbf{in-sample MISE} as the empirical MSE over \(\{\tilde{a}_j\}_{j=1}^n\), namely
\begin{align*}
	\cE^{\operatorname{in}}_{\HH}(\hat\eta) := \frac{1}{n} \sum_{j=1}^n |\phi(\tilde a_j)^\top (\hat \eta -\eta^\star)|^2.
\end{align*}
We present two oracle inequalities.
The first is an oracle inequality for the in-sample MISE defined above.
\begin{proposition}[Oracle inequality for in-sample MISE]\label{proposition; in-sample oracle}
	With probability at least \(1-2n^{-11}\), we have
	\begin{align*}
		\cE_\HH^{\operatorname{in}}(\hat \eta_{\hat{\lambda}} ) \lesssim \min_{\lambda \in \Lambda}\cE_{\HH}^{\operatorname{in}}(\hat \eta_{\lambda} ) + \Ocr.
	\end{align*}
\end{proposition}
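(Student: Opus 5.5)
The approach is the standard held-out validation argument: expand the empirical selection criterion around the truth and control the resulting cross term. Write $\hat\eta_\lambda$ for the candidates built on $\cD_{\text{train}}$ and $\tilde\eta$ for the proxy built on $\cD_{\text{val}}$. Let $\tilde\Ab$ be the design operator of the Monte Carlo points $\{\phi(\tilde a_k)\}_{k=1}^n$, and write $\langle u,v\rangle_n := \frac1n\sum_k \phi(\tilde a_k)^\top u\,\phi(\tilde a_k)^\top v$. Set $\delta := \tilde\eta-\eta^\star$. The criterion then decomposes as
\begin{align*}
\frac1n\sum_{k}(\hat h_\lambda(\tilde a_k)-\tilde h(\tilde a_k))^2 = \cE^{\operatorname{in}}_\HH(\hat\eta_\lambda) - 2\langle \hat\eta_\lambda-\eta^\star,\delta\rangle_n + \|\delta\|_n^2 .
\end{align*}
The last term does not depend on $\lambda$. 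Optimality of $\hat\lambda$ therefore gives, for every $\lambda\in\Lambda$,
\[
\cE^{\operatorname{in}}_\HH(\hat\eta_{\hat\lambda}) \le \cE^{\operatorname{in}}_\HH(\hat\eta_\lambda) + 2\langle \hat\eta_{\hat\lambda}-\hat\eta_\lambda,\delta\rangle_n .
\]
The whole task is to bound the cross term by $\tfrac12\cE^{\operatorname{in}}_\HH(\hat\eta_{\hat\lambda})+\tfrac12\cE^{\operatorname{in}}_\HH(\hat\eta_\lambda)+C\Ocr$, after which the claim follows by rearranging and taking the minimum over $\lambda$.

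Next I would decompose the proxy error exactly as in the proof of Theorem~\ref{theorem; MISE}, now on $\cD_{\text{val}}$ with regularizer $\tilde\lambda\asymp \log n/n$. This gives $\delta=\tilde\Bcr+\tilde\Ccr+\tilde\Pcr_b+\tilde\Pcr_v$.

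\textbf{Deterministic-size part.} The terms $\tilde\Bcr,\tilde\Ccr,\tilde\Pcr_b$ have small norm. From the bounds of Theorem~\ref{theorem; MISE} with $\|\Sb_{\tilde\lambda}\|_{\operatorname{op}}\le1$, their $\bSigma_{\operatorname{ref}}$-norms are $\lesssim \frac{\log n}{n\gamma}(1+\|\theta^\star\|_\FF^2)\asymp\Ocr$. These bounds must transfer to the empirical norm $\|\cdot\|_n$. I would do this with Lemma~\ref{lemma; trace class concentration bounded} applied to $\tilde\Ab^\top\tilde\Ab$, which gives $\frac1n\tilde\Ab^\top\tilde\Ab\preceq c(\bSigma_{\operatorname{ref}}+\frac{\log n}{n}\Ib)$. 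The extra $\frac{\log n}{n}\|\cdot\|_\HH^2$ piece is controlled because each of these terms has $\HH$-norm $O(1+\|\theta^\star\|_\FF)/\sqrt{\gamma}$-ish, using the $(\Ab^\top\Ab+n\tilde\lambda\Ib)^{-1}$ factors. Cauchy--Schwarz and AM--GM then absorb these pieces:
\[
2|\langle u,\delta_{\det}\rangle_n|\le \tfrac14\|u\|_n^2+4\|\delta_{\det}\|_n^2,
\qquad
\|u\|_n^2\le 2\cE^{\operatorname{in}}_\HH(\hat\eta_{\hat\lambda})+2\cE^{\operatorname{in}}_\HH(\hat\eta_\lambda),
\]
where $u=\hat\eta_{\hat\lambda}-\hat\eta_\lambda$.

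\textbf{Noise part.} The term $\tilde\Pcr_v$ is the main obstacle. It has norm of order $\Gamma(\tilde\lambda)/(n\gamma)$, which is not $O(\Ocr)$, so it cannot be handled by Cauchy--Schwarz. Instead I would exploit that $\tilde\Pcr_v=\Tb\bm{\tilde\varepsilon}$ is linear in the validation noise and independent of $\cD_{\text{train}}$. For each fixed $\lambda,\lambda'\in\Lambda$, condition on everything except $\bm{\tilde\varepsilon}$. Then $\langle \hat\eta_\lambda-\hat\eta_{\lambda'},\tilde\Pcr_v\rangle_n$ is sub-Gaussian with variance proxy $\sigma^2\|\Tb^\top \tfrac1n\tilde\Ab^\top\tilde\Ab(\hat\eta_\lambda-\hat\eta_{\lambda'})\|^2$. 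Reusing the chain from the $\Pcr_v$ bound, this proxy should be at most $\frac{c}{n\gamma}\|\hat\eta_\lambda-\hat\eta_{\lambda'}\|_n^2$. The steps are:
\begin{itemize}
\item Lemma~\ref{lemma; key inequality 1} gives the factor $1/\gamma$.
\item The operator $(\tilde\Zb^\top\tilde\Zb+n\lambda_0\Ib)^{-1}\tilde\Zb^\top\tilde\Zb(\cdot)^{-1}$ is bounded.
\item The second-stage ridge operator has norm $\le 1$ after the change of measure.
\end{itemize}
The step I expect to be delicate is bounding $\|(\tilde\Ab_{\text{val}}^\top\tilde\Ab_{\text{val}}+n\tilde\lambda\Ib)^{-1}\tfrac1n\tilde\Ab^\top\tilde\Ab\,v\|$ by $\|v\|_n$ up to constants. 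It needs both concentration events and $n\tilde\lambda\gtrsim\log n$, via Remark~\ref{remark; useful observation}.

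A sub-Gaussian tail bound plus a union bound over the $L^2=O(\log^2 n)$ pairs then gives
\[
|\langle u,\tilde\Pcr_v\rangle_n|\lesssim \|u\|_n\sqrt{\frac{\log n}{n\gamma}} \le \tfrac18\|u\|_n^2 + C\frac{\log n}{n\gamma},
\]
which is again absorbed into $\Ocr$. The good events from both halves and the Monte Carlo concentration fail with probability $O(n^{-11})$ in total, matching the stated probability $1-2n^{-11}$.
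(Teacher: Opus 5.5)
Your proposal is correct and is essentially the paper's argument with the cited black box (Lemma~\ref{lemma; loss model selection}, Theorem~5.2 of Wang) unpacked. The paper feeds that lemma exactly your two ingredients: the bound $\|\Ub\tilde\Pcr_v\|_{\psi_2}\lesssim\sigma/\sqrt{\gamma}$, which is your variance-proxy computation via $\|\Ub\Tb\|_{\operatorname{op}}\lesssim\gamma^{-1/2}$, and the bound $\|\Ub(\tilde\Bcr+\tilde\Ccr+\tilde\Pcr_b)\|_2^2\lesssim(\|\eta^\star\|_\HH^2+\gamma^{-1}\|\theta^\star\|_\FF^2)\log n$; your expansion, AM--GM absorption and union bound over $\Lambda$ are just the proof of that lemma. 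One bookkeeping fix: as written, each of your two AM--GM steps already contributes $\tfrac12\cE^{\operatorname{in}}_\HH(\hat\eta_{\hat\lambda})+\tfrac12\cE^{\operatorname{in}}_\HH(\hat\eta_\lambda)$, so together they make the final rearrangement vacuous; take smaller weights (e.g.\ $\tfrac1{16}\|u\|_n^2$ in each) so that the total coefficient on $\cE^{\operatorname{in}}_\HH(\hat\eta_{\hat\lambda})$ is below $1$.
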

This proposition provides a performance guarantee for our model selection procedure with respect to the in-sample MISE.
Next, we provide a key inequality for connecting the in-sample MISE and the population MISE.
\begin{proposition}[Connecting in-sample MISE and population MISE]\label{proposition; in sample to population MSE}
	With probability at least \(1-2n^{-11}\), the following holds uniformly for all \(\lambda\in\Lambda\):
	\begin{align*}
		\cE^{\operatorname{in}}_{\HH}(\hat\eta_\lambda)&\lesssim \cE_{\HH}(\hat\eta_\lambda)+ \Ocr \\
		\cE_{\HH}(\hat\eta_\lambda)&\lesssim \cE^{\operatorname{in}}_{\HH}(\hat\eta_\lambda)+ \Ocr .\\
	\end{align*}
\end{proposition}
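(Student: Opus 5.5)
The plan is to reduce Proposition~\ref{proposition; in sample to population MSE} to a uniform two-sided comparison between the empirical second moment at the Monte Carlo points $\{\tilde a_k\}_{k=1}^n$ and $\bSigma_{\operatorname{ref}}$, combined with an a priori bound on $\|\hat\eta_\lambda-\eta^\star\|_\HH$ for every $\lambda\in\Lambda$. Write $\hat{\bSigma}_{\operatorname{ref}} := \frac1n\sum_{k=1}^n \phi(\tilde a_k)\otimes\phi(\tilde a_k)$, so that $\cE^{\operatorname{in}}_\HH(\hat\eta)=\|\hat\eta-\eta^\star\|^2_{\hat{\bSigma}_{\operatorname{ref}}}$ and $\cE_\HH(\hat\eta)=\|\hat\eta-\eta^\star\|^2_{\bSigma_{\operatorname{ref}}}$. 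The points $\tilde a_k$ are drawn independently of $\cD_{\text{train}}$, on which every $\hat\eta_\lambda$ depends. Hence we may condition on $\cD_{\text{train}}$ and $\{a'_j\}$ and treat $\hat\eta_\lambda-\eta^\star$ as fixed.

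\textbf{Step 1: operator concentration.} By Lemma~\ref{lemma; trace class concentration bounded}, exactly as in the construction of $\Ecr_3$, with probability at least $1-n^{-11}$,
\[
\tfrac1c\big(\hat{\bSigma}_{\operatorname{ref}}+\tfrac{\log n}{n}\Ib\big)\preceq \bSigma_{\operatorname{ref}}+\tfrac{\log n}{n}\Ib\preceq c\big(\hat{\bSigma}_{\operatorname{ref}}+\tfrac{\log n}{n}\Ib\big).
\]
This event involves only the $\tilde a_k$, so it holds simultaneously for all $\lambda\in\Lambda$ with no union bound needed. Applying it to $v=\hat\eta_\lambda-\eta^\star$ gives
\[
\cE^{\operatorname{in}}_\HH(\hat\eta_\lambda)\lesssim \cE_\HH(\hat\eta_\lambda)+\tfrac{\log n}{n}\|\hat\eta_\lambda-\eta^\star\|_\HH^2,
\]
and the reverse inequality holds as well. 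It therefore suffices to show that, with probability at least $1-n^{-11}$, uniformly over $\lambda\in\Lambda$,
\[
\tfrac{\log n}{n}\|\hat\eta_\lambda-\eta^\star\|_\HH^2\lesssim \Ocr,
\qquad\text{i.e.}\qquad
\|\hat\eta_\lambda-\eta^\star\|_\HH^2\lesssim \tfrac{1}{\gamma}(1+\|\theta^\star\|_\FF^2).
\]

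\textbf{Step 2: RKHS-norm bound.} I will reuse the decomposition $\hat\eta_\lambda-\eta^\star=\Bcr+\Ccr+\Pcr_v+\Pcr_b$ from Appendix~\ref{section: apdx decomposition}, now measuring each term in $\|\cdot\|_\HH$ instead of $\|\bSigma_{\operatorname{ref}}^{1/2}\cdot\|_\HH$. Since $n\lambda\ge\log n$ on the grid, each piece should be controlled as follows.
\begin{itemize}
\item For the ridge bias, $\|\Ccr\|_\HH\le\|\eta^\star\|_\HH$, which is bounded by Assumption~\ref{ass:bounded}.
\item For the pseudo-outcome bias, $\|\Bcr\|_\HH^2\le \|(\Ab^\top\Ab+n\lambda\Ib)^{-1}\|_{\operatorname{op}}\,\|\bb\|_2^2\lesssim \frac{1}{n\lambda}\log n\,\|\theta^\star\|_\FF^2\lesssim\|\theta^\star\|_\FF^2$.
\item For the propagated bias, repeat the $\Pcr_b$ argument, replacing $\|\bSigma_{\operatorname{ref}}^{1/2}(\Ab^\top\Ab+n\lambda\Ib)^{-1/2}\|_{\operatorname{op}}$ by $(n\lambda)^{-1/2}$. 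This gives $\|\Pcr_b\|_\HH^2\lesssim \frac{\lambda_0}{\lambda\gamma}\|\theta^\star\|_\FF^2\lesssim \gamma^{-1}\|\theta^\star\|_\FF^2$.
\item For the propagated variance, rerun the Hanson--Wright calculation for $\Pcr_v$, conditional on the good event, with $\Ib$ in place of $\bSigma_{\operatorname{ref}}$. Lemma~\ref{lemma; key inequality 1} then yields $\|\Pcr_v\|_\HH^2\lesssim\frac{\log n}{\gamma}\Tr\big((\Ab^\top\Ab+n\lambda\Ib)^{-1}\Ab^\top\Ab(\Ab^\top\Ab+n\lambda\Ib)^{-1}\big)\lesssim \frac{\log n}{\gamma\, n\lambda}\Gamma(\lambda)$.
\end{itemize}
A union bound over the $L=O(\log n)$ grid values keeps the failure probability at $O(n^{-11}\log n)$, which can be absorbed by adjusting constants in the tail bounds.

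\textbf{Main obstacle.} The hard part is the $\Pcr_v$ term. The crude bound $\Gamma(\lambda)\le \xi/\lambda$ only gives $\frac{\log n}{\gamma n\lambda^2}$, which is not $O(1/\gamma)$ for the smallest grid values $\lambda\asymp\log n/n$. I would first try a direct operator-norm argument instead of the trace. Since $\Pcr_v$ is a linear image of a sub-Gaussian vector, I would bound $\|\Pcr_v\|_\HH$ by the sub-Gaussian norm of $\langle u,\Pcr_v\rangle$ over unit vectors $u$. If that fails, I would avoid needing the $\HH$-norm uniformly. The idea is to split $\hat\eta_\lambda-\eta^\star$ into components along $\bSigma_{\operatorname{ref}}$'s top eigenspace and its tail, and use the concentration of Step~1 in the regularized form with shift $\lambda$ rather than $\log n/n$. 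That gives $\cE^{\operatorname{in}}\lesssim\cE+\lambda\|\cdot\|_\HH^2$ with $\lambda\|\Pcr_v\|_\HH^2\lesssim\frac{\log n}{\gamma n}\Gamma(\lambda)$. This last quantity is already of the order of the in-$\Ocr$-plus-oracle error, and is dominated by $\cE_\HH(\hat\eta_\lambda)$-type bounds from \cref{equation: general theory MISE bound} up to $\Ocr$. Accepting this slightly weaker form would still suffice for Theorem~\ref{theorem; model selection}.
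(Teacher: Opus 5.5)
\textbf{Where it breaks.} Your Step~1 is correct and uniform in $\lambda$. It reduces the proposition to the claim $\|\hat\eta_\lambda-\eta^\star\|_\HH^2\lesssim\gamma^{-1}(1+\|\theta^\star\|_\FF^2)$, up to logarithms, for every $\lambda\in\Lambda$. That claim is false for the propagated-variance part. Your bounds on $\Bcr$, $\Ccr$, $\Pcr_b$ are fine; they are exactly what the paper uses to bound $\|\EE_{\bm\varepsilon_1}[\hat\eta_\lambda]-\eta^\star\|_\HH$.

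Write $\Pcr_v=M\bm\varepsilon$ with $M:=(\Ab^\top\Ab+n\lambda\Ib)^{-1}\Ab^\top\Wb(\Zb^\top\Zb+n\lambda_0\Ib)^{-1}\Zb^\top$. The trace bound is not an artifact of crude estimation. For Gaussian noise, $\EE\|\Pcr_v\|_\HH^2=\sigma^2\Tr(MM^\top)$ conditionally on the design. In simple unconfounded instances with $\rho_j\asymp j^{-2\ell}$, this is of order $\Gamma(\lambda)/(n\lambda)$ at $\lambda\asymp\lambda_0$. At the bottom of the grid, the slack $\frac{\log n}{n}\|\Pcr_v\|_\HH^2$ is then of order $\Gamma(\log n/n)/n\approx n^{\frac{1}{2\ell}-1}$ up to logarithms. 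This is polynomially larger than both $\Ocr$ and $n_{\operatorname{eff}}^{-2\ell/(1+2\ell)}$. Your route does go through, up to a logarithm, when $\Gamma(\lambda)\lesssim\log n$ on the grid (exponential decay or finite rank); it fails in case (a).

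Neither of your remedies repairs this.
\begin{itemize}
\item The first conflates $\sup_{\|u\|_\HH=1}\|\langle u,\Pcr_v\rangle\|_{\psi_2}\lesssim\sigma\|M\|_{\operatorname{op}}$ with $\|\Pcr_v\|_\HH=\sup_{\|u\|_\HH=1}\langle u,\Pcr_v\rangle$. The norm of a random vector is governed by the trace of its covariance, not by its operator norm.
\item The second gives a $\lambda$-dependent slack of order $\lambda\|\eta^\star\|_\HH^2+\frac{\log n}{\gamma n}(\Gamma(\lambda)+\|\theta^\star\|_\FF^2)$. That is not the stated proposition, and it does not suffice for Theorem~\ref{theorem; model selection}. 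There, $\cE_\HH\lesssim\cE^{\operatorname{in}}_\HH+\text{slack}$ must be applied at the data-driven $\hat\lambda$. At $\hat\lambda$ this slack is uncontrolled: as large as $\log n\,\|\eta^\star\|_\HH^2$ at the top of the grid, and of order $\Gamma(\log n/n)\log n/(\gamma n)$ at the bottom. Nothing lower-bounds $\cE^{\operatorname{in}}_\HH(\hat\eta_{\hat\lambda})$ to absorb it.
\end{itemize}

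\textbf{The missing idea.} Never pass through the $\HH$-norm of the noise part. Work instead with the scalar function $f_\lambda(a):=\langle\phi(a),\hat\eta_\lambda-\eta^\star\rangle_\HH$, which is independent of the $\tilde a_k$.
\begin{itemize}
\item For each fixed $a$, the noise component of $f_\lambda(a)$ is sub-Gaussian over $\bm\varepsilon_1$, with $\psi_2$-norm at most $\sqrt\xi\,\sigma\|M\|_{\operatorname{op}}\lesssim\sigma\sqrt{\xi/\gamma}$. Only the operator norm enters here.
\item The mean component is at most $\sqrt\xi\,\|\EE_{\bm\varepsilon_1}[\hat\eta_\lambda]-\eta^\star\|_\HH$, which your bounds on $\Bcr,\Ccr,\Pcr_b$ control.
\item Together these give $\PP(|f_\lambda(\tilde a_1)|>r_n)\le n^{-16}$ and $\EE|f_\lambda(\tilde a_1)|^4\le U_n^4$, with $r_n^2\asymp\xi\bigl(\sigma^2\gamma^{-1}\log n+\|\eta^\star\|_\HH^2+\gamma^{-1}\|\theta^\star\|_\FF^2\bigr)$, which does not depend on $\lambda$.
\item The truncation-based scalar comparison of $\|f_\lambda\|_n$ and $\|f_\lambda\|_{L^2(\cP_{\operatorname{ref}})}$ in Lemma~\ref{lemma; in sample to population second moment RKHS} then gives the $\lambda$-free slack $r_n^2\log n/n\lesssim\Ocr$ up to logarithms. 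A union bound over $\Lambda$ finishes.
\end{itemize}
The point is that $\sup_{a\in\cA}|f_\lambda(a)|$ can be as large as $\sqrt\xi\,\|\hat\eta_\lambda-\eta^\star\|_\HH$. At $n$ independent test points, however, the values are controlled by per-point sub-Gaussianity plus a union bound. A uniform-over-$v$ operator inequality like your Step~1 cannot exploit this.
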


Theorem~\ref{theorem; model selection} follows by combining Proposition~\ref{proposition; in-sample oracle} and Proposition~\ref{proposition; in sample to population MSE}.
We aim to apply Theorem~\ref{theorem; MISE}.
By our design of \(\Lambda\), there exists \(\lambda^\star \in \Lambda\) that achieves the nearly optimal result of Theorem~\ref{theorem; MISE}.
The additional model-selection price \(\Ocr\) contributes only
\[
	\widetilde{\cO}\!\left(\frac{1+\|\theta^\star\|_{\FF}^2}{n\gamma}\right),
\]
and is therefore absorbed into the rates in Theorem~\ref{theorem; model selection}.
Then, we obtain the desired result directly.
Thus, the remaining sections aim to prove the two propositions.

\subsection{Notation and Hilbertian Formulation for Algorithm~\ref{algorithm: model selection}}
We introduce notation for the proof in Table~\ref{table: notation 2}. 
\begin{table}[H]
    \centering
    \caption{Detailed Summary of Notations and Operators}
    \label{table: notation 2}
    \renewcommand{\arraystretch}{1.6} 
    \begin{tabular}{l p{10cm}}
        \toprule
        \textbf{Notation} & \textbf{Definition / Description} \\
        \midrule
        \(\cD_{\text{train}}\) & \(\{x_{1i}, a_{1i}, y_{1i}\}_{i=1}^{n_1}\) \\
        \(\cD_{\text{val}}\) & \(\{x_{2i}, a_{2i}, y_{2i}\}_{i=1}^{n_2}\) \\
        \(\Zb_1\) & Design operator of \(\{\psi(x_{1i},a_{1i})\}_{i=1}^{n_1}\) in \(\cD_{\text{train}}\) \\
        \(\Zb_2\) & Design operator of \(\{\psi(x_{2i},a_{2i})\}_{i=1}^{n_2}\) in \(\cD_{\text{val}}\) \\
        \midrule
        \(\{a_{1j}'\}_{j=1}^{n_1}\) & Sampled treatments from \(\cP_{\operatorname{samp}}\) for estimating \(\hat h_\lambda\) \\
        \(\{a_{2j}'\}_{j=1}^{n_2}\) & Sampled treatments from \(\cP_{\operatorname{samp}}\) for estimating \(\tilde h\) \\
        \(\Ab_1\) & Design operator of \(\{\phi(a_{1j}')\}_{j=1}^{n_1}\) \\
        \(\Ab_2\) & Design operator of \(\{\phi(a_{2j}')\}_{j=1}^{n_2}\) \\
        \midrule
        \(\Yb_1\) & Vector version of \(\{y_{1i}\}_{i=1}^{n_1}\) in \(\cD_{\text{train}}\) \\
        \(\Yb_2\) & Vector version of \(\{y_{2i}\}_{i=1}^{n_2}\) in \(\cD_{\text{val}}\) \\
        \midrule
        \(\bar \psi(a)\) & Feature mean defined as \(\EE_{x \sim \cP_{X}}[\psi(x,a)]\) \\
        \(\hat{\bar{\psi}}_1(a)\) & Defined as \(\frac{1}{n_1}\sum_{i=1}^{n_1} \psi(x_{1i},a)\) \\
        \(\Wb_1\) & Design operator of \(\{\hat{\bar \psi}_1({a}_{1j}')\}_{j=1}^{n_1}\) in \(\cD_{\text{train}}\) \\
        \(\hat{\bar{\psi}}_2(a)\) & Defined as \(\frac{1}{n_2}\sum_{i=1}^{n_2} \psi(x_{2i},a)\) \\
        \(\Wb_2\) & Design operator of \(\{\hat{\bar \psi}_2({a}_{2j}')\}_{j=1}^{n_2}\) in \(\cD_{\text{val}}\) \\
        \midrule
        \(\Ub\) & Design operator of \(\{\phi(\tilde{a}_j)\}_{j=1}^n\) \\
        \(\hat{\theta}\) & Hilbertian element corresponding to \(\hat{f}\) (trained on \(\cD_{\text{train}}\)) \\
        \(\hat{\eta}_{\lambda}\) & Hilbertian element corresponding to \(\hat{h}_{\lambda}\) (trained on \(\cD_{\text{train}}\)) \\
        \(\tilde{\eta}\) & Hilbertian element corresponding to \(\tilde{h}\) (trained on \(\cD_{\text{val}}\)) \\
        \(\Sb_{\lambda}\) & Regularized second-moment operator \(\bSigma_{\operatorname{ref}}^{1/2}(\bSigma_{\operatorname{ref}}+\lambda \Ib)^{-1}\bSigma_{\operatorname{ref}}^{1/2}\) \\
        \bottomrule
    \end{tabular}
\end{table}

\paragraph{Hilbertian Formulation for \(\tilde h\).}
We also rewrite the regression used to obtain \(\tilde{h}\) in terms of Hilbertian elements.
To obtain \(\tilde{h}\), we run Algorithm~\ref{algorithm: main} with main regularizer \(\tilde{\lambda} \asymp \log n/n\), using the dataset \(\cD_{\text{val}}\).
Through the first nuisance estimation, we obtain \(\tilde{f}\), and denote its Hilbertian element as \(\tilde{\theta}\). 
Analogous to the analysis in Appendix~\ref{section: apdx groundwork}, \(\tilde{\theta}\) admits the explicit representation
\begin{align*}
	\tilde{\theta} := (\Zb_2^\top \Zb_2 + n_2 
	\lambda_0 \Ib)^{-1} \Zb_2^\top \Yb_2.
\end{align*}
Using this, the Hilbertian element \(\tilde{\eta}\) of \(\tilde{h}\) is obtained as follows:
\begin{align*}
	\tilde{\eta} := (\Ab_2^\top \Ab_2 + n_2\tilde \lambda \Ib)^{-1}\Ab_2^\top \Wb_2 \tilde\theta.
\end{align*}

\paragraph{Hilbertian Formulation for \(\hat h_\lambda\).}
Analogous to the analysis in Appendix~\ref{section: apdx groundwork}, and allowing for a slight abuse of notation, we obtain the expression for $\hat{\eta}_\lambda$, the Hilbertian element corresponding to $\hat{h}_{\lambda}$:
\begin{align*}
	\hat{\eta}_\lambda = (\Ab_1^\top \Ab_1 + n_1\lambda \Ib)^{-1} \Ab_1^\top \Wb_1 \hat{\theta},
\end{align*}
where $\hat{\theta}$ (corresponding Hilbertian element of $\hat f$) is given by
\begin{align*}
	\hat\theta = (\Zb_1^\top \Zb_1 + n_1\lambda_0 \Ib)^{-1} \Zb_1^\top \Yb_1.
\end{align*}

The definitions of the second moments $\bSigma_{\operatorname{ref}}$ and $\bSigma_{\operatorname{samp}}$ are the same as in Appendix~\ref{section: apdx groundwork}.

\subsection{Good Events for Model Selection}

We previously defined a good event for the dataset \(\cD\) in Appendix~\ref{subsection: apdx good event general theory}.
By applying the same argument, we define a good event for the model selection proof.
Recall that \(n_1 = n_2 = \frac{n}{2}\).
By the same logic as in Appendix~\ref{subsection: apdx good event general theory}, the following bounds hold, each with probability at least \(1-n^{-11}\), for all \(\{a_{1j}'\}_{j=1}^{n_1}\) and \(\{a_{2j}'\}_{j=1}^{n_2}\):
\begin{equation}\label{equation: good event model selection 1}
	\begin{aligned}
		&\big|\frac{1}{n_1}\sum_{i=1}^{n_1} f^\star(x_{1i},a_{1j}') - \EE_{x \sim \cP_{X}}[f^\star(x, a_{1j}')]\big| \lesssim \frac{\xi\|\theta^\star\|_{\mathbb{F}}\sqrt{\log n}}{\sqrt{n}} \\ 
		&\big|\frac{1}{n_2}\sum_{i=1}^{n_2} f^\star(x_{2i},a_{2j}') - \EE_{x \sim \cP_{X}}[f^\star(x, a_{2j}')]\big| \lesssim \frac{\xi\|\theta^\star\|_{\mathbb{F}}\sqrt{\log n}}{\sqrt{n}} 
	\end{aligned}    
\end{equation}

Again, by the same logic and Lemma~\ref{lemma; trace class concentration bounded}, applied separately to the sample sizes \(n_1\) and \(n_2\), we first obtain shifts of the form \(+\log(n_k/\delta)\Ib\) (for \(k\in\{1,2\}\)).
Taking \(\delta=n^{-11}\), we have
\[
\log(n_k/\delta)=\log((n/2)n^{11})=12\log n-\log 2 \asymp \log n \asymp \log n_k.
\]
Hence, after absorbing absolute constants (equivalently via Remark~\ref{remark; useful observation}), we may write the shifts as \(+\log n_k\Ib\). Therefore, the following inequalities hold for some absolute constant \(c>1\), each with probability at least \(1-n^{-11}\):
\begin{equation}\label{equation: good event model selection 2}
	\begin{aligned}
		& \frac{1}{c}(\Ab_1^\top \Ab_1 + \log n_1 \Ib) \preceq n_1 \bSigma_{\operatorname{samp}} + \log n_1 \Ib  \preceq c(\Ab_1^\top \Ab_1 + \log n_1 \Ib)  \\
		& \frac{1}{c}(\Ab_2^\top \Ab_2 + \log n_2 \Ib ) \preceq n_2 \bSigma_{\operatorname{samp}} + \log n_2 \Ib  \preceq c(\Ab_2^\top \Ab_2 + \log n_2 \Ib ) \\
		& \frac{1}{c}(\Zb_1^\top \Zb_1 + \log n_1 \Ib ) \preceq n_1 \EE_{(x, a) \sim \cP_{X,A}}[\psi(x, a)\psi(x, a)^\top] + \log n_1 \Ib  \preceq c(\Zb_1^\top \Zb_1 + \log n_1 \Ib ) \\
		& \frac{1}{c}(\Zb_2^\top \Zb_2 + \log n_2 \Ib ) \preceq n_2 \EE_{(x, a) \sim \cP_{X,A}}[\psi(x, a)\psi(x, a)^\top] + \log n_2 \Ib  \preceq c(\Zb_2^\top \Zb_2 + \log n_2 \Ib ).
	\end{aligned}    
\end{equation}

In Appendix~\ref{section: apdx proof general theory}, we defined \(\Qb(a)\) as
\begin{align*}
	\Qb(a) = \EE_{x \sim \cP_{X}}[\psi(x, a)\psi(x, a)^\top].
\end{align*}
We also define \(\bar \Qb_1\) and \(\bar \Qb_2\) as follows: 
\begin{align*}
	\bar \Qb_1 := \frac{1}{n_1} \sum_{j=1}^{n_1} \Qb(a_{1j}'), \quad
	\bar \Qb_2 := \frac{1}{n_2} \sum_{j=1}^{n_2} \Qb(a_{2j}')
\end{align*}
Then, by the same argument as in Appendix~\ref{section: apdx groundwork}, for some constant \(c>1\), the following holds with probability at least \(1-n^{-11}\):
\begin{align*}
		\bar \Qb_1 \preceq c( \Qb + \frac{\log n_1 }{n_1}\Ib), \quad
		\bar \Qb_2 \preceq c(\Qb + \frac{\log n_2}{n_2} \Ib).
\end{align*}
On the event where the two inequalities above hold, applying the same argument as in Lemma~\ref{lemma; key inequality 1} yields
\begin{equation}\label{equation: good event model selection 3}
	\begin{aligned}
		&\frac{1}{n_1} \Wb_1^\top{\Wb_1} \preceq \frac{c}{\gamma}\times \frac{1}{n_1}(\Zb_1^\top \Zb_1 + \log n_1 \Ib) \\
		&\frac{1}{n_2} \Wb_2^\top{\Wb_2} \preceq \frac{c}{\gamma} \times \frac{1}{n_2}(\Zb_2^\top \Zb_2 + \log n_2 \Ib)
	\end{aligned}    
\end{equation}

Finally, we present concentration bounds for \(\Ub^\top \Ub\).
Since \(\bSigma_{\operatorname{ref}} = \EE_{a \sim \cP_{\operatorname{ref}}}[\phi(a)\phi(a)^\top]\), by applying Lemma~\ref{lemma; trace class concentration bounded}, 
with probability at least \(1-n^{-11}\), we have 
\begin{align}\label{equation: good event model selection 4}
	\frac{1}{c}( \Ub^\top \Ub  + \log n \Ib) \preceq n \bSigma_{\operatorname{ref}} + \log n \Ib \preceq  c(\Ub^\top \Ub  + \log n \Ib).
\end{align}
Here, the shift remains \(+\log n\Ib\) because \(\Ub\) is built from \(n\) pseudo-test samples \(\{\tilde a_j\}_{j=1}^n\).
In subsequent bounds, whenever needed, we use the scale relations above and Remark~\ref{remark; useful observation} to replace the \(+\log n_1\Ib\) and \(+\log n_2\Ib\) shifts by \(+n_1\lambda\Ib\), \(+n_1\lambda_0\Ib\), \(+n_2\lambda_0\Ib\), or \(+n_2\tilde\lambda\Ib\). Since \(n_1=n_2=n/2\), we also have \(\log n_1 \asymp \log n_2 \asymp \log n\), so these shifts are interchangeable with \(+\log n\Ib\) up to absolute constants.

\begin{definition}[Good event]
	We define the good event \(\Fcr_{\operatorname{good}}\) as the event where \cref{equation: good event model selection 1,equation: good event model selection 2,equation: good event model selection 3,equation: good event model selection 4} hold.
\end{definition}
Thus, the following lemma follows directly.
\begin{lemma}
	We have
	\begin{align*}
		\PP[\Fcr_{\operatorname{good}}] \ge  1-\frac{1}{n^{10}}.
	\end{align*}
\end{lemma}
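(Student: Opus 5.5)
The lemma asserts $\PP[\Fcr_{\operatorname{good}}] \ge 1 - n^{-10}$, and I would prove it by a union bound over the finitely many constituent events. Each constituent event has already been shown, earlier in this subsection, to fail with probability at most $n^{-11}$; since their number is an absolute constant much smaller than $n$, the total failure probability is at most $C n^{-11} \le n^{-10}$ for $n \ge C$.

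\textbf{Enumerating the events.} The two Hoeffding-type bounds in \cref{equation: good event model selection 1} are obtained exactly as for $\Ecr_1$. The $a'_{kj}$ are drawn independently of the covariates, and $|f^\star|\le \xi\|\theta^\star\|_{\FF}$. We apply Hoeffding conditionally on each $a'_{kj}$ and take a union bound over $j\in[n_k]$, adjusting the constant in the $\sqrt{\log n}$ so that each failure probability is at most $n^{-11}$; this uses $n_k = n/2$, so $1/\sqrt{n_k}\asymp 1/\sqrt{n}$. The four operator sandwiches in \cref{equation: good event model selection 2} and the one in \cref{equation: good event model selection 4} each follow from Lemma~\ref{lemma; trace class concentration bounded} with $\delta = n^{-11}$. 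Here $\log(n_k/\delta)\asymp\log n$, so the shift can be written as $+\log n_k\Ib$ after absorbing constants via Remark~\ref{remark; useful observation}.

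\textbf{Treating \cref{equation: good event model selection 3}.} This inequality is not an independent random event. It holds deterministically on the intersection of three conditions: the per-$j$ sandwiches $\widehat\Qb_k(a'_{kj})$ versus $\Qb(a'_{kj})$ (the analogue of \cref{equation: good event 2-1}); the bounds $\bar\Qb_k \preceq c(\Qb+\tfrac{\log n_k}{n_k}\Ib)$ from the matrix Bernstein argument of the lemma proving \cref{equation: good event 2-2}; and the $\Zb_k$ sandwich. The argument of Lemma~\ref{lemma; key inequality 1}, which uses Jensen together with relative overlap, then gives \cref{equation: good event model selection 3}. I would therefore add these auxiliary events, each of probability at least $1-n^{-11}$ (two or so per split), to the union bound. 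The per-$j$ sandwich holds conditionally on each $a'_{kj}$ because of independence from $\{x_{ki}\}$, and a union bound over $j$ is absorbed by inflating the constant in the $\log n$ shift.

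\textbf{Conclusion and main obstacle.} Counting gives roughly $2+4+1+4+2$, i.e. at most about $15$ events, so $\PP[\Fcr_{\operatorname{good}}^c]\le 15 n^{-11}\le n^{-10}$ once $n\ge 15$. Small $n$ is handled by the hidden constants in the theorem, or by noting the claim is only used asymptotically. The only delicate point is confirming that the union bounds over $j$ inside each event still leave failure probability $n^{-11}$. I would handle this by invoking Lemma~\ref{lemma; trace class concentration bounded} at level $\delta = n^{-12}$; this only changes $\log n$ by a constant factor, which Remark~\ref{remark; useful observation} absorbs.
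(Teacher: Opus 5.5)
Your proposal is correct and matches the paper's argument. The paper's proof is simply a union bound over the constituent high-probability events, each failing with probability at most $n^{-11}$. You also make explicit a point the paper leaves implicit: \cref{equation: good event model selection 3} is a deterministic consequence of the per-$j$ sandwiches, the $\bar{\Qb}_k$ bounds, and the $\Zb_k$ sandwiches, so those auxiliary events must be included in the union bound.
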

\begin{proof}
	The proof is straightforward from the previous observations by taking a union bound.
\end{proof}

\subsection{Proof of Proposition~\ref{proposition; in-sample oracle}}

Our goal is to apply Lemma~\ref{lemma; loss model selection}.
We analyze the quality of the pseudo-test outcomes  defined as 
\begin{align*}
	\tilde \mb := (\tilde{h}(\tilde{a}_j))_{j=1}^n 
	=\Ub \tilde{\eta}= \Ub (\Ab_2^\top \Ab_2 + n_2\tilde \lambda \Ib)^{-1}\Ab_2^\top \Wb_2 \tilde\theta.
\end{align*}
Recall that we set
\begin{align*}
	\tilde{\theta} :=(\Zb_2^\top \Zb_2 + n_2\lambda_0 \Ib)^{-1} \Zb_2^\top \Yb_2.
\end{align*}
We also define 
\begin{align*}
	\mb^\star := (h^\star(\tilde{a}_j))_{j=1}^n = \Ub \eta^\star.
\end{align*}
By the same argument as in Appendix~\ref{section: apdx decomposition}, the estimation error of \(\tilde\eta\) admits the decomposition
\begin{align*}
	\tilde\eta - \eta^\star  &=
	(\Ab_2^\top \Ab_2 + n_2\tilde \lambda \Ib)^{-1}\Ab_2^\top \bb_2 +  (\Ab_2^\top \Ab_2 + n_2\tilde \lambda \Ib)^{-1} \Ab_2^\top \Wb_2 (\tilde \theta -\theta^\star) - n_2\tilde\lambda \cdot  (\Ab_2^\top \Ab_2 + n_2\tilde \lambda\Ib)^{-1} \eta^\star \\
	&:= \tilde \Bcr + \tilde \Pcr + \tilde \Ccr
\end{align*}
and this leads to
\begin{equation}\label{equation: decomposition model selection}
	\begin{aligned}
		\tilde\mb - \mb^\star = \Ub(\tilde\eta - \eta^\star ) = 
		\Ub \tilde\Bcr + \Ub \tilde\Pcr + \Ub \tilde\Ccr.
	\end{aligned}
\end{equation}

\begin{lemma}
	Under the event \(\Fcr_{\operatorname{good}}\), 
	\begin{align*}
		\| \tilde \mb - \EE_{\bm \varepsilon_2}[{\tilde \mb}] \|_{\psi_2}^2 &\lesssim \frac{1}{\gamma}\sigma^2.
	\end{align*}
	Here and throughout this proof, \(\EE_{\bm{\varepsilon}_2}\) denotes conditional expectation with respect to the validation noise vector \(\bm{\varepsilon}_2\), holding all design variables and query points fixed.
\end{lemma}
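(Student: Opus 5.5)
Conditional on all design variables (the covariates and treatments in $\cD_{\text{val}}$, the query points $\{a_{2j}'\}$, and the test points $\{\tilde a_j\}$), the vector $\tilde \mb$ is an affine function of the validation noise $\bm\varepsilon_2$. Indeed, $\tilde\theta = (\Zb_2^\top \Zb_2 + n_2\lambda_0 \Ib)^{-1}\Zb_2^\top(\Zb_2\theta^\star + \bm\varepsilon_2)$, so
\begin{align*}
\tilde \mb - \EE_{\bm\varepsilon_2}[\tilde\mb] = \Mb\,\bm\varepsilon_2, \qquad \Mb := \Ub (\Ab_2^\top \Ab_2 + n_2\tilde \lambda \Ib)^{-1}\Ab_2^\top \Wb_2 (\Zb_2^\top \Zb_2 + n_2\lambda_0 \Ib)^{-1}\Zb_2^\top .
\end{align*}
The entries of $\bm\varepsilon_2$ are independent given the design and each is $\sigma$-sub-Gaussian by Assumption~\ref{ass:noise}. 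Hence for any unit vector $\ub\in\RR^n$, the scalar $\ub^\top \Mb \bm\varepsilon_2$ is sub-Gaussian with proxy $\sigma^2\|\Mb^\top \ub\|_2^2 \le \sigma^2 \|\Mb\|_{\operatorname{op}}^2$. This reduces the claim to showing $\|\Mb\|_{\operatorname{op}}^2 \lesssim 1/\gamma$ on $\Fcr_{\operatorname{good}}$. Here I read $\|\cdot\|_{\psi_2}$ of a random vector as the supremum over unit directions, and I use that normalization without further comment.

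I would bound $\|\Mb\|_{\operatorname{op}}$ by factoring it into four pieces. Let $\Rb_2 := (\Zb_2^\top \Zb_2 + n_2\lambda_0 \Ib)^{-1}$. Then
\begin{align*}
\|\Mb\|_{\operatorname{op}} \le \;& \big\|\Ub(\Ab_2^\top \Ab_2 + n_2\tilde\lambda\Ib)^{-1/2}\big\|_{\operatorname{op}} \cdot \big\|(\Ab_2^\top \Ab_2 + n_2\tilde\lambda\Ib)^{-1/2}\Ab_2^\top\big\|_{\operatorname{op}} \\
&\times \big\|\Wb_2 \Rb_2^{1/2}\big\|_{\operatorname{op}} \cdot \big\|\Rb_2^{1/2}\Zb_2^\top\big\|_{\operatorname{op}} .
\end{align*}
The second and fourth factors are at most $1$, since they are ridge-type contractions. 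For the third factor, \cref{equation: good event model selection 3} combined with Remark~\ref{remark; useful observation} (using $n_2\lambda_0 \asymp \log n$) gives $\Wb_2^\top \Wb_2 \preceq (c/\gamma)(\Zb_2^\top\Zb_2 + n_2\lambda_0\Ib)$, so its square is $\lesssim 1/\gamma$.

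The first factor is the main point, and it is where the small regularizer $\tilde\lambda\asymp \log n/n$ matters. On $\Fcr_{\operatorname{good}}$, \cref{equation: good event model selection 4} together with Remark~\ref{remark; useful observation} gives $\Ub^\top\Ub \preceq c(n\bSigma_{\operatorname{ref}} + n_2\tilde\lambda \Ib)$. From \cref{equation: good event model selection 2} with $\bSigma_{\operatorname{samp}}=\bSigma_{\operatorname{ref}}$ we get $n_2\bSigma_{\operatorname{ref}} + n_2\tilde\lambda\Ib \preceq c(\Ab_2^\top\Ab_2 + n_2\tilde\lambda\Ib)$. Since $n = 2n_2$, these combine to $\Ub^\top \Ub \preceq c'(\Ab_2^\top\Ab_2 + n_2\tilde\lambda\Ib)$, so the first factor squared is $O(1)$. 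The step needing care is that $\tilde\lambda$ must be large enough, specifically $n_2\tilde\lambda \gtrsim \log n$, for the $\log n$ shifts to be absorbed. This holds by the choice $\tilde\lambda\asymp\log n/n$ with a suitable constant, and I would state it explicitly.

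Multiplying the four bounds gives $\|\Mb\|_{\operatorname{op}}^2 \lesssim 1/\gamma$, hence $\|\tilde\mb - \EE_{\bm\varepsilon_2}[\tilde\mb]\|_{\psi_2}^2 \lesssim \sigma^2/\gamma$. The bound is conditional on the design, and $\Fcr_{\operatorname{good}}$ depends only on the design and not on the noise, so the conditioning is legitimate.
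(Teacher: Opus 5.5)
Your proposal is correct and follows the paper's proof essentially step for step. You write the centered vector as a fixed linear map applied to $\bm\varepsilon_2$ and bound its $\psi_2$ norm by $\sigma$ times that map's operator norm; you then factor the norm into the two ridge contractions, the $\Wb_2$ factor (controlled by \cref{equation: good event model selection 3}) and the $\Ub$ factor (controlled by combining \cref{equation: good event model selection 4} with \cref{equation: good event model selection 2}), and your explicit requirement $n_2\tilde\lambda \gtrsim \log n$ is exactly the role Remark~\ref{remark; useful observation} plays in the paper's step (ii).
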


\begin{proof}
	We have 
	\begin{align*}
		\|\tilde \mb - \EE_{\bm \varepsilon_2}[{ \tilde \mb}] \|_{\psi_2}  &\lesssim \| \Ub (\Ab_2^\top \Ab_2 + n_2\tilde \lambda \Ib)^{-1} \Ab_2^\top \Wb_2 (\tilde \theta -\EE_{\bm \varepsilon_2}[\tilde \theta]) \|_{\psi_2}\\
		&\lesssim  \|\Ub (\Ab_2^\top \Ab_2 + n_2\tilde \lambda \Ib)^{-1} \Ab_2^\top \Wb_2 (\Zb_2^\top \Zb_2 + n_2\lambda_0 \Ib)^{-1} \Zb_2^\top \bm\varepsilon_2\|_{\psi_2} \\
		&\lesssim \sigma  \|\Ub (\Ab_2^\top \Ab_2 + n_2 \tilde \lambda \Ib)^{-1} \Ab_2^\top \Wb_2 (\Zb_2^\top \Zb_2 + n_2\lambda_0 \Ib)^{-1} \Zb_2^\top \|_{\operatorname{op}} \\
		&\lesssim \sigma  \|\Ub (\Ab_2^\top \Ab_2 + n_2\tilde \lambda \Ib)^{-1} \Ab_2^\top \Wb_2 (\Zb_2^\top \Zb_2 + n_2\lambda_0 \Ib)^{-\frac{1}{2}}\|_{\operatorname{op}} \\
		&\lesssim \sigma  \|\Ub (\Ab_2^\top \Ab_2 + n_2\tilde \lambda \Ib)^{-1} \Ab_2^\top \|_{\operatorname{op}} \cdot  \|\Wb_2 (\Zb_2^\top \Zb_2 + n_2\lambda_0 \Ib)^{-\frac{1}{2}}\|_{\operatorname{op}} \\
		&\stackrel{(i)}{\lesssim} \sigma \sqrt{\frac{1}{\gamma}} \|\Ub (\Ab_2^\top \Ab_2 + n_2\tilde \lambda \Ib)^{-1} \Ab_2^\top \|_{\operatorname{op}}\\
		&\lesssim \sigma \sqrt{\frac{1}{\gamma}} \|\Ub (\Ab_2^\top \Ab_2 + n_2\tilde \lambda \Ib)^{-\frac{1}{2}} \|_{\operatorname{op}} \\
		&\stackrel{(ii)}{\lesssim} \sigma \sqrt{\frac{1}{\gamma}}
	\end{align*}
	where step \((i)\) holds by \cref{equation: good event model selection 3} and step \((ii)\) holds by the relation
	\begin{align*}
		\frac{1}{c}(n_2\bSigma_{\operatorname{samp}} + n_2 \tilde \lambda \Ib) \preceq \Ab_2^\top \Ab_2 + n_2 \tilde \lambda \Ib 
	\end{align*}
	and 
	\begin{align*}
		\frac{1}{n}\Ub^\top \Ub \preceq c (\bSigma_{\operatorname{ref}} + \tilde \lambda\Ib)
	\end{align*}
	for some absolute constant \(c>1\) under the good event \(\Fcr_{\operatorname{good}}\). 
	We also repeatedly apply Remark~\ref{remark; useful observation} at each step.
\end{proof}

\begin{lemma}
	Under the good event \(\Fcr_{\operatorname{good}}\), we have
	\begin{align*}
		\| \EE_{\bm \varepsilon_2}[{\tilde \mb}] - \mb^\star \|_{2}^2 &\lesssim ( \frac{1}{\gamma}\| \theta^\star \|_{\mathbb{F}}^2 + \| \eta^\star\|_\HH^2 )\log n.
	\end{align*}
\end{lemma}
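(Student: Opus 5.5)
We bound the noise-free part of the pseudo-test outcomes using the decomposition \eqref{equation: decomposition model selection}. Since $\EE_{\bm\varepsilon_2}[\tilde\theta]-\theta^\star = -n_2\lambda_0(\Zb_2^\top\Zb_2+n_2\lambda_0\Ib)^{-1}\theta^\star$, taking the conditional expectation gives
\[
\EE_{\bm\varepsilon_2}[\tilde\mb]-\mb^\star = \Ub\tilde\Bcr + \Ub\tilde\Pcr_b + \Ub\tilde\Ccr ,
\]
where $\tilde\Pcr_b := -n_2\lambda_0(\Ab_2^\top\Ab_2+n_2\tilde\lambda\Ib)^{-1}\Ab_2^\top\Wb_2(\Zb_2^\top\Zb_2+n_2\lambda_0\Ib)^{-1}\theta^\star$ is the propagated-bias term, exactly as in \eqref{equation: Pv, Pb}. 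The variance term $\tilde\Pcr_v$ has zero conditional mean, so it drops out. It then suffices to bound each of the three squared Euclidean norms separately.

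The common tool is the operator bound $\|\Ub(\Ab_2^\top\Ab_2+n_2\tilde\lambda\Ib)^{-1/2}\|_{\operatorname{op}}\lesssim 1$. It follows from \cref{equation: good event model selection 4}, which gives $\Ub^\top\Ub\preceq c(n\bSigma_{\operatorname{ref}}+\log n\,\Ib)$. It also uses \cref{equation: good event model selection 2} with $\cP_{\operatorname{samp}}=\cP_{\operatorname{ref}}$ and $n_2\tilde\lambda\asymp\log n$, which gives $\Ab_2^\top\Ab_2+n_2\tilde\lambda\Ib\succeq c^{-1}(n_2\bSigma_{\operatorname{ref}}+\log n\,\Ib)$. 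Remark~\ref{remark; useful observation} handles the change of shift constants, and $n_2=n/2$ means the ratio of the two operators is $O(1)$. This is the same argument as step (ii) in the preceding lemma. We also use $\|(\Ab_2^\top\Ab_2+n_2\tilde\lambda\Ib)^{-1/2}\Ab_2^\top\|_{\operatorname{op}}\le 1$.

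\textbf{Term $\tilde\Bcr$.} We have $\|\Ub\tilde\Bcr\|_2\lesssim\|(\Ab_2^\top\Ab_2+n_2\tilde\lambda\Ib)^{-1/2}\Ab_2^\top\bb_2\|\le\|\bb_2\|_2$. By \cref{equation: good event model selection 1}, each entry of $\bb_2$ is $O(\|\theta^\star\|_\FF\sqrt{\log n/n})$. With $n_2$ entries, $\|\bb_2\|_2^2\lesssim\|\theta^\star\|_\FF^2\log n$.

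\textbf{Term $\tilde\Ccr$.} We have $\|\Ub\tilde\Ccr\|_2\lesssim n_2\tilde\lambda\,\|(\Ab_2^\top\Ab_2+n_2\tilde\lambda\Ib)^{-1/2}\eta^\star\|\le\sqrt{n_2\tilde\lambda}\,\|\eta^\star\|_\HH$. Squaring gives $\|\Ub\tilde\Ccr\|_2^2\lesssim\|\eta^\star\|_\HH^2\log n$.

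\textbf{Term $\tilde\Pcr_b$.} Factor it as in the bound on $\Pcr_b$:
\[
\|\Ub\tilde\Pcr_b\|_2\lesssim n_2\lambda_0\,\|\Wb_2(\Zb_2^\top\Zb_2+n_2\lambda_0\Ib)^{-1/2}\|_{\operatorname{op}}\,\|(\Zb_2^\top\Zb_2+n_2\lambda_0\Ib)^{-1/2}\theta^\star\|.
\]
The middle factor is $\lesssim\gamma^{-1/2}$ by \cref{equation: good event model selection 3} and Remark~\ref{remark; useful observation}, since $n_2\lambda_0\asymp\log n$. The last factor is $\le(n_2\lambda_0)^{-1/2}\|\theta^\star\|_\FF$. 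Hence $\|\Ub\tilde\Pcr_b\|_2^2\lesssim n_2\lambda_0\,\gamma^{-1}\|\theta^\star\|_\FF^2\asymp\gamma^{-1}\|\theta^\star\|_\FF^2\log n$.

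Adding the three bounds and using $\gamma\le1$ to absorb the $\tilde\Bcr$ contribution gives the claim. The only delicate step is the propagated-bias term. There, the shift in $\Wb_2$ must be matched to $n_2\lambda_0\Ib$ rather than $\log n\,\Ib$, which is legitimate precisely because $\lambda_0\asymp\log n/n$. It is also where the $1/\gamma$ factor, and thus the overlap dependence, enters.
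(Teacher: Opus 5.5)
Your proposal is correct and follows essentially the same argument as the paper: the variance part $\tilde\Pcr_v$ vanishes under $\EE_{\bm\varepsilon_2}$, and the three remaining terms $\Ub\tilde\Bcr$, $\Ub\tilde\Ccr$, $\Ub\tilde\Pcr_b$ are bounded separately. The tools are the comparison of $\Ub^\top\Ub$ with $\Ab_2^\top\Ab_2+n_2\tilde\lambda\Ib$ and the $1/\gamma$ bound on $\Wb_2(\Zb_2^\top\Zb_2+n_2\lambda_0\Ib)^{-1/2}$ from the good event. Stating the first comparison as the single operator bound $\|\Ub(\Ab_2^\top\Ab_2+n_2\tilde\lambda\Ib)^{-1/2}\|_{\operatorname{op}}\lesssim 1$ is just a slightly cleaner form of the paper's quadratic-form sandwiching.
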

\begin{proof}
	Recall that \(\tilde \mb = \Ub \tilde \eta\) and we defined 
	\begin{align*}
		\tilde \Pcr :=(\Ab_2^\top \Ab_2 + n_2\tilde \lambda \Ib)^{-1} \Ab_2^\top \Wb_2 (\tilde \theta -\theta^\star)
	\end{align*}
	in the previous decomposition \cref{equation: decomposition model selection}.
	We decompose it again as
	\begin{align*}
		\tilde \Pcr &= (\Ab_2^\top \Ab_2 + n_2\tilde \lambda \Ib)^{-1} \Ab_2^\top \Wb_2 (\Zb_2^\top \Zb_2+ n_2\lambda_0 \Ib)^{-1} \Zb_2^\top \bm \varepsilon_2  \\
		\quad &+(\Ab_2^\top \Ab_2 + n_2\tilde \lambda \Ib)^{-1} \Ab_2^\top \Wb_2  (-n_2 \lambda_0) (\Zb_2^\top \Zb_2+ n_2\lambda_0 \Ib)^{-1} \theta^\star \\
		&:= \tilde{\Pcr}_v + \tilde \Pcr_b.
	\end{align*}
	where 
	\begin{align*}
		\tilde{\Pcr}_v &:= (\Ab_2^\top \Ab_2 + n_2\tilde \lambda \Ib)^{-1} \Ab_2^\top \Wb_2 (\Zb_2^\top \Zb_2+ n_2\lambda_0 \Ib)^{-1} \Zb_2^\top \bm \varepsilon_2\\
		\tilde{\Pcr}_b &:= -n_2\lambda_0(\Ab_2^\top \Ab_2 + n_2\tilde \lambda \Ib)^{-1} \Ab_2^\top \Wb_2 (\Zb_2^\top \Zb_2+ n_2\lambda_0 \Ib)^{-1} \theta^\star.
	\end{align*}
	Using the previous decomposition \cref{equation: decomposition model selection}, we obtain
	\begin{align*}
		\| \EE_{\bm \varepsilon_2}[{\tilde \mb}] - \mb^\star \|_{2} \le \| \Ub \tilde \Ccr \|_{2} +  \| \Ub \tilde \Pcr_b \|_{2}+ \| \Ub \tilde \Bcr \|_{2}.
	\end{align*}
	First, we have 
	\begin{align*}
		\| \Ub \tilde \Bcr \|_{2}^2 &\lesssim     \bb_2^\top \Ab_2 (\Ab_2^\top \Ab_2 + n_2 \tilde \lambda \Ib)^{-1} \Ub^\top \Ub (\Ab_2^\top \Ab_2 + n_2\tilde \lambda \Ib)^{-1} \Ab_2^\top \bb_2 \\
		&\stackrel{(i)}{\lesssim }    \bb_2^\top \Ab_2 (\Ab_2^\top \Ab_2 + n_2\tilde \lambda \Ib)^{-1} (n\bSigma_{\operatorname{ref}} + \log n \Ib)(\Ab_2^\top \Ab_2 + n_2\tilde \lambda \Ib)^{-1} \Ab_2^\top \bb_2 \\
		&\stackrel{(ii)}{\lesssim} \bb_2^\top \Ab_2 (\Ab_2^\top \Ab_2 + n_2\tilde \lambda \Ib)^{-1}  \Ab_2^\top \bb_2 \quad (\text{By \cref{equation: good event model selection 2}})\\
		&\lesssim  \|\bb_2\|_2^2 \\
		&\stackrel{(iii)}{\lesssim}   \|\theta^\star \|_\FF^2 \log n.
	\end{align*}
	where step (i) holds by the definition of the good event \(\Fcr_{\operatorname{good}}\), especially \cref{equation: good event model selection 4} and step (ii) holds by \cref{equation: good event model selection 2}.
	Also, step (iii) holds by \cref{equation: good event model selection 1}.
	We also repeatedly apply Remark~\ref{remark; useful observation} at each step.
	
	Second, we bound 
	\begin{align*}
		\|\Ub \tilde \Ccr\|_2^2 &\lesssim  (n_2 \tilde{\lambda})^2 (\eta^\star)^\top (\Ab_2^\top \Ab_2 + n_2 \tilde\lambda \Ib)^{-1}\Ub^\top \Ub (\Ab_2^\top \Ab_2 + n_2 \tilde\lambda \Ib)^{-1} \eta^\star \\
		&\stackrel{(i)}{\lesssim} (n_2 \tilde{\lambda})^2 (\eta^\star)^\top (\Ab_2^\top \Ab_2 + n_2 \tilde\lambda \Ib)^{-1} (n\bSigma_{\operatorname{ref}} + \log n \Ib) (\Ab_2^\top \Ab_2 + n_2 \tilde\lambda \Ib)^{-1} \eta^\star \\
		&\stackrel{(ii)}{\lesssim} (n_2 \tilde{\lambda})^2   (\eta^\star)^\top (\Ab_2^\top \Ab_2 + n_2 \tilde\lambda \Ib)^{-1}  \eta^\star \\
		&\lesssim (n_2 \tilde{\lambda})  \| \eta^\star\|_\HH^2 \\
		&\lesssim  \log n \| \eta^\star\|_\HH^2. 
	\end{align*}
	where step (i) holds by \cref{equation: good event model selection 4}  and step (ii) holds by \cref{equation: good event model selection 2}.
	
	Finally, we bound
		\begin{align*}
			&\|  \Ub \tilde \Pcr_b \|_2^2 \\
			&\leq 
			(n_2 \lambda_0)^2\|\Ub (\Ab_2^\top \Ab_2 + n_2\tilde \lambda \Ib)^{-1} \Ab_2^\top  \Wb_2(\Zb_2^\top \Zb_2+ n_2\lambda_0 \Ib)^{-1} \theta^\star \|_{2}^2 \\
			&\lesssim 
			(n_2\lambda_0)^2 \|\Ub (\Ab_2^\top \Ab_2 + n_2\tilde \lambda \Ib)^{-\frac{1}{2}} \|_{\operatorname{op}}^2\cdot \|(\Ab_2^\top \Ab_2 + n_2\tilde \lambda \Ib)^{-\frac{1}{2}} \Ab_2^\top  \|_{\operatorname{op}}^2 \| \Wb_2(\Zb_2^\top \Zb_2+ n_2\lambda_0\Ib)^{-1} \theta^\star \|_{2}^2 \\
			&\lesssim 
			(n_2\lambda_0)^2  \|\Ub (\Ab_2^\top \Ab_2 +  n_2\tilde \lambda \Ib)^{-\frac{1}{2}} \|_{\operatorname{op}}^2\cdot  \| \Wb_2(\Zb_2^\top \Zb_2+ n_2\lambda_0 \Ib)^{-1} \theta^\star \|_{2}^2\\
		&\lesssim 
		(n_2 \lambda_0)^2  \|\Ub (\Ab_2^\top \Ab_2  + n_2\tilde \lambda \Ib)^{-\frac{1}{2}} \|_{\operatorname{op}}^2\cdot  \| \Wb_2(\Zb_2^\top \Zb_2+ n_2\lambda_0 \Ib)^{-\frac{1}{2}} \|_{\operatorname{op}}^2\cdot  \frac{1}{n_2 \lambda_0}\| \theta^\star \|_{\FF}^2 \\
		&\stackrel{(i)}{\lesssim }
		n_2 \lambda_0 \|\Ub (\Ab_2^\top \Ab_2  + n_2 \tilde \lambda \Ib)^{-\frac{1}{2}} \|_{\operatorname{op}}^2\cdot  \frac{1}{\gamma}\cdot  \| \theta^\star \|_{\FF}^2 \\
		&\stackrel{(ii)}{\lesssim }
		\frac{1}{\gamma}n_2\lambda_0 \| \theta^\star \|_{\FF}^2 \\
		&\lesssim  \frac{1}{\gamma} \log n \| \theta^\star \|_{\FF}^2
	\end{align*}
	where step (i) holds by \cref{equation: good event model selection 3} and step (ii) holds by \cref{equation: good event model selection 4}.
\end{proof}

\paragraph{Final Step of the Proof.}
We now apply Lemma~\ref{lemma; loss model selection}.
Condition on \(\cD_{\mathrm{train}}\), the training query points \(\{a_{1j}'\}_{j=1}^{n_1}\), the test points \(\{\tilde a_j\}_{j=1}^n\), the validation design variables \(\{(x_{2i},a_{2i})\}_{i=1}^{n_2}\), the validation query points \(\{a_{2j}'\}_{j=1}^{n_2}\), and the event \(\Fcr_{\mathrm{good}}\).
Under this conditioning, \(\{\hat h_\lambda\}_{\lambda\in\Lambda}\) are deterministic candidates, while the linear maps used to construct \(\tilde h\) are fixed; hence \(\tilde h\) is random only through the validation noise vector \(\bm\varepsilon_2\).
We have
	\begin{align*}
		\cE_{\HH}^{\operatorname{in}}(\hat 
		\eta_{\hat{\lambda}} )& \lesssim \min_{\lambda \in \Lambda}\cE^{\operatorname{in}}_{\HH}(\hat \eta_{\lambda} ) +  \frac{1}{n} \|\Ub (\EE_{\bm\varepsilon_2}[\tilde \eta]-\eta^\star) \|_2^2 +  \frac{\log(Ln)}{n} \|\Ub (\tilde\eta - \EE_{\bm\varepsilon_2}[\tilde \eta]) \|^2_{\psi_2} \\
		&\lesssim \min_{\lambda \in \Lambda}\cE_\HH^{\operatorname{in}}(\hat \eta_{\lambda} ) + \Ocr
\end{align*}
holds with conditional probability at least \(1-n^{-11}\).
Since \(\PP[\Fcr_{\operatorname{good}}] \ge 1-n^{-11}\), a union bound gives unconditional probability at least \(1-2n^{-11}\), completing the proof.

\begin{lemma}[Theorem 5.2 from \citealt{wang2026pseudo}]\label{lemma; loss model selection}
	Let $\left\{\boldsymbol{z}_i\right\}_{i=1}^n$ be deterministic elements in a set $\mathcal{Z} ; g^{\star}$ and $\left\{g_j\right\}_{j=1}^m$ be deterministic functions on $\mathcal{Z} ; \widetilde{g}$ be a random function on $\mathcal{Z}$. Define
	\begin{align*}
		\mathcal{L}(g)=\frac{1}{n} \sum_{i=1}^n\left|g\left(\boldsymbol{z}_i\right)-g^{\star}\left(\boldsymbol{z}_i\right)\right|^2
	\end{align*}
	for any function $g$ on $\mathcal{Z}$. Assume that the random vector $\widetilde{\boldsymbol{y}}=\left(\widetilde{g}\left(\boldsymbol{z}_1\right), \widetilde{g}\left(\boldsymbol{z}_2\right), \cdots, \widetilde{g}\left(\boldsymbol{z}_n\right)\right)^{\top}$ satisfies $\|\widetilde{\boldsymbol{y}}-\mathbb{E} \widetilde{\boldsymbol{y}}\|_{\psi_2} \leq V<\infty$. Choose any
	\begin{align*}
		\widehat{j} \in \underset{j \in[m]}{\operatorname{argmin}}\left\{\frac{1}{n} \sum_{i=1}^n\left|g_j\left(\boldsymbol{z}_i\right)-\widetilde{g}\left(\boldsymbol{z}_i\right)\right|^2\right\} .
	\end{align*}
	
	There exists a universal constant $C$ such that for any $\delta \in(0,1]$, with probability at least $1-\delta$ we have
	\begin{align*}
		\mathcal{L}\left(g_{\widehat{j}}\right) \leq \inf _{\gamma>0}\left\{(1+\gamma) \min _{j \in[m]} \mathcal{L}\left(g_j\right)+C\left(1+\gamma^{-1}\right)\left(\mathcal{L}(\mathbb{E} \widetilde{g})+\frac{V^2 \log (m / \delta)}{n}\right)\right\} .
	\end{align*}
	
	Consequently,
	\begin{align*}
		\mathbb{E} \mathcal{L}\left(g_{\widehat{j}}\right) \leq \inf _{\gamma>0}\left\{(1+\gamma) \min _{j \in[m]} \mathcal{L}\left(g_j\right)+C\left(1+\gamma^{-1}\right)\left(\mathcal{L}(\mathbb{E} \widetilde{g})+\frac{V^2(1+\log m)}{n}\right)\right\}.
	\end{align*}    
\end{lemma}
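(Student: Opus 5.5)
The proof is a standard comparison argument for a least-squares selector against a noisy proxy, with a union bound over the $m$ deterministic candidates. Write $\boldsymbol{y}^\star=(g^\star(\boldsymbol z_i))_{i=1}^n$ and $\boldsymbol u_j=(g_j(\boldsymbol z_i))_{i=1}^n$. Decompose the proxy as $\widetilde{\boldsymbol y}-\boldsymbol y^\star=\boldsymbol b+\boldsymbol\xi$, where $\boldsymbol b:=\mathbb E\widetilde{\boldsymbol y}-\boldsymbol y^\star$ is deterministic with $\|\boldsymbol b\|_2^2=n\,\mathcal L(\mathbb E\widetilde g)$, and $\boldsymbol\xi:=\widetilde{\boldsymbol y}-\mathbb E\widetilde{\boldsymbol y}$ is centered with $\|\boldsymbol\xi\|_{\psi_2}\le V$.

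Fix $k\in\arg\min_{j}\mathcal L(g_j)$; this index is deterministic because the $g_j$ and $g^\star$ are. Since $\widehat j$ minimizes $\|\boldsymbol u_j-\widetilde{\boldsymbol y}\|_2^2$, we have $\|\boldsymbol u_{\widehat j}-\widetilde{\boldsymbol y}\|_2^2\le\|\boldsymbol u_k-\widetilde{\boldsymbol y}\|_2^2$. Expanding both sides around $\boldsymbol y^\star$, the common term $\|\widetilde{\boldsymbol y}-\boldsymbol y^\star\|_2^2$ cancels, leaving
\[
\|\boldsymbol u_{\widehat j}-\boldsymbol y^\star\|_2^2\le\|\boldsymbol u_k-\boldsymbol y^\star\|_2^2+2\langle \boldsymbol u_{\widehat j}-\boldsymbol u_k,\boldsymbol b\rangle+2\langle \boldsymbol u_{\widehat j}-\boldsymbol u_k,\boldsymbol\xi\rangle .
\]
The bias cross term is handled by Cauchy--Schwarz, giving $\|\boldsymbol u_{\widehat j}-\boldsymbol u_k\|_2\|\boldsymbol b\|_2$.

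The noise cross term is the main step. For each fixed $j$, the vector $\boldsymbol v_j=\boldsymbol u_j-\boldsymbol u_k$ is deterministic. By the definition of the sub-Gaussian vector norm, $\langle\boldsymbol v_j,\boldsymbol\xi\rangle/\|\boldsymbol v_j\|_2$ is sub-Gaussian with parameter at most $V$. A union bound over the $m$ indices $j$ then gives, with probability at least $1-\delta$,
\[
|\langle\boldsymbol v_j,\boldsymbol\xi\rangle|\le C V\sqrt{\log(m/\delta)}\,\|\boldsymbol v_j\|_2\qquad\text{simultaneously for all } j .
\]
In particular this holds at the random index $\widehat j$. Taking the union over all $j$ is exactly what decouples the data-dependent choice $\widehat j$ from the noise.

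Combining the two bounds, we use $\|\boldsymbol u_{\widehat j}-\boldsymbol u_k\|_2\le\|\boldsymbol u_{\widehat j}-\boldsymbol y^\star\|_2+\|\boldsymbol u_k-\boldsymbol y^\star\|_2$. Set $R:=\|\boldsymbol b\|_2+CV\sqrt{\log(m/\delta)}$. Applying $2xy\le \epsilon x^2+\epsilon^{-1}y^2$ with $\epsilon\asymp\gamma/(1+\gamma)$ to both products, then moving the resulting $\epsilon\|\boldsymbol u_{\widehat j}-\boldsymbol y^\star\|_2^2$ term to the left-hand side, gives
\[
\|\boldsymbol u_{\widehat j}-\boldsymbol y^\star\|_2^2\le(1+\gamma)\|\boldsymbol u_k-\boldsymbol y^\star\|_2^2+C(1+\gamma^{-1})R^2 .
\]
Dividing by $n$ and using $R^2\lesssim n\mathcal L(\mathbb E\widetilde g)+V^2\log(m/\delta)$ yields the high-probability bound. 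Since $\gamma$ is arbitrary and the event does not depend on it, we may take the infimum over $\gamma$.

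For the bound in expectation, fix $\gamma$ and set $\delta=e^{-t}$. The excess of $\mathcal L(g_{\widehat j})$ over the deterministic part then has a sub-exponential tail of scale $C(1+\gamma^{-1})V^2/n$ beyond the level $\log m$. Integrating this tail gives the term $V^2(1+\log m)/n$, after which we take the infimum over $\gamma$ outside the expectation. The only delicate point is bookkeeping the constants in the absorption step so that the factor on $\min_j\mathcal L(g_j)$ is exactly $1+\gamma$. This is achieved by choosing $\epsilon$ so that $(1+\epsilon)/(1-\epsilon)\le 1+\gamma$ and $\epsilon^{-1}\lesssim 1+\gamma^{-1}$.
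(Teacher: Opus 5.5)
Your proof is correct and complete. The paper gives no argument of its own here; it imports the result as Theorem 5.2 of \citet{wang2026pseudo}. Your route is the standard one for such oracle inequalities: the basic inequality from the argmin against a fixed oracle index $k$, a union bound over the $m$ deterministic directions $\boldsymbol u_j-\boldsymbol u_k$ to decouple $\widehat j$ from the noise, Young's inequality with absorption, and tail integration with $\delta=e^{-t}$.

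One bookkeeping point. After dividing by $1-\epsilon$, the coefficient on $R^2$ is $2/(\epsilon(1-\epsilon))$, not just $\epsilon^{-1}$. So your two stated conditions, $(1+\epsilon)/(1-\epsilon)\le 1+\gamma$ and $\epsilon^{-1}\lesssim 1+\gamma^{-1}$, are not enough by themselves for large $\gamma$. You also need $\epsilon$ bounded away from $1$. The concrete choice $\epsilon=\gamma/(3(1+\gamma))$ gives both $(1+\epsilon)/(1-\epsilon)\le 1+\gamma$ and $2/(\epsilon(1-\epsilon))\le 9(1+\gamma^{-1})$.
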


\subsection{Proof of Proposition~\ref{proposition; in sample to population MSE}}
We aim to apply Lemma~\ref{lemma; in sample to population second moment RKHS}.
We then analyze the behavior of \(\hat h_\lambda\), which is obtained by running Algorithm~\ref{algorithm: main} using the dataset \(\cD_{\text{train}}\).
We use the decomposition established in Appendix~\ref{section: apdx decomposition}:
\begin{align*}
	\hat\eta_\lambda - \eta^\star  &= (\Ab_1^\top \Ab_1 + n_1\lambda \Ib)^{-1} \Ab_1^\top \bb_1 +(\Ab_1^\top \Ab_1 + n_1\lambda \Ib)^{-1} \Ab_1^\top \Wb_1 (\hat\theta -\theta^\star) -n_1\lambda(\Ab_1^\top \Ab_1 + n_1\lambda \Ib)^{-1} \eta^\star  \\
	&:= \Bcr_1 + \Pcr_1 + \Ccr_1.
\end{align*}

\begin{lemma}\label{lemma; psi norm model selection}
	Under the event \(\Fcr_{\operatorname{good}}\),
	\(\|\hat \eta_\lambda-\EE_{\bm{\varepsilon}_1}[\hat \eta_\lambda] \|_{\psi_2}^2  \lesssim \frac{1}{\gamma}\sigma^2\).
\end{lemma}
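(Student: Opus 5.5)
The plan is to exploit the fact that, once we condition on everything except the training noise vector $\bm\varepsilon_1$, the fluctuation $\hat\eta_\lambda-\EE_{\bm\varepsilon_1}[\hat\eta_\lambda]$ is a fixed linear image of $\bm\varepsilon_1$. The conditioning is on the design $\{(x_{1i},a_{1i})\}$, the query points $\{a_{1j}'\}$, and the event $\Fcr_{\operatorname{good}}$, so this is the same argument as the preceding lemma for $\tilde\mb$. Since $\hat\theta=(\Zb_1^\top\Zb_1+n_1\lambda_0\Ib)^{-1}\Zb_1^\top(\Zb_1\theta^\star+\bm\varepsilon_1)$, only the $\Pcr_{1}$-part of the decomposition $\Bcr_1+\Pcr_1+\Ccr_1$ is random, and only through its propagated-variance piece:
\begin{align*}
\hat\eta_\lambda-\EE_{\bm\varepsilon_1}[\hat\eta_\lambda]
=(\Ab_1^\top\Ab_1+n_1\lambda\Ib)^{-1}\Ab_1^\top\Wb_1(\Zb_1^\top\Zb_1+n_1\lambda_0\Ib)^{-1}\Zb_1^\top\bm\varepsilon_1=:\mathbf{L}\bm\varepsilon_1 .
\end{align*}
The entries of $\bm\varepsilon_1$ are conditionally independent, mean zero and $\sigma$-sub-Gaussian by Assumption~\ref{ass:noise}. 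Hence $\|\mathbf{L}\bm\varepsilon_1\|_{\psi_2}\lesssim\sigma\|\mathbf{L}\|_{\operatorname{op}}$, and it suffices to show $\|\mathbf{L}\|_{\operatorname{op}}^2\lesssim 1/\gamma$ on $\Fcr_{\operatorname{good}}$.

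I will bound $\|\mathbf{L}\|_{\operatorname{op}}$ by splitting $\mathbf{L}$ into three factors, exactly as in the bound on $\Pcr_b$ in Appendix~\ref{section: apdx proof general theory}.

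\textbf{Factor 1.} We have $\|(\Zb_1^\top\Zb_1+n_1\lambda_0\Ib)^{-1/2}\Zb_1^\top\|_{\operatorname{op}}\le 1$ trivially.

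\textbf{Factor 2.} The middle factor satisfies $\|\Wb_1(\Zb_1^\top\Zb_1+n_1\lambda_0\Ib)^{-1/2}\|_{\operatorname{op}}^2\lesssim 1/\gamma$. To see this, start from \cref{equation: good event model selection 3}, $\frac1{n_1}\Wb_1^\top\Wb_1\preceq\frac{c}{\gamma n_1}(\Zb_1^\top\Zb_1+\log n_1\Ib)$. Then apply Remark~\ref{remark; useful observation} with $r=n_1\lambda_0\asymp\log n$ to replace the shift $\log n_1\Ib$ by $n_1\lambda_0\Ib$.

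\textbf{Factor 3.} The outer factor is $\|(\Ab_1^\top\Ab_1+n_1\lambda\Ib)^{-1}\Ab_1^\top\|_{\operatorname{op}}\le\|(\Ab_1^\top\Ab_1+n_1\lambda\Ib)^{-1/2}\|_{\operatorname{op}}\le(n_1\lambda)^{-1/2}$. This is $\lesssim1$ because every $\lambda\in\Lambda$ satisfies $n_1\lambda\gtrsim\log n\ge1$.

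Multiplying the three bounds gives $\|\mathbf{L}\|_{\operatorname{op}}^2\lesssim1/\gamma$, hence $\|\hat\eta_\lambda-\EE_{\bm\varepsilon_1}[\hat\eta_\lambda]\|_{\psi_2}^2\lesssim\sigma^2/\gamma$. The bound holds uniformly over $\lambda\in\Lambda$, since no step depends on $\lambda$ beyond $n_1\lambda\gtrsim\log n$.

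The main obstacle is the norm in which the $\psi_2$ bound must hold for the later application of Lemma~\ref{lemma; in sample to population second moment RKHS}. If the plain $\HH$-norm suffices, Factor 3 is as above. If instead a $\bSigma_{\operatorname{ref}}^{1/2}$-weighted or test-design-weighted version ($\Ub/\sqrt n$) is needed, I would replace Factor 3 by $\|\bSigma_{\operatorname{ref}}^{1/2}(\Ab_1^\top\Ab_1+n_1\lambda\Ib)^{-1/2}\|_{\operatorname{op}}^2\lesssim\|\Sb_\lambda\|_{\operatorname{op}}/n_1\le 1/n_1$. This follows from \cref{equation: key 2}, transferred to sample $1$ using \cref{equation: good event model selection 2} and $\bSigma_{\operatorname{samp}}=\bSigma_{\operatorname{ref}}$. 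I would pair it with $\|(\Ab_1^\top\Ab_1+n_1\lambda\Ib)^{-1/2}\Ab_1^\top\|_{\operatorname{op}}\le1$. For the $\Ub$-weighted version I would use \cref{equation: good event model selection 4} in the same way. Either choice yields the same $\sigma^2/\gamma$ bound, up to the harmless normalization by $n$.
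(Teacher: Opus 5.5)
Your proposal is correct and follows the paper's proof essentially step for step. You condition on the training design and query points, write the fluctuation as the fixed linear map $(\mathbf{A}_1^\top\mathbf{A}_1+n_1\lambda\mathbf{I})^{-1}\mathbf{A}_1^\top\mathbf{W}_1(\mathbf{Z}_1^\top\mathbf{Z}_1+n_1\lambda_0\mathbf{I})^{-1}\mathbf{Z}_1^\top$ applied to $\bm\varepsilon_1$, and bound its $\psi_2$ norm by $\sigma$ times the operator norm. That norm then splits into the same three factors the paper uses: $\le 1$, $\lesssim\gamma^{-1/2}$ via the $\mathbf{W}_1$ comparison on the good event, and $\le(n_1\lambda)^{-1/2}\lesssim 1$.

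Your closing concern about which norm is needed is moot. The paper uses this lemma only in the plain $\mathbb{H}$-norm and passes to $\langle\phi(\tilde a_1),\cdot\rangle_{\mathbb{H}}$ via $\|\phi(\tilde a_1)\|_{\mathbb{H}}^2\le\xi$, so the weighted variant is unnecessary.
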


\begin{proof}
	Using the established error decomposition, we get
	\begin{align*}
		\|\hat \eta_\lambda-\EE_{\bm{\varepsilon}_1}[\hat \eta_\lambda] \|_{\psi_2} &\lesssim \|(\Ab_1^\top \Ab_1 + n_1\lambda \Ib)^{-1} \Ab_1^\top \Wb_1 (\hat\theta -\EE_{\bm{\varepsilon}_1}[\hat\theta]) \|_{\psi_2} \\
		&\lesssim \|(\Ab_1^\top \Ab_1 + n_1\lambda \Ib)^{-1} \Ab_1^\top \Wb_1  (\Zb_1^\top  \Zb_1 + n_1 \lambda_0 \Ib)^{-1} \Zb_1^\top \bm \varepsilon_1 \|_{\psi_2} \\
		&\le \sigma \|(\Ab_1^\top \Ab_1 + n_1\lambda \Ib)^{-1} \Ab_1^\top \Wb_1  (\Zb_1^\top \Zb_1+  n_1 \lambda_0 \Ib)^{-1} \Zb_1^\top \|_{\operatorname{op}} \\
		&\le \sigma \|(\Ab_1^\top \Ab_1 + n_1\lambda \Ib)^{-1} \Ab_1^\top \Wb_1  ( \Zb_1^\top  \Zb_1 + n_1 \lambda_0 \Ib)^{-\frac{1}{2}}  \|_{\operatorname{op}} \\
		&\stackrel{(i)}{\lesssim} \sigma \sqrt{\frac{1}{\gamma}} \|(\Ab_1^\top \Ab_1 + n_1\lambda \Ib)^{-1} \Ab_1^\top \|_{\operatorname{op}} \\
		&\lesssim \frac{\sigma }{\sqrt {n_1\lambda \gamma}} \\
		&\lesssim \sigma \sqrt{\frac{1}{\gamma}}
	\end{align*}
	where (i) holds by \cref{equation: good event model selection 3} and Remark~\ref{remark; useful observation}.
\end{proof}

\begin{lemma}\label{lemma; Hilbert norm model selection}
	Under the event \(\Fcr_{\operatorname{good}}\), we have
	\begin{align*}
		\|\EE_{\bm{\varepsilon}_1}[\hat \eta_\lambda] - \eta^\star \|_{\HH}^2 \lesssim \| \eta^\star\|_\HH^2 + \frac{1}{\gamma}\|\theta^\star\|_\FF^2.
	\end{align*}
\end{lemma}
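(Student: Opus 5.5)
We need to bound $\|\EE_{\bm\varepsilon_1}[\hat\eta_\lambda]-\eta^\star\|_\HH^2$. The plan is to take the conditional expectation of the decomposition $\hat\eta_\lambda-\eta^\star=\Bcr_1+\Pcr_1+\Ccr_1$ and bound each piece in the $\HH$-norm, i.e.\ without the $\bSigma_{\operatorname{ref}}^{1/2}$ weighting. Since $\Bcr_1$ and $\Ccr_1$ do not depend on the noise, and $\EE_{\bm\varepsilon_1}[\hat\theta]-\theta^\star=-n_1\lambda_0(\Zb_1^\top\Zb_1+n_1\lambda_0\Ib)^{-1}\theta^\star$, we have
\[
\EE_{\bm\varepsilon_1}[\hat\eta_\lambda]-\eta^\star=\Bcr_1+\Pcr_{1,b}+\Ccr_1,
\]
where $\Pcr_{1,b}:=-n_1\lambda_0(\Ab_1^\top\Ab_1+n_1\lambda\Ib)^{-1}\Ab_1^\top\Wb_1(\Zb_1^\top\Zb_1+n_1\lambda_0\Ib)^{-1}\theta^\star$ is the propagated bias. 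It then suffices to bound the three squared norms separately.

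\textbf{Ridge term $\Ccr_1$.} We have $\|\Ccr_1\|_\HH=\|n_1\lambda(\Ab_1^\top\Ab_1+n_1\lambda\Ib)^{-1}\eta^\star\|_\HH\le\|\eta^\star\|_\HH$, because $\|n_1\lambda(\Ab_1^\top\Ab_1+n_1\lambda\Ib)^{-1}\|_{\operatorname{op}}\le 1$.

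\textbf{Pseudo-outcome bias $\Bcr_1$.} Using $\|(\Ab_1^\top\Ab_1+n_1\lambda\Ib)^{-1}\Ab_1^\top\|_{\operatorname{op}}\le (n_1\lambda)^{-1/2}$ and the bound $\|\bb_1\|_2^2\lesssim n_1\cdot\frac{\log n}{n}\|\theta^\star\|_\FF^2\lesssim \log n\,\|\theta^\star\|_\FF^2$ from \cref{equation: good event model selection 1}, we get
\[
\|\Bcr_1\|_\HH^2\lesssim \frac{\log n}{n_1\lambda}\|\theta^\star\|_\FF^2\lesssim\|\theta^\star\|_\FF^2,
\]
since every $\lambda\in\Lambda$ satisfies $n\lambda\gtrsim\log n$.

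\textbf{Propagated bias $\Pcr_{1,b}$.} Factor it as
\[
n_1\lambda_0\,\bigl\|(\Ab_1^\top\Ab_1+n_1\lambda\Ib)^{-1}\Ab_1^\top\bigr\|_{\operatorname{op}}\,
\bigl\|\Wb_1(\Zb_1^\top\Zb_1+n_1\lambda_0\Ib)^{-1/2}\bigr\|_{\operatorname{op}}\,
\bigl\|(\Zb_1^\top\Zb_1+n_1\lambda_0\Ib)^{-1/2}\theta^\star\bigr\|_\FF .
\]
The three factors are bounded as follows.
\begin{itemize}
\item The first factor is at most $(n_1\lambda)^{-1/2}$, as above.
\item The second factor is at most $\sqrt{n_1/\gamma}$, by \cref{equation: good event model selection 3} together with Remark~\ref{remark; useful observation}, since $n_1\lambda_0\asymp\log n$.
\item The third factor is at most $(n_1\lambda_0)^{-1/2}\|\theta^\star\|_\FF$.
\end{itemize}
Multiplying, the squared norm is
\[
\|\Pcr_{1,b}\|_\HH^2\lesssim\frac{n_1\lambda_0\,n_1}{n_1\lambda\,\gamma}\|\theta^\star\|_\FF^2=\frac{n_1\lambda_0}{\lambda\gamma}\|\theta^\star\|_\FF^2 .
\]

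This step is the main obstacle: the last display grows like $n\lambda_0/\lambda$, which is not constant in general. To close it I would use $\lambda\ge\log n/n$ from the grid $\Lambda$, which gives $n_1\lambda_0/\lambda\lesssim n_1$. That is still too weak, so I would need the sharper factorization that does not pay the full $\sqrt{n_1}$ in the second factor.

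The fix is to pair the normalizations. Write the first factor as the product of $(n_1\lambda)^{-1/2}$ and $\|(\Ab_1^\top\Ab_1+n_1\lambda\Ib)^{-1/2}\Ab_1^\top\|_{\operatorname{op}}\le 1$, and take the $\HH$-norm (no $\bSigma_{\operatorname{ref}}$ weighting). The extra $\sqrt{n_1}$ from $\Wb_1$ must then be absorbed. Note that $\Wb_1$ maps into $\RR^{n_1}$ with $\frac{1}{n_1}\Wb_1^\top\Wb_1\preceq \frac{c}{\gamma}\cdot\frac{1}{n_1}(\Zb_1^\top\Zb_1+n_1\lambda_0\Ib)$. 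Hence
\[
\|\Wb_1(\Zb_1^\top\Zb_1+n_1\lambda_0\Ib)^{-1}\theta^\star\|_2^2\lesssim \gamma^{-1}\,\theta^{\star\top}(\Zb_1^\top\Zb_1+n_1\lambda_0\Ib)^{-1}\theta^\star\le \frac{\|\theta^\star\|_\FF^2}{\gamma\,n_1\lambda_0}.
\]
Combining this with the prefactor $(n_1\lambda_0)^2/(n_1\lambda)$ yields
\[
\|\Pcr_{1,b}\|_\HH^2\lesssim \frac{n_1\lambda_0}{n_1\lambda\,\gamma}\|\theta^\star\|_\FF^2\lesssim\frac{1}{\gamma}\|\theta^\star\|_\FF^2,
\]
because $\lambda\gtrsim\lambda_0$ on the grid. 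I expect this careful bookkeeping of the $n_1$ normalizations to be the delicate point.

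Summing the three bounds and using $\gamma\le1$ gives
\[
\|\EE_{\bm\varepsilon_1}[\hat\eta_\lambda]-\eta^\star\|_\HH^2\lesssim\|\eta^\star\|_\HH^2+\frac{1}{\gamma}\|\theta^\star\|_\FF^2,
\]
which is the claim.
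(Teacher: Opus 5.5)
Your final argument is correct and is essentially the paper's proof. You use the same split into $\Bcr_1$, the propagated bias, and $\Ccr_1$, bound $\|(\Ab_1^\top\Ab_1+n_1\lambda\Ib)^{-1}\Ab_1^\top\|_{\operatorname{op}}\le (n_1\lambda)^{-1/2}$, and close the middle term with \cref{equation: good event model selection 3} and $\lambda\gtrsim\lambda_0$.

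The ``main obstacle'' you ran into came from misreading the normalization in \cref{equation: good event model selection 3}. The $1/n_1$ factors cancel, so it says $\Wb_1^\top\Wb_1\preceq\frac{c}{\gamma}(\Zb_1^\top\Zb_1+\log n_1\Ib)$. Your second factor is therefore $\lesssim\gamma^{-1/2}$, not $\sqrt{n_1/\gamma}$, and your original three-factor split already closes exactly as in the paper. Your ``fix'' uses this same inequality rather than a sharper factorization, so drop the $\sqrt{n_1/\gamma}$ detour when you write it up.
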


\begin{proof}
	Using the same argument as in Appendix~\ref{section: apdx decomposition}, we get
	\begin{align*}
		&\|\EE_{\bm{\varepsilon}_1}[\hat \eta_\lambda] - \eta^\star \|_{\HH} \\ &= \| (\Ab_1^\top \Ab_1 + n_1\lambda \Ib)^{-1} \Ab_1^\top \bb_1 +(\Ab_1^\top \Ab_1 + n_1\lambda \Ib)^{-1} \Ab_1^\top \Wb_1 (\EE_{\bm{\varepsilon}_1}[\hat\theta] -\theta^\star) -n_1\lambda(\Ab_1^\top \Ab_1 + n_1\lambda \Ib)^{-1} \eta^\star \|_{\HH} \\
		&\le \|(\Ab_1^\top \Ab_1 + n_1\lambda \Ib)^{-1} \Ab_1^\top \bb_1 \|_\HH + \|(\Ab_1^\top \Ab_1 + n_1\lambda \Ib)^{-1} \Ab_1^\top \Wb_1 (\EE_{\bm{\varepsilon}_1}[\hat\theta] -\theta^\star)  \|_\HH \\
		&\quad  + \| n_1\lambda(\Ab_1^\top \Ab_1 + n_1\lambda \Ib)^{-1} \eta^\star\|_\HH.
	\end{align*}
	We bound each term.
	First, we see that 
	\begin{align*}
		\|(\Ab_1^\top \Ab_1 + n_1\lambda \Ib)^{-1} \Ab_1^\top \bb_1 \|^2_\HH &\lesssim  \frac{1}{n_1\lambda} \|(\Ab_1^\top \Ab_1 + n_1\lambda \Ib)^{-1/2} \Ab_1^\top \bb_1 \|_{\HH}^2 \\
		&\lesssim  \frac{1}{n_1\lambda} \|(\Ab_1^\top \Ab_1 + n_1\lambda \Ib)^{-1/2} \Ab_1^\top \|_{\operatorname{op}}^2 \| \bb_1 \|_2^2 \\
		&\lesssim \frac{1}{n_1\lambda} \log n  \|\theta^\star \|_\FF^2 \quad (\text{by definition of \(\Fcr_{\operatorname{good}}\)})  \\
		&\lesssim  \|\theta^\star \|_\FF^2.
	\end{align*}
	Next, we see that
	\begin{align*}
		&\|(\Ab_1^\top \Ab_1 + n_1\lambda \Ib)^{-1} \Ab_1^\top \Wb_1 (\EE_{\bm{\varepsilon}_1}[\hat\theta] -\theta^\star)  \|_\HH \\
		&\lesssim    \|(\Ab_1^\top \Ab_1 + n_1\lambda \Ib)^{-1} \Ab_1^\top \Wb_1 (\Zb_1^\top  \Zb_1 + n_1\lambda_0 \Ib)^{-1} n_1\lambda_0 \theta^\star\|_\HH \\
		&\lesssim  \frac{1}{\sqrt{n_1 \lambda}}  \|(\Ab_1^\top \Ab_1 + n_1\lambda \Ib)^{-\frac{1}{2}} \Ab_1^\top \Wb_1 (\Zb_1^\top  \Zb_1 + n_1\lambda_0 \Ib)^{-1} n_1\lambda_0 \theta^\star\|_\HH \\
			&\lesssim    \frac{1}{\sqrt{n_1 \lambda}}\|\Wb_1 (\Zb_1^\top  \Zb_1 +  n_1\lambda_0 \Ib)^{-1} n_1\lambda_0 \theta^\star\|_2 \\ 
		&\lesssim    \frac{1}{\sqrt{n_1 \lambda}}\|\Wb_1 (\Zb_1^\top  \Zb_1 +  n_1\lambda_0 \Ib)^{-\frac{1}{2}} \|_{\operatorname{op}} \|(\Zb_1^\top  \Zb_1 +  n_1\lambda_0 \Ib)^{-\frac{1}{2}}  n_1\lambda_0 \theta^\star\|_\FF \\ 
			&\lesssim \frac{\sqrt{n_1\lambda_0}}{\sqrt{n_1\lambda}} \|\theta^\star \|_\FF \|\Wb_1 (\Zb_1^\top  \Zb_1 +  n_1\lambda_0 \Ib)^{-\frac{1}{2}} \|_{\operatorname{op}} \\
		&\lesssim \sqrt{\frac{1}{\gamma}} \|\theta^\star \|_\FF \quad \text{(by \(\lambda \gtrsim \lambda_0\) and \cref{equation: good event model selection 3})}.
	\end{align*}
	Lastly, 
	\begin{align*}
		\| n_1\lambda (\Ab_1^\top \Ab_1 + n_1\lambda \Ib)^{-1} \eta^\star\|_\HH^2 &\lesssim  \| \eta^\star\|_\HH^2.
	\end{align*}
	Combining the obtained bounds, we finally prove the lemma.
\end{proof}

\paragraph{Final Step: Applying Lemma~\ref{lemma; in sample to population second moment RKHS}.}
Fix \(\lambda\in\Lambda\) and define the random function
\[
f_\lambda(a):=\langle \phi(a),\hat\eta_\lambda-\eta^\star\rangle_\HH,\qquad a\in\cA.
\]
Then
\[
\|f_\lambda\|_n^2=\cE_{\HH}^{\operatorname{in}}(\hat\eta_\lambda),
\qquad
\|f_\lambda\|_{L^2(\cP_{\operatorname{ref}})}^2=\cE_{\HH}(\hat\eta_\lambda).
\]
We will apply Lemma~\ref{lemma; in sample to population second moment RKHS} to \(f_\lambda\).
For this final comparison step, we do not condition on the pseudo-test points \(\{\tilde a_i\}_{i=1}^n\), nor on the pseudo-test covariance event in \cref{equation: good event model selection 4}.
The preceding bounds for \(\hat\eta_\lambda\) use only the training split and the training query points, so after conditioning on those training-side design/query quantities, \(f_\lambda\) may still be random through the training noise but remains independent of the i.i.d. pseudo-test points \(\{\tilde a_i\}_{i=1}^n\).

First, by kernel boundedness and Lemma~\ref{lemma; psi norm model selection},
\begin{align*}
\|\langle\phi(\tilde a_1),\hat\eta_\lambda-\EE_{\bm\varepsilon_1}[\hat\eta_\lambda]\rangle_\HH\|_{\psi_2}
\lesssim
\sqrt{\xi}\,\|\hat\eta_\lambda-\EE_{\bm\varepsilon_1}[\hat\eta_\lambda]\|_{\psi_2}
\lesssim
\sqrt{\frac{\xi}{\gamma}}\,\sigma.
\end{align*}
Also, by kernel boundedness and Lemma~\ref{lemma; Hilbert norm model selection},
\begin{align*}
\big|\langle\phi(\tilde a_1),\EE_{\bm\varepsilon_1}[\hat\eta_\lambda]-\eta^\star\rangle_\HH\big|
\leq
\|\phi(\tilde a_1)\|_\HH\,\|\EE_{\bm\varepsilon_1}[\hat\eta_\lambda]-\eta^\star\|_\HH
\lesssim
\sqrt{\xi}\bigg(\|\eta^\star\|_\HH+\frac{1}{\sqrt{\gamma}}\|\theta^\star\|_\FF\bigg).
\end{align*}
Hence, for any \(\varepsilon\in(0,1/2]\), there exists a constant \(c_1>1\) such that
\begin{align*}
\PP\bigg(
|f_\lambda(\tilde a_1)|
>
c_1\sqrt{\xi}\bigg(
\frac{\sigma}{\sqrt{\gamma}}\sqrt{\log(1/\varepsilon)}
+\,
\|\eta^\star\|_\HH
+\,
\frac{1}{\sqrt{\gamma}}\|\theta^\star\|_\FF
\bigg)
\bigg)
\le \varepsilon.
\end{align*}

Next, note that the sub-Gaussian variable here is the centered term
\[
\langle\phi(\tilde a_1),\hat\eta_\lambda-\EE_{\bm\varepsilon_1}[\hat\eta_\lambda]\rangle_\HH,
\]
conditional on \(\tilde a_1\). For each fixed \(\tilde a_1\), the randomness comes only from \(\hat\eta_\lambda\) through \(\bm\varepsilon_1\), and the resulting \(\psi_2\) bound is uniform in \(\tilde a_1\) because \(\|\phi(\tilde a_1)\|_\HH^2=K_\HH(a,a)\le \xi\); a final application of the tower property then yields the same order for the unconditional fourth moment. Using this observation together with \((x+y)^4\le 8(x^4+y^4)\) and the standard fact that \(\EE|X|^4\lesssim \|X\|_{\psi_2}^4\) for sub-Gaussian \(X\), we obtain
\begin{align*}
\EE |f_\lambda(\tilde a_1)|^4
&\lesssim
\EE\big|\langle\phi(\tilde a_1),\hat\eta_\lambda-\EE_{\bm\varepsilon_1}[\hat\eta_\lambda]\rangle_\HH\big|^4
+
\big|\langle\phi(\tilde a_1),\EE_{\bm\varepsilon_1}[\hat\eta_\lambda]-\eta^\star\rangle_\HH\big|^4 \\
&\lesssim
\xi^2\bigg(
\frac{\sigma}{\sqrt{\gamma}}
+
\|\eta^\star\|_\HH
+
\frac{1}{\sqrt{\gamma}}\|\theta^\star\|_\FF
\bigg)^4.
\end{align*}
Therefore, for sufficiently large constants \(c_2,c_3>1\), define
\[
r_n:=c_2\sqrt{\xi}\bigg(
\frac{\sigma}{\sqrt{\gamma}}\sqrt{\log n}
+\,
\|\eta^\star\|_\HH
+\,
\frac{1}{\sqrt{\gamma}}\|\theta^\star\|_\FF
\bigg),
\qquad
U_n:=c_3\sqrt{\xi}\bigg(
\frac{\sigma}{\sqrt{\gamma}}
+\,
\|\eta^\star\|_\HH
+\,
\frac{1}{\sqrt{\gamma}}\|\theta^\star\|_\FF
\bigg).
\]
Then
\begin{align*}
\PP\big(|f_\lambda(\tilde a_1)|>r_n\big)\le n^{-16},
\qquad
\EE |f_\lambda(\tilde a_1)|^4 \le U_n^4.
\end{align*}

\begin{lemma}[Part 1 of Lemma D.5 from \citealt{wang2026pseudo}]\label{lemma; in sample to population second moment RKHS}
	Let \(\{z_i\}_{i=1}^n\) be i.i.d. samples in a space \(\cZ\). For any \(g:\cZ\to\RR\), define
	\[
	\|g\|_n:=\bigg(\frac{1}{n}\sum_{i=1}^n g^2(z_i)\bigg)^{1/2},
	\qquad
	\|g\|_{L^2}:=\sqrt{\EE[g^2(z_1)]}.
	\]
	Let \(f:\cZ\to\RR\) be a random function independent of \(\{z_i\}_{i=1}^n\). Suppose that
	\[
	\PP\big(|f(z_1)|>r\big)\le \varepsilon
	\qquad\text{and}\qquad
	\EE|f(z_1)|^4\le U^4
	\]
	for some deterministic \(r,U\ge 0\) and \(\varepsilon\in(0,1)\). Then, for any \(\delta\in(0,1)\),
	\[
	\PP\bigg(
	\big|\|f\|_n-\|f\|_{L^2}\big|
	\le
	r\sqrt{\frac{7\log(2/\delta)}{n}}
	+
	U\varepsilon^{1/4}
	\bigg)
	\ge 1-n\varepsilon-\delta.
	\]
\end{lemma}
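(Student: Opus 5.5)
This lemma concerns a single fixed function independent of the sample, and the plan is a truncation argument followed by Bernstein/Hoeffding concentration of the empirical second moment. Write $X_i := f^2(z_i)$. Conditioning on $f$ (legitimate because $f$ is independent of $\{z_i\}$), the $X_i$ are i.i.d. Define the truncated function $\bar f := f\,\mathbf 1\{|f|\le r\}$. On the event $\Omega := \{|f(z_i)|\le r \ \forall i\in[n]\}$ we have $\|f\|_n=\|\bar f\|_n$. A union bound gives $\PP(\Omega^c)\le n\varepsilon$, which accounts for the $n\varepsilon$ term in the failure probability.

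The argument then has two comparisons. First, compare $\|f\|_{L^2}$ with $\|\bar f\|_{L^2}$. By the triangle inequality in $L^2$, $\|f\|_{L^2}-\|\bar f\|_{L^2}\le \|f\mathbf 1\{|f|>r\}\|_{L^2}$. Cauchy--Schwarz gives
\[
\EE[f^2\mathbf 1\{|f|>r\}]\le (\EE f^4)^{1/2}\PP(|f|>r)^{1/2}\le U^2\varepsilon^{1/2},
\]
so this difference is at most $U\varepsilon^{1/4}$. Here the probabilities are conditional on $f$; since $r,U$ are deterministic, I will work with the averaged quantities and apply the tower property at the end. If conditional bounds are needed, I would instead state the lemma for deterministic $f$ and average over $f$ afterwards.

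Second, compare $\|\bar f\|_n$ with $\|\bar f\|_{L^2}$. The variables $Y_i=\bar f^2(z_i)\in[0,r^2]$ are bounded. Rather than Hoeffding on $Y_i$, which loses a factor, I would use a Bernstein-type inequality for nonnegative bounded variables, with variance at most $r^2\EE Y$. This yields, with probability $1-\delta$,
\[
\Big|\frac1n\sum_i Y_i-\EE Y\Big|\lesssim r\sqrt{\EE Y}\sqrt{\tfrac{\log(2/\delta)}{n}}+\tfrac{r^2\log(2/\delta)}{n}.
\]
Converting this to a bound on square roots uses $|\sqrt a-\sqrt b|\le |a-b|/(\sqrt a+\sqrt b)$ together with a case split on whether $\sqrt{\EE Y}\gtrsim r\sqrt{\log(2/\delta)/n}$. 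The result is $|\|\bar f\|_n-\|\bar f\|_{L^2}|\le r\sqrt{7\log(2/\delta)/n}$. Alternatively, a direct self-normalized concentration for norms, as in standard empirical-norm lemmas, gives the same bound with these constants.

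Combining the pieces on $\Omega$ intersected with the concentration event,
\[
|\|f\|_n-\|f\|_{L^2}|\le |\|\bar f\|_n-\|\bar f\|_{L^2}|+|\|\bar f\|_{L^2}-\|f\|_{L^2}|,
\]
which gives the claimed bound with probability at least $1-n\varepsilon-\delta$. The main obstacle is the square-root conversion with the explicit constant $7$. I would first try the one-sided Bernstein bound for the lower tail, $\frac1n\sum Y_i\ge \EE Y-\sqrt{2r^2\EE Y\log(1/\delta)/n}$, and the upper tail, $\le \EE Y+\sqrt{2r^2\EE Y\log/n}+2r^2\log/(3n)$. Completing the square in $\sqrt{\cdot}$ then gives each direction with a constant at most $\sqrt 7$.
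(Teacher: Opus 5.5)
For a deterministic $f$ your argument is complete, and the constant is not an obstacle. The union bound gives the $n\varepsilon$ term. Bernstein applied to $Y_i=\bar f^2(z_i)\in[0,r^2]$, with $\operatorname{Var}(Y)\le r^2\EE Y$ and level $\delta/2$ per tail, gives after completing the square $\bigl|\|\bar f\|_n-\|\bar f\|_{L^2}\bigr|\le c\,r\sqrt{\log(2/\delta)/n}$ with $c\le 1.8<\sqrt7$. Cauchy--Schwarz gives $0\le\|f\|_{L^2}-\|\bar f\|_{L^2}\le\|f\mathbf{1}\{|f|>r\}\|_{L^2}\le U\varepsilon^{1/4}$.

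The genuine gap is the passage to a random $f$, which you flag but do not resolve. The quantity being controlled is the norm of the realized function, $\|f\|_{L^2}^2=\EE[f^2(z_1)\mid f]$; in the paper's application it is $\cE_{\HH}(\hat\eta_\lambda)$. So the Cauchy--Schwarz step needs $\PP(|f(z_1)|>r\mid f)\le\varepsilon$ and $\EE[f^4(z_1)\mid f]\le U^4$ for almost every realization of $f$. Your primary plan (use the averaged hypotheses, then the tower property) cannot deliver this. Averaging only yields $\EE\|f\mathbf{1}\{|f|>r\}\|_{L^2}^2\le U^2\varepsilon^{1/2}$, which is a bound in expectation, not one holding with probability $1-n\varepsilon-\delta$.

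In fact the statement is false under the averaged reading. Take $z_i\sim\cU([0,1])$, $r=0$, $q=1/(100n)$, and let $f=\mathbf{1}_{[0,q]}$ with probability $1/2$ and $f\equiv0$ otherwise, independently of the sample. Then $\varepsilon=U^4=q/2$, so the claimed bound is $U\varepsilon^{1/4}=\sqrt{q/2}$. Yet with probability $\frac12(1-q)^n\ge0.495$ we have $f=\mathbf{1}_{[0,q]}$ and no $z_i\in[0,q]$. On that event $\|f\|_n=0$ while $\|f\|_{L^2}=\sqrt q>\sqrt{q/2}$, whereas the allowed failure probability is only $n\varepsilon+\delta=1/200+\delta$.

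Reading $\|f\|_{L^2}$ as the joint quantity $\sqrt{\EE f^2(z_1)}$ does not help either. Your Bernstein step concentrates around $\EE[\bar f^2(z_1)\mid f]$, not the joint mean. A random constant $f\in\{0,1\}$ with $r=1$, $\varepsilon=n^{-2}$ and $n$ large violates the claim with probability one.

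So the conditional reading is forced, and your fallback is the correct proof, not a contingency. State (or read) the tail and fourth-moment hypotheses as holding conditionally on $f$, almost surely. Apply your deterministic-$f$ argument to the realized function to get conditional failure probability at most $n\varepsilon+\delta$, then integrate over $f$. Say explicitly that this uses hypotheses strictly stronger than the joint ones.

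This matters downstream. The paper imports the lemma without proof, and when it applies it to $f_\lambda$ in the model-selection argument, the tail and fourth-moment bounds are derived jointly over the training noise and $\tilde a_1$. It is the conditional versions that would have to be verified there.
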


Applying Lemma~\ref{lemma; in sample to population second moment RKHS} with \(z_i=\tilde a_i\), \(f=f_\lambda\), \(\varepsilon=n^{-16}\), and \(\delta=n^{-13}\), we get for each fixed \(\lambda\in\Lambda\),
\begin{align*}
\big|
\sqrt{\cE_{\HH}^{\operatorname{in}}(\hat\eta_\lambda)}
-
\sqrt{\cE_{\HH}(\hat\eta_\lambda)}
\big|
\lesssim
r_n\sqrt{\frac{\log n}{n}}
+
U_n n^{-6}
\end{align*}
with probability at least \(1-n^{-15}-n^{-13}\).
Using \((x+y)^2\le 2x^2+2y^2\), this implies
\begin{align*}
\cE_{\HH}^{\operatorname{in}}(\hat\eta_\lambda)
&\lesssim
\cE_{\HH}(\hat\eta_\lambda)
+
\frac{r_n^2\log n}{n}
+
U_n^2n^{-12},\\
\cE_{\HH}(\hat\eta_\lambda)
&\lesssim
\cE_{\HH}^{\operatorname{in}}(\hat\eta_\lambda)
+
\frac{r_n^2\log n}{n}
+
U_n^2n^{-12}.
\end{align*}
By definition of \(r_n,U_n\), the last two terms are \(\lesssim \Ocr\) up to logarithmic factors, and \(U_n^2n^{-12}\) is negligible.
All bounds above depend on \(\lambda\) only through the condition \(\lambda\gtrsim \lambda_0\), hence they hold uniformly over \(\lambda\in\Lambda\).
Since \(|\Lambda|\le n\), a union bound over \(\lambda\in\Lambda\), together with the high-probability training-side events used in the preceding lemmas, yields overall probability at least \(1-2n^{-11}\), which proves Proposition~\ref{proposition; in sample to population MSE}.

\section{Proof of Theorem~\ref{theorem; lower bound} (Lower Bound)}\label{section: apdx proof lower bound}
\noindent
The lower bound is easy to obtain on simpler subclasses, for example by taking \(f^\star(x,a)\) to depend only on \(a\), or by considering settings where \(X\) and \(A\) are independent, in which case the problem reduces immediately to ordinary one-dimensional nonparametric regression. However, those subclasses do not reflect the main challenge of the paper, namely confounding under weak overlap. For completeness, we therefore present a genuinely confounded hard instance and show that the same minimax rate already persists there. 

\paragraph{Confounded Hard Instance.}
We prove the lower bound by restricting the minimax problem to a confounded hard subclass.
The case \(d=1\) is even simpler, since there are no covariates and the argument reduces directly to standard random-design regression.
We therefore present the confounded construction for \(d\ge 2\).

Let \(\cP_{\operatorname{ref}}=\cU([0,1])\), and write \(x=(x_1,\dots,x_{d-1})\in[0,1]^{d-1}\).
Define
\[
u(x):=2x_1-1,
\qquad
v(a):=2a-1,
\qquad
p_\gamma:=\min\{1,2\gamma\},
\qquad
c_\gamma:=1-\frac{\gamma}{p_\gamma}.
\]
Then \(p_\gamma\in[\gamma,2\gamma]\) and \(c_\gamma\in[0,1/2]\).
Choose a constant \(\tau>0\) sufficiently small so that, viewing \(u\) as a function on \([0,1]^d\) that does not depend on \(a\),
\[
\|\tau u\|_{H^\beta([0,1]^d)}\le \frac{1}{2}.
\]
Next choose \(c_0\in(0,1]\) sufficiently small, depending only on \((\beta,d,\tau)\), such that for every \(h\in H^\beta([0,1])\) with
\[
\|h\|_{H^\beta([0,1])}\le 1
\qquad \text{and}\qquad
h(0)=0,
\]
the lifted function \((x,a)\mapsto c_0 h(a)\) has \(H^\beta([0,1]^d)\)-norm at most \(1/2\).
This is possible because the map \(h\mapsto h(\cdot)\), viewed as a function on \([0,1]^{d-1}\times[0,1]\), is continuous from \(H^\beta([0,1])\) into \(H^\beta([0,1]^d)\).
For such \(h\), define
\[
f_h^\star(x,a):=c_0 h(a)+\tau u(x).
\]
Then \(f_h^\star\in H^\beta([0,1]^d)\) and \(\|f_h^\star\|_{H^\beta([0,1]^d)}\le 1\).

For each such \(h\), consider the data-generating process:
\begin{align*}
X &\sim \cU([0,1]^{d-1}), \\
A \mid X=x &\sim (1-p_\gamma)\delta_0 + p_\gamma q_x(a)\,da,
\qquad
q_x(a):=1+c_\gamma u(x)v(a), \\
Y &= c_0 h(A)+\tau u(X)+\varepsilon,
\qquad
\varepsilon\sim \cN(0,1),
\end{align*}
with \(\varepsilon\) independent of \((X,A)\).
Since \(c_\gamma\le 1/2\), we have \(q_x(a)\in[1-c_\gamma,1+c_\gamma]\subset[1/2,3/2]\), so \(q_x\) is a valid density on \([0,1]\).
Moreover,
\[
p_\gamma q_x(a)\ge p_\gamma(1-c_\gamma)=\gamma
\qquad\text{for all }(x,a)\in[0,1]^{d-1}\times[0,1].
\]
Hence
\[
\frac{d\cP_{\operatorname{ref}}}{d\cP_{A\mid X=x}}(a)\le \frac{1}{\gamma}
\qquad\text{for \(\cP_{\operatorname{ref}}\)-a.e.\ }a,
\]
so Relative Overlap of degree \(\gamma\) holds.

The TEF under this construction is
\[
h_h^\star(a):=\EE[f_h^\star(X,a)]=c_0 h(a),
\]
because \(\EE[u(X)]=0\).
The model is genuinely confounded whenever \(c_\gamma>0\): by Bayes' rule,
\[
\EE[u(X)\mid A=a]
=\frac{\EE[u(X)(1+c_\gamma u(X)v(a))]}{\EE[1+c_\gamma u(X)v(a)]}
=c_\gamma v(a)\EE[u(X)^2]
=\frac{c_\gamma}{3}(2a-1),
\]
since \(u(X)\sim \cU([-1,1])\) has mean \(0\) and second moment \(1/3\).
Therefore
\[
\EE[Y\mid A=a]
=c_0 h(a)+\frac{\tau c_\gamma}{3}(2a-1),
\]
which differs from the TEF unless \(c_\gamma=0\).

Now define the transformed response
\[
Z_i:=Y_i-\tau u(X_i)=c_0 h(A_i)+\varepsilon_i.
\]
The map \((x,a,y)\mapsto(x,a,z=y-\tau u(x))\) is one-to-one, so it is equivalent to work with the sample \((X_i,A_i,Z_i)_{i=1}^n\).
For one observation, the joint density under parameter \(h\) is
\[
q_h(x,a,z)=p_{X,A}(x,a)\,\varphi \bigl(z-c_0h(a)\bigr),
\]
where \(\varphi\) is the standard normal density and \(p_{X,A}\) does not depend on \(h\).
Hence, for the full sample,
\[
q_h^{(n)}(x_{1:n},a_{1:n},z_{1:n})
=
\underbrace{\prod_{i=1}^n p_{X,A}(x_i,a_i)}_{\text{free of }h}
\cdot
\underbrace{\prod_{i=1}^n \varphi \bigl(z_i-c_0h(a_i)\bigr)}_{\text{depends on the data only through }T^n},
\]
where
\[
T^n:=\bigl((A_i,Z_i)\bigr)_{i=1}^n.
\]
Therefore \(T^n\) is sufficient for \(h\) by the factorization theorem.

Let \(\hat h=\delta(X_{1:n},A_{1:n},Y_{1:n})\) be any estimator with values in \(L^2(\cP_{\operatorname{ref}})\), and define its Rao--Blackwellization
\[
\tilde h(T^n):=\EE_h[\hat h\mid T^n].
\]
Since the conditional law of the full data given \(T^n\) is parameter-free, \(\tilde h\) is again a valid estimator.
By conditional Jensen's inequality,
\[
\EE_h\left[\|\tilde h-h_h^\star\|_{L^2(\cP_{\operatorname{ref}})}^2\right]
\le
\EE_h\left[\|\hat h-h_h^\star\|_{L^2(\cP_{\operatorname{ref}})}^2\right].
\]
Thus it suffices to lower bound the reduced experiment \((A_i,Z_i)_{i=1}^n\).

Because \(\EE[u(X)]=0\), the perturbation averages out in the marginal law of \(A\), and we obtain
\[
A\sim (1-p_\gamma)\delta_0 + p_\gamma \cU([0,1]).
\]
Let
\[
N:=\sum_{i=1}^n \mathbf{1}\{A_i\neq 0\}\sim \operatorname{Binomial}(n,p_\gamma).
\]
Conditional on \(N=m\), the \(m\) informative observations are i.i.d.
\[
(U_j,c_0h(U_j)+\varepsilon_j),
\qquad
U_j\sim \cU([0,1]),
\]
while the remaining observations have \(A_i=0\) and \(Z_i=\varepsilon_i\), which are uninformative because \(h(0)=0\).
Hence, conditional on \(N=m\), the reduced experiment is exactly one-dimensional random-design nonparametric regression on the fixed-radius Sobolev class
\[
\mathcal{H}_0:=\{c_0 h:\ h\in H^\beta([0,1]),\ \|h\|_{H^\beta([0,1])}\le 1,\ h(0)=0\}.
\]
By the classical minimax lower bound for Sobolev regression (e.g.\ via Fano or Assouad; see \citealt{tsybakov2009nonparametric}), there exists a constant \(c_\beta>0\), depending only on \(\beta\) and \(c_0\), such that for every \(m\ge 1\),
\[
\inf_{\bar h}\sup_{r\in\mathcal{H}_0}
\EE_r\left[\|\bar h-r\|_{L^2(\cP_{\operatorname{ref}})}^2\mid N=m\right]
\ge c_\beta\, m^{-\frac{2\beta}{2\beta+1}}.
\]
The restriction \(r(0)=0\) does not change the rate, since the standard packing can be chosen with support away from \(0\).

Now define the event
\[
\mathcal{G}:=\{N\le 2p_\gamma n\}.
\]
If \(p_\gamma\ge 1/2\), then \(\mathcal{G}\) holds automatically because \(N\le n\le 2p_\gamma n\).
If \(p_\gamma<1/2\), a binomial upper-tail bound yields
\[
\PP(\mathcal{G})\ge 1-\exp(-c\,p_\gamma n)
\]
for a universal constant \(c>0\).
In particular, \(\PP(\mathcal{G})\ge 1/2\) whenever \(p_\gamma n\) is larger than a sufficiently large absolute constant.
Using the conditional lower bound on \(\mathcal{G}\), we obtain
\begin{align*}
\inf_{\bar h}\sup_{r\in\mathcal{H}_0}
\EE_r\left[\|\bar h-r\|_{L^2(\cP_{\operatorname{ref}})}^2\right]
&\ge
\PP(\mathcal{G})\cdot c_\beta (2p_\gamma n)^{-\frac{2\beta}{2\beta+1}} \\
&\gtrsim
(p_\gamma n)^{-\frac{2\beta}{2\beta+1}} \\
&\gtrsim
(\gamma n)^{-\frac{2\beta}{2\beta+1}},
\end{align*}
where the last step uses \(p_\gamma\le 2\gamma\).
Thus the desired lower bound holds for all sufficiently large \(n\), which is the regime relevant for the minimax-rate statement.
By sufficiency and Rao--Blackwell, the same lower bound holds for estimators based on the original sample \((X_i,A_i,Y_i)_{i=1}^n\).
This proves Theorem~\ref{theorem; lower bound}.

\section{Proof of Remark~\ref{remark:linfty_source_condition}}\label{section: apdx proof Linfty source}
We prove the claim on the same design event \(\Ecr_{\operatorname{good}}\) introduced in \Cref{subsection: apdx good event general theory}.
Recall from Lemma~\ref{lemma; good event} that
\[
\PP(\Ecr_{\operatorname{good}})\ge 1-5n^{-11}.
\]
Under the assumptions of Theorem~\ref{theorem; MISE}, we have \(\bSigma_{\operatorname{samp}}=\bSigma_{\operatorname{ref}}\). For brevity, write
\[
\bSigma:=\bSigma_{\operatorname{ref}}=\bSigma_{\operatorname{samp}}.
\]
Because \(\lambda\gtrsim \log n/n\), \cref{equation: good event 3-1} together with Remark~\ref{remark; useful observation} implies that on \(\Ecr_{\operatorname{good}}\),
\begin{equation}\label{equation: linfty empirical covariance comparison}
\frac{1}{c}\left(\frac{1}{n}\Ab^\top \Ab+\lambda\Ib\right)\preceq \bSigma+\lambda\Ib \preceq c\left(\frac{1}{n}\Ab^\top \Ab+\lambda\Ib\right)
\end{equation}
for some absolute constant \(c>1\).

Let \(\hat\eta_\lambda\) and \(\eta^\star\) denote the Hilbertian elements corresponding to \(\hat h_\lambda\) and \(h^\star\), respectively.
By the reproducing property and bounded kernel,
\[
\|\hat h_\lambda-h^\star\|_{L^\infty(\cA)}
\le
\sup_{a\in\cA}\|\phi(a)\|_{\HH}\,\|\hat\eta_\lambda-\eta^\star\|_{\HH}
\lesssim
\|\hat\eta_\lambda-\eta^\star\|_{\HH}.
\]
Thus it suffices to control the \(\HH\)-norm error.

Using the decomposition from Appendix~\ref{section: apdx decomposition},
\[
\hat\eta_\lambda-\eta^\star = \Bcr+\Pcr_v+\Pcr_b+\Ccr.
\]
We bound each term on \(\Ecr_{\operatorname{good}}\).

\paragraph{Step 1: Empirical-Centering Term \(\Bcr\).}
We have
\begin{align*}
\|\Bcr\|_{\HH}^2
&=
\big\|(\Ab^\top \Ab+n\lambda \Ib)^{-1}\Ab^\top \bb\big\|_{\HH}^2 \\
&=
\bb^\top \Ab(\Ab^\top \Ab+n\lambda \Ib)^{-2}\Ab^\top \bb \\
&\le
\frac{1}{n\lambda}\,
\bb^\top \Ab(\Ab^\top \Ab+n\lambda \Ib)^{-1}\Ab^\top \bb \\
&\le
\frac{1}{n\lambda}\|\bb\|_2^2.
\end{align*}
Under \(\Ecr_{\operatorname{good}}\), \(|b(a_j')|^2\lesssim \log n/n\) for all \(j\in[n]\), so \(\|\bb\|_2^2\lesssim \log n\).
Therefore,
\begin{equation}\label{equation: linfty bound B}
\|\Bcr\|_{\HH}^2 \lesssim \frac{\log n}{n\lambda}.
\end{equation}

\paragraph{Step 2: Propagated-Variance Term \(\Pcr_v\).}
Recall that
\[
\Pcr_v=
(\Ab^\top \Ab+n\lambda\Ib)^{-1}\Ab^\top \Wb
(\Zb^\top \Zb+n\lambda_0\Ib)^{-1}\Zb^\top \bm\varepsilon.
\]
Conditional on \(\Ecr_{\operatorname{good}}\), Hanson--Wright yields an event \(\Ecr_{v}\) satisfying
\[
\PP(\Ecr_v\mid \Ecr_{\operatorname{good}})\ge 1-n^{-11}
\]
on which
\begin{align*}
\|\Pcr_v\|_{\HH}^2
&\lesssim
\log n\,
\Tr\Big(
(\Ab^\top \Ab+n\lambda\Ib)^{-1}\Ab^\top \Wb
(\Zb^\top \Zb+n\lambda_0\Ib)^{-1}\Zb^\top \Zb
(\Zb^\top \Zb+n\lambda_0\Ib)^{-1}\Wb^\top \Ab
(\Ab^\top \Ab+n\lambda\Ib)^{-1}
\Big) \\
&\le
\log n\,
\Tr\Big(
(\Ab^\top \Ab+n\lambda\Ib)^{-1}\Ab^\top \Wb
(\Zb^\top \Zb+n\lambda_0\Ib)^{-1}\Wb^\top \Ab
(\Ab^\top \Ab+n\lambda\Ib)^{-1}
\Big).
\end{align*}
By Lemma~\ref{lemma; key inequality 1} and \(\lambda_0\asymp \log n/n\),
\[
\Wb(\Zb^\top \Zb+n\lambda_0\Ib)^{-1}\Wb^\top \lesssim \frac{1}{\gamma}\Ib_n
\qquad\text{on }\Ecr_{\operatorname{good}}.
\]
Hence, on \(\Ecr_{\operatorname{good}}\cap \Ecr_v\),
\begin{align*}
\|\Pcr_v\|_{\HH}^2
&\lesssim
\frac{\log n}{\gamma}\,
\Tr\Big(
\Ab(\Ab^\top \Ab+n\lambda\Ib)^{-2}\Ab^\top
\Big) \\
&=
\frac{\log n}{n\gamma}\,
\Tr\left[
\left(\frac{1}{n}\Ab^\top \Ab\right)
\left(\frac{1}{n}\Ab^\top \Ab+\lambda \Ib\right)^{-2}
\right].
\end{align*}
\paragraph{A Useful Empirical-Effective-Dimension Bound.}
Define
\[
\widehat{\Gamma}(\lambda)
:=
\Tr\left[
\left(\frac{1}{n}\Ab^\top \Ab\right)
\left(\frac{1}{n}\Ab^\top \Ab+\lambda \Ib\right)^{-1}
\right].
\]
Since \(t/(t+\lambda)^2 \le \lambda^{-1} t/(t+\lambda)\) for every \(t\ge 0\), the previous display implies
\[
\Tr\left[
\left(\frac{1}{n}\Ab^\top \Ab\right)
\left(\frac{1}{n}\Ab^\top \Ab+\lambda \Ib\right)^{-2}
\right]
\le
\frac{\widehat{\Gamma}(\lambda)}{\lambda}.
\]
Next, \cref{equation: linfty empirical covariance comparison} implies
\[
\left(\frac{1}{n}\Ab^\top \Ab+\lambda \Ib\right)^{-1}
\preceq
c(\bSigma+\lambda \Ib)^{-1}
\qquad\text{on }\Ecr_{\operatorname{good}}.
\]
Hence, on \(\Ecr_{\operatorname{good}}\),
\begin{align*}
\widehat{\Gamma}(\lambda)
&\lesssim
\Tr\left[
\left(\frac{1}{n}\Ab^\top \Ab\right)
(\bSigma+\lambda \Ib)^{-1}
\right] \\
&=
\frac{1}{n}\sum_{j=1}^n
\left\langle \phi(a'_j),(\bSigma+\lambda \Ib)^{-1}\phi(a'_j)\right\rangle_{\HH}.
\end{align*}
Define
\[
U_j:=
\left\langle \phi(a'_j),(\bSigma+\lambda \Ib)^{-1}\phi(a'_j)\right\rangle_{\HH},
\qquad j\in[n].
\]
Then \(U_j\ge 0\) and
\[
\EE[U_j]
=
\Tr\big[(\bSigma+\lambda \Ib)^{-1}\bSigma\big]
=
\Gamma(\lambda).
\]
Also, by boundedness of the kernel,
\[
0\le U_j\le \frac{\|\phi(a'_j)\|_{\HH}^2}{\lambda}\lesssim \frac{1}{\lambda}.
\]
Consequently,
\[
\operatorname{Var}(U_j)\le \EE[U_j^2]\lesssim \frac{\EE[U_j]}{\lambda}
=
\frac{\Gamma(\lambda)}{\lambda}.
\]
Therefore, by Bernstein's inequality, there exists an event \(\Ecr_\Gamma\) with
\[
\PP(\Ecr_\Gamma)\ge 1-n^{-11}
\]
such that on \(\Ecr_\Gamma\),
\begin{align*}
\frac{1}{n}\sum_{j=1}^n U_j
&\lesssim
\Gamma(\lambda)+\sqrt{\frac{\Gamma(\lambda)\log n}{n\lambda}}+\frac{\log n}{n\lambda} \\
&\lesssim
\Gamma(\lambda)+\frac{\log n}{n\lambda},
\end{align*}
where the second line uses \(2\sqrt{ab}\le a+b\).
Combining the displays above, on \(\Ecr_{\operatorname{good}}\cap \Ecr_v\cap \Ecr_\Gamma\),
\[
\Tr\left[
\left(\frac{1}{n}\Ab^\top \Ab\right)
\left(\frac{1}{n}\Ab^\top \Ab+\lambda \Ib\right)^{-2}
\right]
\lesssim
\frac{1}{\lambda}\left(\Gamma(\lambda)+\frac{\log n}{n\lambda}\right).
\]
Therefore,
\begin{equation}\label{equation: linfty bound Pv}
\|\Pcr_v\|_{\HH}^2
\lesssim
\frac{\log n}{n\gamma\lambda}
\left(\Gamma(\lambda)+\frac{\log n}{n\lambda}\right)
\qquad\text{on }\Ecr_{\operatorname{good}}\cap \Ecr_v\cap \Ecr_\Gamma.
\end{equation}

\paragraph{Step 3: Propagated-Bias Term \(\Pcr_b\).}
Recall that
\[
\Pcr_b=
-n\lambda_0(\Ab^\top \Ab+n\lambda\Ib)^{-1}\Ab^\top \Wb
(\Zb^\top \Zb+n\lambda_0\Ib)^{-1}\theta^\star.
\]
Using
\[
\|(\Ab^\top \Ab+n\lambda\Ib)^{-1}\Ab^\top\|_{\operatorname{op}}^2
\le \frac{1}{n\lambda},
\]
we obtain
\begin{align*}
\|\Pcr_b\|_{\HH}
&\le
\frac{n\lambda_0}{\sqrt{n\lambda}}\,
\|\Wb(\Zb^\top \Zb+n\lambda_0\Ib)^{-1}\theta^\star\|_2.
\end{align*}
Moreover, by Lemma~\ref{lemma; key inequality 1}
\begin{align*}
\|\Wb(\Zb^\top \Zb+n\lambda_0\Ib)^{-1}\theta^\star\|_2^2
&=
\left\langle \theta^\star,\,
(\Zb^\top \Zb+n\lambda_0\Ib)^{-1}\Wb^\top \Wb
(\Zb^\top \Zb+n\lambda_0\Ib)^{-1}\theta^\star
\right\rangle \\
&\lesssim
\frac{1}{\gamma}
\left\langle \theta^\star,\,
(\Zb^\top \Zb+n\lambda_0\Ib)^{-1}\theta^\star
\right\rangle \\
&\le
\frac{1}{\gamma n\lambda_0}\|\theta^\star\|_{\FF}^2.
\end{align*}
Combining the displays above and using \(\lambda_0\asymp \log n/n\), we get
\begin{equation}\label{equation: linfty bound Pb}
\|\Pcr_b\|_{\HH}^2
\lesssim
\frac{\log n}{n\gamma\lambda}\,\|\theta^\star\|_{\FF}^2
\lesssim
\frac{\log n}{n\gamma\lambda}\,\|f^\star\|_{\cF}^2
\qquad\text{on }\Ecr_{\operatorname{good}}.
\end{equation}

\paragraph{Step 4: Regularization-Bias Term \(\Ccr\).}
Assume \(h^\star\in \cH^s\) for some \(s\in(1,2]\).
By the standard spectral characterization of \(\cH^s\), this means
\[
\eta^\star=\bSigma^{(s-1)/2}g^\star
\qquad\text{for some }g^\star\in\HH.
\]
Using \cref{equation: linfty empirical covariance comparison},
\begin{align*}
\|\Ccr\|_{\HH}^2
&=
\lambda^2\big\|\left(\frac{1}{n}\Ab^\top \Ab+\lambda\Ib\right)^{-1}\eta^\star\big\|_{\HH}^2 \\
&\lesssim
\lambda \left\langle \eta^\star,\left(\frac{1}{n}\Ab^\top \Ab+\lambda\Ib\right)^{-1}\eta^\star\right\rangle \\
&\lesssim
\lambda \left\langle \eta^\star,(\bSigma+\lambda\Ib)^{-1}\eta^\star\right\rangle \\
&=
\lambda \left\langle \bSigma^{(s-1)/2}g^\star,(\bSigma+\lambda\Ib)^{-1}\bSigma^{(s-1)/2}g^\star\right\rangle.
\end{align*}
Let \(\{(\rho_j,e_j)\}_{j\ge 1}\) be the eigensystem of \(\bSigma\).
Expanding in this basis gives
\begin{align*}
\lambda \left\langle \bSigma^{(s-1)/2}g^\star,(\bSigma+\lambda\Ib)^{-1}\bSigma^{(s-1)/2}g^\star\right\rangle
&=
\lambda \sum_{j\ge 1}\frac{\rho_j^{s-1}}{\rho_j+\lambda}\langle g^\star,e_j\rangle_{\HH}^2 \\
&\le
\lambda^{s-1}\sum_{j\ge 1}\langle g^\star,e_j\rangle_{\HH}^2 \\
&=
\lambda^{s-1}\|g^\star\|_{\HH}^2,
\end{align*}
where we used \(\lambda \rho^{s-1}/(\rho+\lambda)\le \lambda^{s-1}\) for every \(\rho\ge 0\) and every \(s\in(1,2]\).
Therefore,
\begin{equation}\label{equation: linfty bound C}
\|\Ccr\|_{\HH}
\lesssim
\lambda^{\frac{s-1}{2}}.
\end{equation}

\paragraph{Step 5: Combine the Bounds.}
On the event \(\Ecr_{\operatorname{good}}\cap \Ecr_v\cap \Ecr_\Gamma\), \cref{equation: linfty bound B,equation: linfty bound Pv,equation: linfty bound Pb,equation: linfty bound C} imply
\begin{align}
\|\hat h_\lambda-h^\star\|_{L^\infty(\cA)}
\lesssim
\lambda^{\frac{s-1}{2}}
+ \sqrt{\frac{\log n}{n\lambda}}
+ \sqrt{\frac{\log n}{n\gamma\lambda}
\left(\Gamma(\lambda)+\frac{\log n}{n\lambda}\right)}
+ \frac{\|f^\star\|_\cF\sqrt{\log n}}{\sqrt{n\gamma\lambda}}.
\label{equation: linfty master bound}
\end{align}
Under polynomial eigendecay \(\rho_j\lesssim j^{-2\ell}\), we have
\[
\Gamma(\lambda)\lesssim \lambda^{-\frac{1}{2\ell}}.
\]
Since \(n_{\operatorname{eff}}=\gamma n\) and \(\gamma\le 1\),
\[
\sqrt{\frac{\log n}{n\lambda}}
\le
\sqrt{\frac{\log n}{n_{\operatorname{eff}}\lambda}}.
\]
Also, because \(\lambda\gtrsim \log n/n\), we have
\[
\frac{\log n}{n\lambda}\lesssim 1.
\]
Thus \cref{equation: linfty master bound} becomes
\[
\|\hat h_\lambda-h^\star\|_{L^\infty(\cA)}
\lesssim
\widetilde{\cO}\left(
\lambda^{\frac{s-1}{2}}
+
n_{\operatorname{eff}}^{-1/2}\lambda^{-\frac12-\frac{1}{4\ell}}
+
(1+\|f^\star\|_\cF)\,n_{\operatorname{eff}}^{-1/2}\lambda^{-\frac12}
\right).
\]
Now choose
\[
\lambda \asymp n_{\operatorname{eff}}^{-1/(s+1/(2\ell))}.
\]
With this choice,
\[
\lambda^{\frac{s-1}{2}}
\asymp
n_{\operatorname{eff}}^{-1/2}\lambda^{-\frac12-\frac{1}{4\ell}}
\asymp
n_{\operatorname{eff}}^{-\frac{s-1}{2(s+\frac{1}{2\ell})}}.
\]
The remaining term satisfies
\[
n_{\operatorname{eff}}^{-1/2}\lambda^{-1/2}
=
n_{\operatorname{eff}}^{-\frac{s-1+\frac{1}{2\ell}}{2(s+\frac{1}{2\ell})}},
\]
which is of strictly smaller order because \(\ell>1/2\).
Hence, on \(\Ecr_{\operatorname{good}}\cap \Ecr_v\cap \Ecr_\Gamma\),
\[
\|\hat h_\lambda-h^\star\|_{L^\infty(\cA)}
=
\widetilde{\cO}\left(
n_{\operatorname{eff}}^{-\frac{s-1}{2(s+\frac{1}{2\ell})}}
\right).
\]
Finally,
\[
\PP(\Ecr_{\operatorname{good}}\cap \Ecr_v\cap \Ecr_\Gamma)
\ge
1-5n^{-11}-n^{-11}-n^{-11}
\ge
1-n^{-10}
\]
for all \(n\) sufficiently large.
This proves Remark~\ref{remark:linfty_source_condition}.

\section{Additional Experimental Details}\label{section:app-more-experiments}

\subsection{Synthetic Data}

For \textit{our proposed method} (Algorithm~\ref{algorithm: model selection}), we partition the dataset into a training set $\cD_{\text{train}}$ and a validation set $\cD_{\text{val}}$, with sizes $n_1 = |\cD_{\text{train}}|$ and $n_2 = |\cD_{\text{val}}|$, respectively.
The regularization grid is defined as $\Lambda = \{ c \frac{2^{k-1}}{n_1} : k = 1,\dots,9 \}$ with $c = 0.1$, and the proxy regularizer is fixed at $\tilde{\lambda} = c/n_2$.
For pseudo-outcome construction, we set \(\cP_{\operatorname{samp}}\) to the uniform distribution on the treatment domain.
Regarding kernel specification, we employ a Mat\'ern kernel with smoothness parameter $\nu=1.5$ for the first-stage nuisance estimation ($f^\star$), and a smoother Mat\'ern kernel with $\nu=2.5$ for the second-stage target estimation ($h^\star$).

For the \textit{Plug-in KRR} baseline, we adopt the same data-splitting strategy and regularization grid $\Lambda$.
Candidate models $\{\hat f_\lambda\}_{\lambda\in\Lambda}$ are trained on $\cD_{\text{train}}$ using KRR with regularizer \(\lambda\), and the final regularizer is selected by minimizing the Mean Squared Error (MSE) on $\cD_{\text{val}}$.
We use a Mat\'ern kernel with smoothness parameter $\nu=1.5$.

The \textit{Direct Regression} baseline is similarly implemented using KRR with a Mat\'ern kernel ($\nu=1.5$).
We also apply the same hold-out validation strategy: candidate estimators are trained on $\cD_{\text{train}}$ using the regularization grid $\Lambda$, and the final regularizer is selected by MSE on $\cD_{\text{val}}$.

For all methods, we construct a length-scale grid by matching the kernel correlation at the median pairwise distance and cross-validating over this grid, following the widely used median-heuristic for kernel bandwidth selection \citep{garreau2017large,gretton2012kernel}.
Specifically, let $r_{\mathrm{med}}$ denote the median pairwise distance.
For each correlation level $\rho$ in a chosen set within $[0.15, 0.85]$, we define the candidate length-scale $\ell_\rho$ by solving $K(r_{\mathrm{med}};\ell_\rho)/K(0;\ell_\rho)=\rho$.
We then cross-validate over the resulting grid $\{\ell_\rho\}$ to select the final length-scale $\ell$.
In our proposed framework, the length-scale of the second-stage kernel $K_\cH$ is set equal to that of $K_\cF$.
In this experiment, the selected length-scale was $\ell = 3$ for our method and Plug-in KRR, and $\ell = 2$ for Direct Regression.

\subsection{Semi-real Data}

We employ a tensor-product Laplace kernel (Mat\'ern kernel with smoothness $\nu=0.5$) for nuisance estimation (for both our method and Plug-in KRR) and a Mat\'ern kernel with smoothness $\nu=1.5$ for the second-stage KRR.
For Direct Regression, we use a Laplace kernel.
The length-scale parameters for the nuisance kernel \(K_\cF\) were selected via cross-validation on a grid constructed around the median heuristic (similar to the synthetic experiments), resulting in \(\ell_x = 13\) (covariates) and \(\ell_a = 6000\) (treatment).
These length-scale values were used for both our method and Plug-in KRR.
For Direct Regression, we selected the length-scale using the same procedure, which yielded \(\ell = 3000\).

For our method, we set the second-stage kernel length-scale of \(K_\cH\) equal to the treatment length-scale of \(K_\cF\) (i.e., \(\ell = 6000\)).
We denote the sizes of the training and validation sets as $n_1 = |\cD_{\text{train}}|$ and $n_2 = |\cD_{\text{val}}|$, respectively.
Consistent with the synthetic experiments, we define the regularization grid as $\Lambda = \{ c \frac{2^{k-1}}{n_1} : k = 1,\dots,9 \}$ with $c = 0.1$, and set the proxy regularizer to $\tilde{\lambda} = c/n_2$.
The second-stage regression requires a sampling distribution \(\cP_{\operatorname{samp}}\); since approximately \(94\%\) of the observed treatments in the Job Corps data lie within \([0, 3000]\), we sample \(a\) uniformly from this range.

For Plug-in KRR (LOOCV), to ensure computational efficiency during LOOCV, we employ the Nystr\"om approximation with \(m = 700\) landmark points.

For our method and the Plug-in baseline, covariates are normalized to \([0,1]^d\).
For DML baselines, to be consistent with \citet{colangelo2026double}, categorical covariates are one-hot encoded, and the DML baselines use the same standardization procedures as in the original study.

\section{Technical Lemmas}

We first present key trace-class lemmas from \citet*{wang2026pseudo}.
\begin{lemma}[Lemma E.1 from \citealt{wang2026pseudo}]\label{lemma; quadratic martingale}
	Suppose that $\boldsymbol{x} \in \mathbb{R}^d$ is a zero-mean random vector with $\|\boldsymbol{x}\|_{\psi_2} \leq 1$. There exists a universal constant $C>0$ such that for any symmetric and positive semi-definite matrix $\boldsymbol{\Sigma} \in \mathbb{R}^{d \times d}$,
	\begin{align*}
		\mathbb{P}\left(\boldsymbol{x}^{\top} \boldsymbol{\bSigma} \boldsymbol{x} \leq C \operatorname{Tr}(\boldsymbol{\bSigma}) t\right) \geq 1-e^{-r(\boldsymbol{\bSigma}) t}, \quad \forall t \geq 1 .
	\end{align*}
	
	Here $r(\boldsymbol{\bSigma})=\operatorname{Tr}(\boldsymbol{\bSigma}) /\|\boldsymbol{\bSigma}\|_2$ is the effective rank of $\boldsymbol{\bSigma}$.
\end{lemma}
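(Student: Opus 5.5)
Here \(\bx\) is a zero-mean vector with \(\|\bx\|_{\psi_2}\le 1\), \(\bSigma\succeq 0\) is fixed, and we want the tail bound \(\bx^\top\bSigma\bx\le C\Tr(\bSigma)t\) with probability at least \(1-e^{-r(\bSigma)t}\). My plan is to diagonalize \(\bSigma\), bound the moment generating function of the quadratic form, and finish with a Chernoff bound tuned to the effective rank.

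First, write \(\bSigma=\sum_k s_k \bu_k\bu_k^\top\) with \(s_k\ge 0\), and put \(g_k:=\langle \bu_k,\bx\rangle\). Then \(\bx^\top\bSigma\bx=\sum_k s_k g_k^2\), and each \(g_k\) is sub-Gaussian with \(\|g_k\|_{\psi_2}\le 1\). The \(g_k\) need not be independent, so I will not decouple them. Instead I apply Jensen with weights \(w_k=s_k/\Tr(\bSigma)\):
\[
\exp\Bigl(\mu\sum_k s_k g_k^2\Bigr)\le \sum_k w_k \exp\bigl(\mu\Tr(\bSigma)g_k^2\bigr).
\]
The standard sub-Gaussian fact \(\EE e^{\nu g^2}\le 2\) for \(\nu\le c_0\) then gives
\[
\EE\exp\Bigl(\tfrac{c_0}{\Tr(\bSigma)}\,\bx^\top\bSigma\bx\Bigr)\le 2 .
\]
Markov's inequality now yields \(\PP\bigl(\bx^\top\bSigma\bx> C\Tr(\bSigma)t\bigr)\le 2e^{-c_0Ct}\). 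Since \(r(\bSigma)\ge 1\), this only gives a tail \(e^{-ct}\), not the required \(e^{-r(\bSigma)t}\).

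The main step is to gain the factor \(r(\bSigma)\) in the exponent. For this I would use the alternative scaling \(\mu=c_0/\|\bSigma\|_{\operatorname{op}}\) in an MGF bound that exploits many directions. The candidate bound is
\[
\log\EE\exp\bigl(\mu\,\bx^\top\bSigma\bx\bigr)\lesssim \mu\Tr(\bSigma)\qquad\text{for } \mu\le c_0/\|\bSigma\|_{\operatorname{op}} .
\]
Given this, Chernoff at \(\mu=c_0/\|\bSigma\|_{\operatorname{op}}\) gives
\[
\PP\bigl(\bx^\top\bSigma\bx> C\Tr(\bSigma)t\bigr)\le \exp\bigl(-c_0 r(\bSigma)(Ct-C')\bigr)\le e^{-r(\bSigma)t}
\]
for \(t\ge1\) and \(C\) large.

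To prove the MGF bound without independence, I would linearize the quadratic form with a Gaussian. Take \(\bg\sim\cN(0,\Ib)\) independent of \(\bx\), so that
\[
\EE e^{\mu\bx^\top\bSigma\bx}=\EE_{\bx}\EE_{\bg}\, e^{\sqrt{2\mu}\langle \bSigma^{1/2}\bg,\bx\rangle}.
\]
Swapping the order of expectation and using sub-Gaussianity of \(\bx\) in the direction \(\bSigma^{1/2}\bg\) gives the bound \(\EE_{\bg}e^{C\mu\bg^\top\bSigma\bg}\). This is a Gaussian chaos, which equals
\[
\prod_k(1-2C\mu s_k)^{-1/2}\le \exp\bigl(C'\mu\Tr(\bSigma)\bigr)
\]
provided \(2C\mu s_k\le 1/2\) for all \(k\).

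I expect this decoupling step to be the main obstacle. It needs the vector \(\psi_2\) norm to control every one-dimensional marginal with \(\EE e^{\langle \bv,\bx\rangle}\le e^{C\|\bv\|^2}\), which holds because \(\bx\) has zero mean. It also needs some care with constants so that \(\mu\asymp 1/\|\bSigma\|_{\operatorname{op}}\) keeps every factor in the chaos product finite.
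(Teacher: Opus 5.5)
Your argument is correct. The paper gives no proof of this lemma; it is imported from \citet{wang2026pseudo}, so there is no in-paper proof to compare against. Your route is the standard Hsu--Kakade--Zhang proof of the sub-Gaussian quadratic-form tail bound: linearize $e^{\mu \boldsymbol{x}^\top\bSigma\boldsymbol{x}}$ with an independent Gaussian, use the centered bound $\mathbb{E}e^{\langle \boldsymbol{v},\boldsymbol{x}\rangle}\le e^{C_1\|\boldsymbol{v}\|_2^2}$ conditionally on the Gaussian, evaluate the Gaussian chaos, and apply Chernoff at $\mu\asymp 1/\|\bSigma\|_{\operatorname{op}}$. You specialize it so the deviation level is $\operatorname{Tr}(\bSigma)t$ and the exponent is $r(\bSigma)t$.

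The step you flag as the main obstacle closes cleanly:
\begin{itemize}
\item Tonelli justifies swapping the expectations, since the integrand is nonnegative.
\item Take $\mu=1/(8C_1\|\bSigma\|_{\operatorname{op}})$. The chaos is then $\prod_k(1-4C_1\mu s_k)^{-1/2}\le e^{4C_1\mu\operatorname{Tr}(\bSigma)}$, using $-\tfrac12\log(1-x)\le x$ on $[0,1/2]$.
\item Chernoff then gives $\exp\bigl(-r(\bSigma)(Ct-4C_1)/(8C_1)\bigr)\le e^{-r(\bSigma)t}$ for all $t\ge 1$ as soon as $C\ge 12C_1$.
\end{itemize}

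The opening Jensen step is not needed and can be dropped. The degenerate case $\bSigma=0$, where $r(\bSigma)$ is undefined, is trivial.
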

\begin{lemma}[Corollary E.1 from \citealt{wang2026pseudo}]\label{lemma; trace class concentration bounded}
	Let $\left\{\boldsymbol{x}_i\right\}_{i=1}^n$ be i.i.d. random elements in a separable Hilbert space $\mathbb{H}$ with $\boldsymbol{\bSigma}=$ $\mathbb{E}\left(\boldsymbol{x}_i \otimes \boldsymbol{x}_i\right)$ being trace class. Define $\hat{\boldsymbol{\bSigma}}=\frac{1}{n} \sum_{i=1}^n \boldsymbol{x}_i \otimes \boldsymbol{x}_i$. Choose any constant $\gamma \in(0,1)$ and define an event $\mathcal{A}=\{(1-\gamma)(\boldsymbol{\bSigma}+\lambda \boldsymbol{I}) \preceq \hat{\boldsymbol{\bSigma}}+\lambda \boldsymbol{I} \preceq(1+\gamma)(\boldsymbol{\bSigma}+\lambda \boldsymbol{I})\}$.
	 If $\left\|\boldsymbol{x}_i\right\|_{\mathbb{H}} \leq \xi$ holds almost surely for some constant $\xi$, then there exists a constant $C \geq 1$ determined by $\gamma$ such that $\mathbb{P}(\mathcal{A}) \geq 1-\delta$ holds so long as $\delta \in(0,1 / 14]$ and $\lambda \geq \frac{C \xi \log (n / \delta)}{n}$.    
	Hence with probability at least \(1-\delta\), we have 
	\begin{align*}
		\frac{1}{C\xi}(\widehat \bSigma + \frac{\log(n/\delta)}{n} \Ib ) \preceq \bSigma + \frac{\log(n/\delta)}{n}\Ib \preceq C\xi(\widehat \bSigma + \frac{\log(n/\delta)}{n} \Ib ).
	\end{align*}
\end{lemma}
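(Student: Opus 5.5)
We would prove the two-sided comparison by whitening and applying an operator Bernstein inequality with intrinsic dimension. Fix $\lambda>0$ and set $\Jb:=(\bSigma+\lambda\Ib)^{-1/2}$. For each $i$ define the self-adjoint operator $\Tb_i:=\Jb(\boldsymbol{x}_i\otimes\boldsymbol{x}_i-\bSigma)\Jb$, so that $\EE\Tb_i=0$. The event $\mathcal{A}$ is exactly
\[
\Big\|\frac1n\sum_{i=1}^n\Tb_i\Big\|_{\operatorname{op}}\le\gamma ,
\]
because conjugating $\hat\bSigma+\lambda\Ib-(\bSigma+\lambda\Ib)$ by $\Jb$ gives $\frac1n\sum_i\Tb_i$. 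So it suffices to bound this operator norm by $\gamma$ with probability at least $1-\delta$.

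The inputs for Bernstein are as follows.

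\emph{Uniform bound.} Since $\|\Jb\boldsymbol{x}_i\|_{\mathbb{H}}^2\le\xi^2/\lambda$ and $\|\Jb\bSigma\Jb\|_{\operatorname{op}}\le1$, we have $\|\Tb_i\|_{\operatorname{op}}\lesssim 1+\xi^2/\lambda$.

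\emph{Variance.} Let $\Sb:=\Jb\bSigma\Jb$. Then
\[
\EE\Tb_i^2\preceq\EE\big[\|\Jb\boldsymbol{x}_i\|^2\,(\Jb\boldsymbol{x}_i\otimes\Jb\boldsymbol{x}_i)\big]\preceq\frac{\xi^2}{\lambda}\,\Sb .
\]
This is the same computation as in the proof of the lemma establishing \cref{equation: good event 2-2}.

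\emph{Intrinsic dimension.} We would invoke the dimension-free (intrinsic-dimension) Bernstein inequality of Minsker/Tropp for sums of independent self-adjoint operators on a separable Hilbert space. Its dimension factor is $\operatorname{Tr}(\Vb)/\|\Vb\|_{\operatorname{op}}$ for any dominating variance proxy $\Vb\succeq\sum_i\EE\Tb_i^2$. We bound $\operatorname{Tr}(\Sb)\le\operatorname{Tr}(\bSigma)/\lambda\le\xi^2/\lambda$, so the logarithmic factor is $\log(\xi^2/\lambda)\lesssim\log n$ in the relevant range of $\lambda$.

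Bernstein then gives, with probability at least $1-\delta$,
\[
\Big\|\frac1n\sum_i\Tb_i\Big\|_{\operatorname{op}}\lesssim\sqrt{\frac{\xi^2\log(n/\delta)}{n\lambda}}+\frac{\xi^2\log(n/\delta)}{n\lambda}.
\]
Choosing $C$ large enough, depending on $\gamma$, this is at most $\gamma$ once $\lambda\ge C\xi\log(n/\delta)/n$. Here $\xi$ is bounded by a constant, so the factor $\xi^2$ versus $\xi$ only changes $C$. The restriction $\delta\le1/14$ is what is needed to absorb the prefactor of the intrinsic-dimension bound into $\log(n/\delta)$.

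\emph{Main obstacle.} The delicate step is the intrinsic-dimension ratio when $\|\bSigma\|_{\operatorname{op}}$ is small compared with $\lambda$. In that regime $\|\Sb\|_{\operatorname{op}}$ can be tiny, so the ratio $\operatorname{Tr}(\Sb)/\|\Sb\|_{\operatorname{op}}$ is not controlled by $\xi^2/\lambda$ alone. We would handle this by enlarging the proxy to $\Vb:=\frac{n\xi^2}{\lambda}\big(\Sb+\tfrac12 e\otimes e\big)$ for a unit vector $e$. This keeps $\Vb\succeq\sum_i\EE\Tb_i^2$, forces $\|\Vb\|_{\operatorname{op}}\asymp n\xi^2/\lambda$, and gives $\operatorname{Tr}(\Vb)/\|\Vb\|_{\operatorname{op}}\lesssim1+\xi^2/\lambda$. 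The variance term then enters only through $\|\Vb\|_{\operatorname{op}}$, which is of the same order as before, so the bound above is unchanged.

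\emph{The ``Hence'' display.} Take $\gamma=1/2$ and $\lambda=\frac{C\xi\log(n/\delta)}{n}$. Then $\frac12(\bSigma+\lambda\Ib)\preceq\hat\bSigma+\lambda\Ib\preceq\frac32(\bSigma+\lambda\Ib)$. Replacing $\lambda$ by $\frac{\log(n/\delta)}{n}$ on both sides changes the shift by a factor of $C\xi$. Comparing shifts with $C\xi\ge1$, after enlarging $C$ if necessary, absorbs this factor and yields the stated sandwich with constant $C\xi$.
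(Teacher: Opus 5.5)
Your proof is correct and matches the paper's approach. The paper does not reprove this lemma; it imports Corollary E.1 of \citet{wang2026pseudo}. However, its own proof of \cref{equation: good event 2-2} runs exactly your template: whitening by $(\cdot+\lambda'\Ib)^{-1/2}$, a uniform bound of order $\xi/\lambda'$, a variance proxy dominated by a multiple of the whitened covariance, and a trace bound fed into an intrinsic-dimension Bernstein inequality, with $\delta\le 1/14$ absorbing that inequality's prefactor as you say.

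Two small refinements. First, you can take $\|\Tb_i\|_{\operatorname{op}}\le\xi^2/\lambda$ outright, because $\|\Jb\bSigma\Jb\|_{\operatorname{op}}\le\operatorname{Tr}(\bSigma)/\lambda\le\xi^2/\lambda$; keeping the additive $1$ would leave a $\log(n/\delta)/n$ term that your final display silently drops. Second, under the paper's actual convention $\sup_z K_\cF(z,z)\le\xi$, i.e.\ $\|\boldsymbol{x}_i\|_{\mathbb{H}}^2\le\xi$, the same computation gives the stated threshold $C\xi\log(n/\delta)/n$ with $C$ depending only on $\gamma$. Under the literal hypothesis $\|\boldsymbol{x}_i\|_{\mathbb{H}}\le\xi$, your caveat that $\xi$ is bounded is genuinely needed, as is $C\xi\ge 1$ for the final sandwich.
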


\end{document}